\def\theenumi{\arabic{enumi}}
\def\labelenumi{\textup{(\theenumi)}}
\newenvironment{proof}[1][{}]{%\novbskip%
  \begin{trivlist}\item[]\textit{Proof #1}\quad}%
  {\hfill\hspace*{\fill}~$\square$\end{trivlist}}
\newcommand{\noproof}{\hspace*{\fill}~$\square$}
\newtheorem{thm}{Theorem}%[section]
\newtheorem{prop}[thm]{Proposition}
\newtheorem{lem}[thm]{Lemma}
\newtheorem{cor}[thm]{Corollary}
\newtheorem{notation}[thm]{Notation}
\newtheorem{de}[thm]{Definition}
\newtheorem{remark}[thm]{Remark}
\newtheorem{hyp}[thm]{Hypothesis}
\theoremstyle{empty}
\newcommand{\Lemref}[1]{Lemma~\ref{#1}}
\newcommand{\Thmref}[1]{Theorem~\ref{#1}}
\newcommand{\Propref}[1]{Proposition~\ref{#1}}
\newcommand{\Secref}[1]{Section~\ref{#1}}
\newcommand{\Eqnref}[1]{Equation~\eqref{#1}}
\newcommand{\Defref}[1]{Definition~\ref{#1}}
\newcommand{\Hypref}[1]{Hypothesis~\ref{#1}}
\newcommand{\Appref}[1]{Appendix~\ref{#1}}
\newcommand{\Remref}[1]{Remark~\ref{#1}}
\newcommand{\defn}[1]{\emph{#1}}
\newcommand{\R}{\mathbb{R}}
\newcommand{\E}{\mathbb{E}}
\newcommand{\norm}[1]{\left|#1\right|}
\newcommand{\onorm}[1]{\left\|#1\right\|}
\newcommand{\abs}[2][foo]
{\csname#1l\endcsname\lvert{#2}\csname#1r\endcsname\rvert}
\newcommand{\transp}[1]{{#1}^\mathsf{T}}%\intercal}%\mathsf{T}}
\DeclareMathOperator{\sgn}{sgn}
\DeclareMathOperator{\Id}{Id}
\newcommand{\bracketprod}[2]{\left\langle#1,#2\right\rangle}
\newcommand{\bcdot}{\boldsymbol{\cdot}}
\newcommand{\dotprod}[2]{#1\bcdot#2}
\newcommand{\ip}[2]{\bracketprod{#1}{#2}} % use this
\DeclareMathOperator{\len}{len}
\newcommand{\man}{M}
\newcommand{\injradM}{\iota_{\man}}
\newcommand{\ballM}[2]{\spaceball{\man}{#1}{#2}}
\newcommand{\bdry}{\partial}
\newcommand{\close}[1]{\overline{#1}} % topological closure
\newcommand{\cl}[1]{\close{#1}}
\newcommand{\acplx}{\mathcal{A}}
\newcommand{\ccplx}{\mathcal{C}}
\newcommand{\carrier}[1]{\lvert#1\rvert}
\newcommand{\St}{\mathrm{St}}
\newcommand{\str}[1]{\underline{\St}(#1)}
\DeclareMathOperator{\interior}{int}
\newcommand{\intr}[1]{\interior(#1)}
\newcommand{\ostr}[1]{\mathrm{st}(#1)}
\DeclareMathOperator{\rinterior}{relint}
\newcommand{\rint}[1]{\rinterior(#1)}
\newcommand{\lmap}{\widehat{\Phi}}
\newcommand{\pts}{\mathcal{P}}
\newcommand{\thickbnd}{t_0}
\newcommand{\splxs}{\sigma}
\newcommand{\gsplxs}{\boldsymbol{\sigma}}
\newcommand{\splxsE}{\gsplxs_{\E}}
\newcommand{\gsplxt}{\boldsymbol{\tau}}
\newcommand{\gsplxm}{\boldsymbol{\mu}}
\newcommand{\spaceball}[3]{B_{#1}(#2,#3)} % open ball on specified space
\newcommand{\cspaceball}[3]{\close{B}_{#1}(#2,#3)} % closed ball " " "
\newcommand{\rem}{\R^m}
\newcommand{\ballEm}[2]{\spaceball{\rem}{#1}{#2}} % open ball in R^m
\newcommand{\ballEN}[2]{\spaceball{\amb}{#1}{#2}} % open ball in R^N
\newcommand{\cballEN}[2]{\cspaceball{\amb}{#1}{#2}} % closed ball in R^N
\newcommand{\gdist}{d} % generic metric (distance function)
\newcommand{\gdistG}[1]{\gdist_{#1}} % metric on space to be specified
\newcommand{\distG}[3]{\gdist_{#1}(#2,#3)}
\newcommand{\distEm}[2]{\distG{\rem}{#1}{#2}}
\newcommand{\amb}{\R^N}
\newcommand{\distEN}[2]{\distG{\amb}{#1}{#2}}
\newcommand{\gdtoM}{\delta_{\man}} 
\newcommand{\dtoM}[1]{\gdtoM(#1)}  %{\distEN{#1}{\man}}
\newcommand{\distM}[2]{\distG{\man}{#1}{#2}}
\newcommand{\tccplx}{\widetilde{\ccplx}}
\newcommand{\p}{\hat{p}}
\newcommand{\strp}{\str{\p}}
\newcommand{\x}{\hat{x}}
\newcommand{\y}{\hat{y}}
\newcommand{\tV}{\widetilde{V}}
\newcommand{\npwr}[1]{\wp(#1)} %nonempty subsets
\newcommand{\conset}{\boldsymbol{\omega}}
\newcommand{\proj}[1]{\mathrm{pr}_{\!#1}}   %{\Pi_{#1}}
\newcommand{\projp}{\proj{T_p\man}}
\newcommand{\projM}{\proj{\man}}
\def\aff{\mathrm{aff}}
\def\ax{\mathop{\mathrm{ax}}\nolimits}
\def\claxM{\overline{\ax}(\man)}
\def\px{\check{x}}
\def\py{\check{y}}
\def\pz{\check{z}}
\def\rch{\mathop{\mathrm{rch}}\nolimits}
\def\rchM{\rch(M)}
\def\UM{U_{\!\man}}
\def\lfs{\mathop{\mathrm{lfs}}\nolimits}
\def\NM{N\!\man}
\def\tfrac#1#2{{\textstyle\frac#1#2}}
\def\rbnd{R_\mathrm{rch}}
\def\tx{\widetilde{x}}
\def\ty{\widetilde{y}}
\def\tz{\widetilde{z}}
\def\Lbnd{L_0}
\def\tbnd{t_0}
\def\sbnd{s_0}
\def\chnk{\frac{\Lbnd^2}{\tbnd^2\rbnd^2}}
\def\chink#1{\frac{#1\Lbnd^2}{\tbnd^2\rbnd^2}}
\def\htbnd{\hat{t}_0}
\def\hLbnd{\hat{L}_0}
\def\hsbnd{\hat{s}_0}
\def\sepbnd{\mu_0}
\def\cmpman{\man_{c}}
\def\hy{\widehat{y}}
\def\hpy{\widehat{\py}}
\def\id{\mathop{\mathrm{id}}\nolimits}
\def\cbp/{cbp}
\def\unpfrac#1#2{#1/(#2)}
\author{Jean-Daniel Boissonnat, Ramsay Dyer, Arijit
  Ghosh\thanks{Arijit Ghosh is supported by Ramanujan Fellowship
    (No. SB/S2/RJN-064/2015). Part of this work was done when Arijit
    Ghosh was a Researcher at Max-Planck-Institute for Informatics,
    Germany supported by the IndoGerman Max Planck Center
    for Computer Science (IMPECS).},\\
  and Mathijs Wintraecken}
\title{Local criteria for triangulation of manifolds\thanks{This work
    has been partially funded by the European Research Council under
    the European Union's ERC Grant Agreement number 339025 GUDHI
    (Algorithmic Foundations of Geometric Understanding in Higher
    Dimensions).}}
\begin{document}

\maketitle

\begin{abstract}
  We present criteria for establishing a triangulation of a
  manifold. Given a manifold $M$, a simplicial complex $\mathcal{A}$,
  and a map $H$ from the underlying space of $\mathcal{A}$ to $M$, our
  criteria are presented in local coordinate charts for $M$, and
  ensure that $H$ is a homeomorphism. These criteria do not require a
  differentiable structure, or even an explicit metric on $M$. No
  Delaunay property of $\mathcal{A}$ is assumed.  The result provides
  a triangulation guarantee for algorithms that construct a simplicial
  complex by working in local coordinate patches.  Because the
  criteria are easily verified in such a setting, they are expected to
  be of general use.
\end{abstract}

\clearpage

\tableofcontents

\section{Introduction}

A \defn{triangulation} of a manifold $\man$ is a homeomorphism
$H\colon \carrier{\acplx} \to \man$, where $\acplx$ is a simplicial
complex, and $\carrier{\acplx}$ is its 
%\defn{carrier}, i.e., the
underlying topological space. If such a homeomorphism exists, we say
that $\acplx$ \defn{triangulates} $\man$.

The purpose of this paper is to present criteria which ensure that a
candidate map $H$ is indeed a homeomorphism.  This work is motivated
by earlier investigations into the problem of algorithmically
constructing a complex that triangulates a given manifold
\cite{boissonnat2014tancplx,boissonnat2017manmesh}. It complements and
is closely related to recent work that investigates a particular
natural example of such a map \cite{dyer2015riemsplx}.

In the motivating algorithmic  setting, we are given a compact manifold
$\man$, and a manifold simplicial complex $\acplx$ is constructed by
working locally in Euclidean coordinate charts.  Here we lay out criteria,
 based on local properties that arise naturally in the
construction of $\acplx$, that guarantee that $H$ is a homeomorphism.
These criteria, which are summarized in
\Thmref{thm:metric.triang}, are based on metric
properties of $H$ within ``compatible'' coordinate charts
(\Defref{def:compatible.atlases}). The Euclidean metric in the local
coordinate chart is central to the analysis, but no explicit metric on
$\carrier{\acplx}$ or $\man$ is involved, and no explicit assumption
of differentiability is required of $H$ or $\man$. However, our only
examples that meet the required local criteria are in the
differentiable setting. We do not know whether or not our criteria for
homeomorphism implicitly imply that $\man$ admits a differentiable
structure. They do imply that $\acplx$ is piecewise linear (admits an
atlas with piecewise linear transition functions).

\subsection*{Relation to other work}

The first demonstrations that differentiable manifolds \emph{can}
always be triangulated were constructive.  Cairns~\cite{cairns1934}
used coordinate charts to cover the manifold with embeddings of
patches of Euclidean triangulations. He showed that if the complexes
were sufficiently refined the embedding maps could be perturbed such
that they remain embeddings and the images of simplices coincide where
patches overlap. A global homeomorphic complex is obtained by
identifying simplices with the same image. The technique was later
refined and extended \cite{whitehead1940,munkres1968}, but it is not
easily adapted to provide triangulation guarantees for complexes
constructed by other algorithms.

An alternative approach was developed by Whitney~\cite{whitney1957}
using his result that a manifold can be embedded into Euclidean
space. A complex is constructed via a process involving the
intersection of the manifold with a fine Cartesian grid in the ambient
space, and it is shown that the \defn{closest-point projection map},
which takes a point in the complex to its unique closest point in the
manifold, is a homeomorphism. The argument is entwined with this
specific construction, and is not easily adapted to other settings.

More recently, Edelsbrunner and Shah~\cite{edelsbrunner1997rdt}
defined the restricted Delaunay complex of a subset $M$ of Euclidean
space as the nerve of the Voronoi diagram on $M$ when the ambient
Euclidean metric is used. They showed that if $M$ is a compact
manifold, then the restricted Delaunay complex is homeomorphic to $M$
when the Voronoi diagram satisfies the \defn{closed ball property
  (\cbp/)}: Voronoi faces are closed topological balls of the
appropriate dimension.

Using the \cbp/, Amenta and Bern~\cite{amenta1999vf}
demonstrated a specific sampling density that is sufficient to
guarantee that the restricted Delaunay complex triangulates the
surface. However, since the complex constructed by their
reconstruction algorithm cannot be guaranteed to be exactly the
restricted Delaunay complex, a new argument establishing homeomorphism
was developed, together with a simplified version of the algorithm
\cite{amenta2002}.

Although it was established in the context of restricted Delaunay
triangulations, the \cbp/ is an elegant topological result that
applies in more general contexts. For example, it has been used to
establish conditions for intrinsic Delaunay triangulations of surfaces
\cite{dyer2008sgp}, and Cheng et al.~\cite{cheng2005} have indicated
how it can be applied for establishing weighted restricted Delaunay
triangulations of smooth submanifolds of arbitrary dimension in
Euclidean space.

However, the \cbp/ is only applicable to Delaunay-like complexes that
can be realized as the nerve of some kind of Voronoi diagram on the
manifold. Thus, for example, it does not necessarily apply to the
tangential Delaunay complex constructed by Boissonnat and
Ghosh~\cite{boissonnat2014tancplx}. Secondly, even when a
Delaunay-like complex is being constructed, it can be 
%algorithmically
difficult to directly verify the properties of the associated Voronoi
structure; sampling criteria and conditions on the complex under
construction are desired, but may not be easy to obtain from the \cbp/. 
A third deficiency of the \cbp/ is that, although it can
establish that a complex $\acplx$ triangulates the manifold $\man$, it
does not provide a specific triangulation
$H:\carrier{\acplx} \to \man$. Such a correspondence allows us to
compare \emph{geometric} properties of $\carrier{\acplx}$ and~$\man$.

In \cite{boissonnat2014tancplx} Whitney's argument was adapted to
demonstrate that the closest-point projection maps the tangential
Delaunay complex homeomorphically onto the original manifold. The
argument is intricate, and like Whitney's, is tailored to the specific
complex under consideration. In contrast, the result of
\cite{amenta2002}, especially in the formulation presented by
Boissonnat and Oudot~\cite{boissonnat2005gm}, guarantees a
triangulation of a surface by any complex which satisfies a few easily
verifiable properties. However, the argument relies heavily on the
the codimension being 1.

If a set of vertices is contained within a sufficiently small
neighbourhood on a Riemannian manifold, barycentric coordinates can be
defined. So there is a natural map from a Euclidean simplex of the
appropriate dimension to the manifold, assuming a correspondence
between the vertices of the simplex and those on the manifold.  Thus
when a complex $\acplx$ is appropriately defined with vertices on a
Riemannian manifold $\man$, there is a natural \defn{barycentric
  coordinate map} $\carrier{\acplx} \to \man$.  In
\cite{dyer2015riemsplx}, conditions are presented which guarantee that
this map is a triangulation. Although this map is widely applicable,
the intrinsic criteria can be inconvenient, for example, in the
setting of Euclidean submanifold reconstruction, and furthermore the
closest-point projection map may be preferred for
triangulation in that setting. 

The argument in \cite{dyer2015riemsplx} is based on a general result
\cite[Proposition~16]{dyer2015riemsplx} for establishing that a given
map is a triangulation of a differentiable manifold. However, the
criteria include a bound on the differential of the map, which is not
easy to obtain. The analysis required to show that the closest-point
projection map meets this bound is formidable, and this motivated the
current alternate approach.  We have relaxed this constraint to a much
more easily verifiable bound on the metric distortion of the map when
viewed within a coordinate chart.

The sampling criteria for submanifolds imposed by our main result
applied to the closest-point projection map
(\Thmref{thm:triang.subman}) are the most relaxed that we are aware
of.  The result could be applied to improve the sampling guarantees
of previous works, e.g., \cite{cheng2005,boissonnat2014tancplx}.

In outline, the argument we develop here is the same as that of
\cite{amenta2002}, but extends the result to apply to abstract
manifolds of arbitrary dimension and submanifolds of $\R^N$ of
arbitrary codimension. We first show that the map $H$ is a local
homeomorphism, and thus a covering map, provided certain criteria are
met. Then injectivity is ensured when we can demonstrate that each
component of $\man$ contains a point $y$ such that $H^{-1}(y)$ is a
single point. A core technical lemma from Whitney~\cite[Appendix~II
Lemma~15a]{whitney1957} still lies at the heart of our argument.

\subsection*{Outline}
The demonstration is developed abstractly in \Secref{sec:homeo.criteria},
without explicitly defining the map $H$. We assume that it has already
been established that the restriction of $H$ to any Euclidean simplex
in $\carrier{\acplx}$ is an embedding. This is a nontrivial step that
needs to be resolved from the specific properties of a particular
choice of $H$. The criteria for local homeomorphism apply in a common
coordinate chart (for $\carrier{\acplx}$ and $\man$), and relate the
size and quality of the simplices with the metric distortion of $H$,
viewed in the coordinate domain. The requirement that leads to
injectivity is also expressed in a local coordinate chart; it
essentially demands that the images of vertices behave in a natural
and expected way.

In \Secref{sec:subman}, \Thmref{thm:metric.triang} is applied to the
specific case where $\man \subset \R^N$, and $H$ is the projection to
the closest-point on $\man$. This is a refinement of the argument
presented in \cite{boissonnat2014tancplx}, also correcting an error.

In \Appref{sec:exploit.strong.diff.bnds} the argument presented in
\cite{dyer2015riemsplx} 
%for the barycentric coordinate map 
is reviewed. This is very similar to the argument presented in
\Secref{sec:homeo.criteria}, but it exploits a detailed analysis of
the differential of the map. Although it depends on differentiability,
and more analysis, an advantage of this approach is that the size of
the simplices is restricted by a bound that is linear with respect to
the simplex quality, instead of quadratic.
% no bound on
% the relative diameters of the simplices is required (simplices can be
% arbitrarily small relative to the upper bound imposed on the
% diameters).  
The main reason for reviewing this argument is that as a result of an
error in \cite{dyer2015riemsplx}, the criteria in the announced
theorems there do not in fact ensure injectivity of the map. The
appendix illuminates the problem and corrected statements of those
results are presented in \Secref{sec:bary.map.erratum}.

%%% Local Variables:
%%% mode: latex
%%% TeX-master: "homeo"
%%% End:

\section{The homeomorphism criteria}
\label{sec:homeo.criteria}

We assume that $\acplx$ and $\man$ are both compact manifolds of
dimension $m$, without boundary, and we have a map
$H\colon \carrier{\acplx} \to \man$ that we wish to demonstrate is a
homeomorphism.  We first show that $H$ is a covering map, i.e., every
$y\in\man$ admits an open neighbourhood $U_y$ such that $H^{-1}(y)$ is
a disjoint union of open sets each of which is mapped homeomorphically
onto $U_y$ by $H$.  In our setting it is sufficient to establish that
$H$ is a local homeomorphism whose image touches all components of
$\man$: Brouwer's invariance of domain then ensures that $H$ is
surjective, and, since $\carrier{\acplx}$ is compact, has the covering
map property.

\begin{notation}[simplices and stars]
  \label{not:splx.star}
  In this section, a simplex $\gsplxs$ will always be a \defn{full
    simplex}: a closed Euclidean simplex, specified by a set of
  vertices together with all the points with nonnegative barycentric
  coordinates.  The \defn{relative interior} of $\gsplxs$ is the
  topological interior of $\gsplxs$ considered as a subspace of its
  affine hull, and is denoted by $\rint{\gsplxs}$.  If $\gsplxs$ is a
  simplex of $\acplx$, the subcomplex consisting of all simplices that
  have $\gsplxs$ as a face, together with the faces of these
  simplices, is called the \defn{star} of $\gsplxs$, denoted
  by~$\str{\gsplxs}$; the star of a vertex $p$ is $\str{p}$.

  We also sometimes use the \defn{open star} of a simplex $\gsplxs \in
  \ccplx$. This is the union of the relative interiors of the
  simplices in $\ccplx$ that have $\gsplxs$ as a face:
  $\ostr{\gsplxs} = \bigcup_{\gsplxt \supseteq
    \gsplxs}\rint{\gsplxt}$.  It is an open set in $\carrier{\ccplx}$,
  and it is open in $\R^m$ if $\gsplxs \not\in \bdry{\ccplx}$.
\end{notation}

\begin{notation}[topology]
  If $A \subseteq \R^m$, then the topological closure, interior, and
  boundary of $A$ are denoted respectively by $\cl{A}$,\, $\intr{A}$,
  and  $\bdry{A} = \cl{A} \setminus \intr{A}$. We denote by
  $\ballEm{c}{r}$ the open ball in $\R^m$ of radius $r$ and centre $c$.
\end{notation}

\begin{notation}[linear algebra]
  The Euclidean norm of $v\in\R^m$ is denoted by $\norm{v}$, and
  $\onorm{A}=\sup_{\norm{x}=1}\norm{Ax}$ denotes the operator norm of
  the linear operator~$A$.
\end{notation}

We will work in local coordinate charts. 
% Given a map $G\colon \carrier{\ccplx} \to \rem$, where $\ccplx$
% is a simplicial complex,
% the \defn{secant map} of $G$ is the piecewise linear
% approximation to $G$ that agrees with $G$ on the vertices of $\ccplx$,
% and maps $x \in \gsplxs \in \ccplx$ to the point with the same
% barycentric coordinates with respect to the images of the vertices.
To any given map $G\colon \carrier{\ccplx} \to \rem$, where $\ccplx$
is a simplicial complex, we associate a piecewise linear map $\widehat{G}$
that agrees with $G$ on the vertices of $\ccplx$, and maps
$x \in \gsplxs \in \ccplx$ to the point with the same barycentric
coordinates with respect to the images of the vertices. The map
$\widehat{G}$ is called the \defn{secant map} of $G$ with respect to
$\ccplx$.

The following definition provides the framework within which we will
work (see diagram~\eqref{diagram}).

\begin{de}[compatible atlases]
  \label{def:compatible.atlases}
  We say that $\carrier{\acplx}$ and $\man$ have \defn{compatible
    atlases} for $H\colon \carrier{\acplx} \to \man$ if: 
  \begin{enumerate}
  \item There is a coordinate atlas $\{(U_p, \phi_p)\}_{p\in\pts}$ for
    $\man$, where the index set $\pts$ is the set of vertices of
    $\acplx$ and each set $U_p$ is connected.
 % together with a compatible piecewise linear
 %  coordinate atlas for $\acplx$ in the following sense: 
  \item For each $p\in \pts$, there is a subcomplex $\tccplx_p$ of
    $\acplx$ that contains $\str{p}$ and 
    $H(\carrier{\tccplx_p}) \subset U_p$. Also, the secant map of
    $\Phi_p := \phi_p \circ H|_{\carrier{\tccplx_p}}$ defines a
    piecewise linear embedding of $\carrier{\tccplx_p}$ into
    $\rem$. We denote this secant map by $\lmap_p$. By definition,
    $\lmap_p$ preserves the barycentric coordinates within each
    simplex, and thus the collection
    $\{(\tccplx_p,\lmap_p)\}_{p\in \pts}$ provides a piecewise linear
    atlas for $\acplx$.
\end{enumerate}
\end{de}

\begin{remark}
  \label{rem:touch.all.components}
  The requirement in \Defref{def:compatible.atlases} that the local
  patches $U_p$ be connected implies that on each connected component
  $M'$ of $M$, there is a $p\in \pts$ such that $H(p)\in M'$.
\end{remark}

We let $\ccplx_p = \lmap_p(\tccplx_p)$, and we will work within
the compatible local coordinate charts.
Thus we are studying a map of the form
\begin{equation*}
  F_p \colon \carrier{\ccplx_p} \subset \rem \to \rem,
\end{equation*}
where $\ccplx_p$ is an $m$-manifold complex with boundary embedded in
$\rem$, and 
\begin{equation}
\label{eq:def.Fp}
F_p = \phi_p \circ H \circ \lmap_p^{-1},
\end{equation}
as shown in the following diagram:
% \begin{equation*} %good, modern syntax:
% \begin{tikzcd}
% \carrier{\acplx} \ar[rrr, "H"]&{} &{} &\man \\[-10pt]
%   &\carrier{\smash{\tccplx_p}} \arrow[d, "\lmap_p"']
% \ar[ul, hook'] \ar[r, "H|_{|\tccplx_p|}"] 
%     &U_p \arrow[d, "\phi_p"] \ar[ur, hook] &{}\\
%     &\carrier{\ccplx_p} \ar[ld, hook']
%     \arrow[r, "\overbrace{\scriptstyle \phi_p \circ H
%       \circ \lmap_p^{-1}}^{F_p}"']
%     &\phi_p(U_p) \ar[rd, hook]&{}\\[-5pt]
% \rem &{} &{} &\rem
% \end{tikzcd}
% \end{equation*}
\begin{equation} %old syntax for J-D:
\label{diagram}
\begin{tikzcd}
\carrier{\acplx} \ar{rrr}{H} \ar[hookleftarrow]{dr} 
& & &\man \\[-10pt]
  &\carrier{\smash{\tccplx_p}} \arrow{d}[swap]{\lmap_p}
 \ar{r}{H|_{|\tccplx_p|}}
    & U_p \arrow{d}{\phi_p} \ar[hook]{ur} &\\
    &\carrier{\ccplx_p}
    \arrow{r}[swap]{\overbrace{\scriptstyle \phi_p \circ H
      \circ \lmap_p^{-1}}^{F_p}}
    & \phi_p(U_p) \ar[hook]{rd}&\\[-5pt]
\rem \ar[hookleftarrow]{ur} & & &\rem
\end{tikzcd}
\end{equation}
We 
will focus on the map $F_p$, which can be considered as a local
realisation of $H|_{\carrier{\tccplx_p}}$. By construction, $F_p$
leaves the vertices of $\ccplx_p$ fixed: if $q \in \rem$ is a vertex
of $\ccplx_p$, then $F_p(q)=q$, since $\lmap_p$ coincides with
$\phi_p \circ H$ on vertices.

\begin{remark}
  The setting described here conforms to the
  paradigm laid out in the tangential complex work
  \cite{boissonnat2014tancplx}.  There one locally constructs a
  (weighted) Delaunay triangulation in the tangent space
  $T_p\man$. This gives us the local patch $\ccplx_p$ (the star of $p$
  in $T_p\man$). The vertices of the constructed complex actually lie
  on $\man$, and we recognise $\ccplx_p$ as the orthogonal projection
  $\lmap_p$ of the corresponding complex $\tccplx_p$ with
  vertices on $\man$.

  In this context, the homeomorphism $H$ that we are trying to
  establish is given by the closest-point projection map onto $\man$,
  restricted to $\carrier{\acplx}$. Now we are going to
  work in local coordinate charts, given at each vertex $p \in \pts$
  by the orthogonal projection $\phi_p$ of some neighbourhood of $p$,
  $U_p \subset \man$ into $T_p\man$. We recognise that $\lmap_p$
  really does coincide with the secant map of
  $\phi_p \circ H|_{|\tccplx_p|}$.
\end{remark}

\subsection{Local homeomorphism}

Our goal is to ensure that there is some open
$V_p \subset \carrier{\ccplx_p}$ such that $F_p|_{V_p}$ is an
embedding and that the sets $\tV_p = \lmap_p^{-1}(V_p)$ are
sufficiently large to cover $\carrier{\acplx}$. This will imply that
$H$ is a local homeomorphism. Indeed, if $V_p$ is embedded by $F_p$,
then $\tV_p$ is embedded by
$H|_{\tV_p} = \phi_p^{-1}\circ F_p \circ \lmap_p
|_{\tV_p}$, since $\phi_p$ and $\lmap_p$ are both embeddings.
Since $\carrier{\acplx}$ is compact, Brouwer's invariance of domain,
together with 
%the last assertion of \Defref{def:compatible.atlases}
\Remref{rem:touch.all.components},
implies that $H$ is surjective, and a covering map. It will only
remain to ensure that $H$ is also injective.

We assume that we are given (i.e., we can establish by
context-dependent means) a couple of properties of $F_p$. We
assume that it is \defn{simplexwise positive}, which means that 
it is continuous, and its restriction to any
$m$-simplex in $\strp$ is an orientation preserving
topological embedding. As discussed in \Appref{sec:degree.theory}, we
say that $F_p$ preserves the orientation of an $m$-simplex
$\gsplxs \subset \R^m$ if $F_p|_{\gsplxs}$ has degree 1 at any point
in the image of the interior of $\gsplxs$, i.e.,
$\deg(F_p,\intr{\gsplxs},y)=1$ for $y\in F_p(\intr{\gsplxs})$. (The
other assumption we make is that $F_p$, when restricted to an
$m$-simplex does not distort distances very much, as discussed below.)

The local homeomorphism demonstration is based on 
\Lemref{lem:splx.pos.embed} below,
which is a particular case of an observation made by Whitney
\cite[Appendix~II Lemma 15a]{whitney1957}. Whitney demonstrated a more
general result from elementary first principles. The proof we give
here is the same as Whitney's, except that we exploit elementary
degree theory, as discussed in \Appref{sec:degree.theory}, in order to
avoid the differentiability assumptions Whitney made. 

In the statement of the lemma, $\ccplx^{m-1}$ refers to the
\defn{$(m{-}1)$-skeleton} of $\ccplx$: the subcomplex consisting of
simplices of dimension less than or equal to $m-1$. When
$\carrier{\ccplx}$ is a manifold with boundary, as in the lemma, then
$\bdry\ccplx$ is the subcomplex containing all $(m{-}1)$-simplices
that are the face of a single $m$-simplex, together with the faces of
these simplices.

\begin{lem}[simplexwise positive embedding]
  \label{lem:splx.pos.embed}
  Assume $\ccplx$ is an oriented $m$-manifold finite simplicial
  complex with boundary embedded in $\R^m$.  Let
  $F \colon \carrier{\ccplx} \to \rem$ be simplexwise positive in
  $\ccplx$.  Suppose $V \subset \carrier{\ccplx}$ is a connected open
  set such that $F(V) \cap F(\carrier{\bdry{\ccplx}}) = \emptyset$. If
  there is a $y \in F(V) \setminus F(\carrier{\ccplx^{m-1}})$ such
  that $F^{-1}(y)$ is a single point, then the restriction of $F$ to
  $V$ is a topological embedding.
\end{lem}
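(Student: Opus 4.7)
The plan is to use elementary degree theory, as developed in \Appref{sec:degree.theory}, to count preimages of $F$. First I observe that $F(V)\cap F(\carrier{\bdry\ccplx})=\emptyset$ forces $V\subset U:=\carrier{\ccplx}\setminus\carrier{\bdry\ccplx}$ and $F^{-1}(z)\subset U$ for every $z\in F(V)$. Since $F$ restricts to a topological embedding on each $m$-simplex, it also restricts to an embedding on each $(m{-}1)$-face of $\ccplx$ (viewed inside an adjacent $m$-simplex), so $C:=F(\carrier{\ccplx^{m-1}})$ is a finite union of embedded topological $(m{-}1)$-disks in $\rem$ and hence nowhere dense. Moreover $F^{-1}(z)$ is finite for every $z\in\rem$, because each $m$-simplex contributes at most one preimage.

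Next I would compute the global degree. The distinguished $y\in F(V)\setminus C$ has a single preimage in the relative interior of some $m$-simplex, where simplexwise orientation-preservation gives local degree $+1$; hence $\deg(F,U,y)=1$. Because $F(V)$ is connected and disjoint from $F(\carrier{\bdry\ccplx})$, it lies in a single component $W$ of $\rem\setminus F(\carrier{\bdry\ccplx})$, on which $\deg(F,U,\cdot)\equiv 1$. For any $z'\in W\setminus C$, each preimage lies in an $m$-simplex interior and contributes $+1$, so $|F^{-1}(z')|=1$; thus $F$ is globally injective over $W\setminus C$.

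Suppose for contradiction that $x_1\neq x_2$ in $V$ satisfy $F(x_1)=F(x_2)=z$; then necessarily $z\in C$ by the previous step. Enumerate $F^{-1}(z)=\{x_1,\dots,x_k\}\subset U$ and choose pairwise-disjoint open balls $B_i\ni x_i$ whose closures avoid the other preimages of $z$. Each local degree $d_i:=\deg(F,B_i,z)$ is a non-negative integer: perturb $z$ inside the component of $z$ in $\rem\setminus F(\bdry B_i)$ to a point $z'\notin C$; preimages of $z'$ in $B_i$ then lie in $m$-simplex interiors, so count with sign $+1$. Additivity of degree and the previous paragraph give $\sum_i d_i=\deg(F,U,z)=1$, so at most one $d_i$ equals $1$; in particular at least one of $d_1,d_2$ vanishes, say $d_2=0$.

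The main obstacle is the final step, where I contradict $d_2=0$. Pick any $m$-simplex $\splxs$ with $x_2\in\splxs$ and a point $x_2'\in B_2\cap\rint{\splxs}$ close enough to $x_2$ that $F(x_2')$ lies in the same component of $\rem\setminus F(\bdry B_2)$ as $z$; such an $x_2'$ can furthermore be chosen with $F(x_2')\notin C$, because $F|_{\splxs}$ is an embedding so $F(\rint{\splxs})$ is open in $\rem$ by invariance of domain and therefore cannot be contained in the nowhere-dense set $C$. Local constancy of the degree on components then gives $\deg(F,B_2,F(x_2'))=d_2=0$; but $F(x_2')\notin C$ means every preimage of $F(x_2')$ in $B_2$ sits in an $m$-simplex interior with local degree $+1$, and $x_2'$ is itself one such preimage, forcing $\deg(F,B_2,F(x_2'))\geq 1$ — contradiction. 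Hence $F|_V$ is injective, and invariance of domain (applied to $V$, which is open in an $m$-manifold) upgrades $F|_V$ to a topological embedding. One is tempted to argue instead that $F$ is open at every point of $V$, but at points of the $(m{-}1)$-skeleton this openness is not transparent; the local-degree identity is the clean way past that obstacle.
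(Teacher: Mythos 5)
Your argument is correct, but it handles the delicate part of the lemma --- injectivity at points of the $(m{-}1)$-skeleton --- by a different mechanism than the paper. Your second paragraph carries the same content as the paper's first step (the paper invokes \Lemref{lem:splxwise.pos.loc.cnst} to see that every point of $F(V)\setminus F(\carrier{\ccplx^{m-1}})$ has exactly one preimage; you phrase this as $\deg(F,U,\cdot)\equiv 1$ on the component containing $F(V)$). From there the paper proceeds geometrically: it proves that $F(\ostr{\gsplxs})$ is open for every interior simplex $\gsplxs$, and then, given two skeleton preimages of a common point, uses the disjointness of the open stars of the two simplices carrying them to produce nearby points off $F(\carrier{\ccplx^{m-1}})$ with two preimages, contradicting the first step. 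You instead localize the degree at the doubly covered point: additivity (\Lemref{lem:deg.multi.add}) over disjoint balls around all preimages, total degree $1$, and a perturbation argument showing each local degree is at least $1$ because a nearby image point off $C$ has a preimage in the ball lying in an $m$-simplex interior, contributing $+1$. Both routes rest on the same Appendix toolkit (local constancy, additivity, and the cogent-map degree count) and both finish with invariance of domain; yours avoids the open-star openness argument (which the paper reuses later, in the proof of \Lemref{lem:whitney.cplx.embed}) at the price of leaning harder on additivity. Two small points to polish, neither a real gap: choose the balls with $\cl{B_i}\subset U$ so that $\deg(F,B_i,\cdot)$ is defined and \Lemref{lem:deg.multi.add} applies with $\Omega=U$; and when selecting $x_2'$, apply invariance of domain to the open set of admissible points, namely $B_2\cap\rint{\splxs}$ intersected with the $F$-preimage of the component of $z$ in $\rem\setminus F(\bdry{B_2})$, rather than to $\rint{\splxs}$ itself, to conclude that its image is open and hence not contained in the nowhere dense set $C$.
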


\begin{proof}
  Notice that the topological boundary of
  $\carrier{\ccplx} \subset \R^m$ is equal to the carrier of the
  boundary complex (see, e.g., \cite[Lemmas~3.6,
  3.7]{boissonnat2013stab1}):
  \begin{equation*}
    \bdry{\carrier{\ccplx}} = \carrier{\bdry{\ccplx}}.
  \end{equation*}
  Let $\Omega = \carrier{\ccplx} \setminus \carrier{\bdry{\ccplx}}$.
  Since $F$ is simplexwise positive, and $F(V)$ lies within a
  connected component of $\R^m \setminus F(\bdry\Omega)$, the fact
  that $F^{-1}(y)$ is a single point implies that $F^{-1}(w)$ is a
  single point for any $w \in F(V) \setminus F(\carrier{C^{m-1}})$
  (\Lemref{lem:splxwise.pos.loc.cnst}).
  We need to show that $F$ is also injective on
  $V \cap \carrier{\ccplx^{m-1}}$. 

  We now show that $F(\ostr{\gsplxs})$ is open for any
  $\gsplxs \in \ccplx^{m-1} \setminus \bdry{\ccplx}$, where
  $\ostr{\gsplxs}$ is the open star of $\gsplxs$, defined in
  Notation~\ref{not:splx.star}.  Suppose $x \in \rint{\gsplxt}$ for
  some $\gsplxt \in \ccplx \setminus \bdry{\ccplx}$.  Since $F$ is
  injective when restricted to any simplex, we can find a sufficiently
  small open (in $\R^m$) neighbourhood $U$ of $F(x)$ such that
  $U \cap F(\bdry{\ostr{\gsplxt}}) = \emptyset$. Since the closure of
  the open star is equal to the carrier of our usual star:
  \[
    \cl{\ostr{\gsplxt}} = \carrier{\str{\gsplxt}},
  \]
  \Lemref{lem:splxwise.pos.loc.cnst} implies that every point in
  $U \setminus F(\carrier{\str{\gsplxt}^{m-1}})$ has the same number
  of points in its preimage. By the injectivity of $F$ restricted to
  $m$-simplices, this number must be greater than zero for points near
  $F(x)$.  It follows that $U \subseteq F(\ostr{\gsplxt})$.

  If $x \in \ostr{\gsplxs}$, then $x \in \rint{\gsplxt}$ for some
  $\gsplxt \in \ccplx \setminus \bdry{\ccplx}$ that has $\gsplxs$ as a
  face. Since $\ostr{\gsplxt} \subseteq \ostr{\gsplxs}$, we have
  $U \subseteq F(\ostr{\gsplxs})$, and we conclude that
  $F(\ostr{\gsplxs})$ is open.

  Now, to see that $F$ is injective on
  $\carrier{\ccplx^{m-1}} \cap V$, suppose to the contrary that
  $w,z \in \carrier{\ccplx^{m-1}} \cap V$ are two distinct points such
  that $F(w)=F(z)$. Since $F$ is injective on each simplex, there are
  distinct simplices $\gsplxs, \gsplxt$ such that
  $w \in \rint{\gsplxs}$ and $z \in \rint{\gsplxt}$.  So there is an
  open neighbourhood $U$ of $F(w)=F(z)$ that is contained in
  $F(\ostr{\gsplxs}) \cap F(\ostr{\gsplxt})$.

  We must have $\ostr{\gsplxs} \cap \ostr{\gsplxt} = \emptyset$,
  because if $x \in \ostr{\gsplxs} \cap \ostr{\gsplxt}$, then
  $x \in \rint{\gsplxm}$ for some $\gsplxm$ that has both $\gsplxs$
  and $\gsplxt$ as faces. But this means that both $w$ and $z$ belong
  to $\gsplxm$, contradicting the injectivity of $F|_{\gsplxm}$. It
  follows that points in the nonempty set
  $U \setminus \carrier{\ccplx^{m-1}}$ have at least two points in
  their preimage, a contradiction. Thus $F|_V$ is injective, and
  therefore, by Brouwer's invariance of domain, it is an embedding.
%
  % The same kind of reasoning that showed that $F(\str{\gsplxs})$ is
  % open, also shows that $F|_{\Omega}$ is an open map: Let
  % $W \subseteq \Omega$ be open, and $x \in W$. We need to show that
  % there is an open neighbourhood $U$ of $F(x)$ such that
  % $U \subseteq F(W)$. Let $W' = W \cap \ostr{\gsplxt}$, where
  % $x \in \rint{\gsplxt}$. We choose the neighbourhood $U$ small enough
  % that $U \cap F(\bdry{W'}) = \emptyset$, and argue as above that
  % $U \subseteq F(W')$.
%
  % It follows that $F|_V$ is an embedding.
\end{proof}

Our strategy for employing 
\Lemref{lem:splx.pos.embed}
is to demand that
the restriction of
$F_p$ 
to any $m$-simplex
has low metric distortion, and use this fact to ensure that the
image of $V_p \subset \carrier{\ccplx_p}$ is not intersected by the
image of the boundary of $\carrier{\ccplx_p}$, i.e., we will establish
that $F_p(V_p) \cap F_p(\carrier{\bdry{\ccplx_p}}) = \emptyset$. We need
to also establish that there is a point $y$ in
$F_p(V_p) \setminus F_p(\carrier{\ccplx_p^{m-1}})$ 
such that $F^{-1}(y)$ is a single
point. The metric distortion bound will help us here as well.

\begin{de}[$\xi$-distortion map]
  \label{def:distortion.map}
  A map $F\colon U\subset \rem \to \rem$ is a \defn{$\xi$-distortion
    map} if for all $x,y \in U$ we have
  \begin{equation}
    \label{eq:def.distortion.map}
    \bigl|\norm{F(x)-F(y)} - \norm{x-y}\bigr| \leq \xi\norm{x-y}.
  \end{equation}
\end{de}

We are interested in $\xi$-distortion maps with small $\xi$.
\Eqnref{eq:def.distortion.map} can be equivalently written
\begin{equation*}
  (1-\xi)\norm{x-y} \leq \norm{F(x)-F(y)} \leq (1+\xi)\norm{x-y},
\end{equation*}
and it is clear that when $\xi<1$, a $\xi$-distortion map is a
bi-Lipschitz map. For our purposes the metric distortion constant
$\xi$ is more convenient than a bi-Lipschitz constant. It is easy to
show that if $F$ is a $\xi$-distortion map, with $\xi<1$, then $F$ is
a homeomorphism onto its image, and $F^{-1}$ is a
$\frac{\xi}{1-\xi}$-distortion map (see
\Lemref{lem:inv.comp.distort}(1)).

Assuming that $F_p|_{\gsplxs}$ is a $\xi$-distortion map for each
$m$-simplex $\gsplxs \in \ccplx_p$, we can bound how much it displaces
points. Specifically, for any point $x \in \carrier{\ccplx_p}$, we
will bound $\norm{x-F(x)}$.  We exploit the fact that the $m+1$
vertices of $\gsplxs$ remain fixed, and use
\defn{trilateration},
%~\cite{hereman1995}. 
i.e., we use the estimates of the distances to the fixed vertices to
estimate the location of $F(x)$.  Here, the quality of the simplex
comes into play.

\begin{notation}[simplex quality]
\label{not:splx.qual}
If $p$ is a vertex of $\gsplxs$, the \defn{altitude} of $p$ is the
distance from $p$ to the opposing facet of $\gsplxs$ and is denoted
$a_p(\gsplxs)$. 
The \defn{thickness} of $\gsplxs$, denoted
$t(\gsplxs)$ (or just $t$ if there is no risk of confusion) is given
by $\frac{a}{mL}$, where $a=a(\gsplxs)$ is the smallest altitude of
$\gsplxs$, and $L=L(\gsplxs)$ is the length of the longest edge. We
set $t(\gsplxs)=1$ if $\gsplxs$ has dimension~$0$.
\end{notation}

\begin{lem}[trilateration]
  \label{lem:distort.trilateration}
  Suppose $\gsplxs \subset \rem$ is an $m$-simplex, and
  $F\colon\gsplxs \to \rem$ is a $\xi$-distortion map that leaves the
  vertices of $\gsplxs$ fixed. If $\xi \leq 1$, then for any
  $x \in \gsplxs$,
  \begin{align*}
    \norm{x-F(x)} \leq \frac{3\xi L}{t},
  \end{align*}
  where $L$ is the length of the longest edge of $\gsplxs$, and $t$ is
  its thickness.
\end{lem}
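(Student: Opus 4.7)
The plan is to argue by trilateration: since $F$ fixes the $m+1$ vertices of $\gsplxs$, the point $y := F(x)$ is pinned by its distances to these vertices, and those distances are in turn close to the distances from $x$. The three steps will be (i) convert the distortion bound into distance estimates at each vertex, (ii) convert these into inner-product estimates on the edge vectors via polarization, and (iii) invert the edge matrix to extract $\norm{x-y}$, picking up the thickness along the way.

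Denote the vertices of $\gsplxs$ by $p_0,\dots,p_m$. Since $F(p_i)=p_i$, the distortion hypothesis gives
$$\bigl|\norm{y-p_i}-\norm{x-p_i}\bigr|\le \xi\norm{x-p_i}\le \xi L,$$
and in particular $\norm{y-p_i}+\norm{x-p_i}\le (2+\xi)L\le 3L$, using $\xi\le 1$. Next I would apply the polarization identity
$$2\ip{x-y}{p_i-p_0} \;=\; \bigl(\norm{y-p_0}^2-\norm{x-p_0}^2\bigr) - \bigl(\norm{y-p_i}^2-\norm{x-p_i}^2\bigr)$$
and factor each difference of squares as a sum times a difference, obtaining $\bigl|\ip{x-y}{p_i-p_0}\bigr|\le 3\xi L^2$ for every $i=1,\dots,m$.

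To finish, let $E$ be the $m\times m$ matrix whose columns are the edge vectors $p_i-p_0$ for $i=1,\dots,m$. The previous bound says $\norm{\transp{E}(x-y)}\le \sqrt{m}\cdot 3\xi L^2$, so
$$\norm{x-y} = \norm{\invtransp{E}\transp{E}(x-y)} \le \onorm{\invtransp{E}}\cdot \sqrt{m}\cdot 3\xi L^2.$$
The main work is then the linear-algebraic estimate $\onorm{\invtransp{E}}\le \sqrt{m}/a$, where $a$ is the smallest altitude of $\gsplxs$: the distance from the column $p_i-p_0$ to the linear span of the remaining columns equals the altitude $a_i$ of $p_i$ in $\gsplxs$, and expanding an arbitrary unit vector $u$ in coordinates and isolating a component with $|u_j|\ge 1/\sqrt{m}$ converts $\min_i a_i \ge a$ into the stated operator-norm bound. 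Substituting and using $t=a/(mL)$ gives $\norm{x-y}\le 3m\xi L^2/a = 3\xi L/t$.

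The main obstacle is this last operator-norm estimate; it is where the simplex quality actually enters and accounts for the $1/t$ factor in the conclusion. Everything else is routine bookkeeping with the distortion hypothesis, polarization, and the passage between $\ell^\infty$ and $\ell^2$ norms on $\R^m$.
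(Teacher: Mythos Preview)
Your proof is correct and follows essentially the same route as the paper: both set up the edge matrix with $p_0$ as origin, bound the components of $\transp{E}(x-y)$ by $3\xi L^2$ via polarization and the distortion hypothesis, and then invert using the operator-norm estimate $\onorm{\invtransp{E}}\le 1/(\sqrt{m}\,tL)$ (equivalently $\sqrt{m}/a$). The only differences are cosmetic: the paper cites the operator-norm bound from \cite[Lemma~2.4]{boissonnat2013stab1} rather than sketching it, and your displayed polarization identity has a harmless overall sign flip that disappears once you take absolute values.
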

\begin{proof}
  Let $\{p_0,\ldots,p_m\}$ be the vertices of $\gsplxs$.
  For $x \in \gsplxs$, let $\tilde{x} = F(x)$. 

  We choose $p_0$ as the origin, and observe that
  \begin{equation}
    \label{eq:position.component}
    \transp{p_i}x = \frac{1}{2}\left( \norm{x}^2 + \norm{p_i}^2 -
      \norm{x-p_i}^2 \right), 
  \end{equation}
  which we write in matrix form as $\transp{P}x = b$, where $P$ is the
  $m \times m$ matrix whose $i$-th column is $p_i$, and $b$ is the
  vector whose $i$-th component is given by the right-hand side of
  \eqref{eq:position.component}. Similarly, we have
  $\transp{P}\tilde{x} = \tilde{b}$ with the obvious definition of
  $\tilde{b}$. Then
\[
\tilde{x} - x = (\transp{P})^{-1}(\tilde{b} - b).
\]

  Since $F(p_0) = p_0 = 0$, we have
  $\abs{\norm{\tilde{x}} - \norm{x}} \leq \xi \norm{x}$, and so
\[
\abs{\norm{\tilde{x}}^2 - \norm{x}^2} 
\leq \xi (2 + \xi) \norm{x}^2 \leq 3 \xi L^2.
\]
  Similarly,
  $\abs{ \norm{x-p_i}^2 - \norm{\tilde{x} - p_i}^2 } < 3 \xi
  L^2$.
  Thus $\abs{\smash{\tilde{b}_i} - b_i} \leq 3\xi L^2$, and
  $\norm{\smash{\tilde{b}} - b} \leq 3\sqrt{m}\xi L^2$.

  By \cite[Lemma~2.4]{boissonnat2013stab1} %Lemref{lem:bound.skP}
  we have $\onorm{ (\transp{P})^{-1} } \leq (\sqrt{m} t L)^{-1}$, and
  the stated bound follows.
\end{proof}

\subsubsection{Using $\protect\str{p}$ as $\tccplx_p$}
\label{sec:star.method}

For the local complex $\tccplx_p \subset \acplx$ introduced in
\Defref{def:compatible.atlases}, we now make a specific choice:
$\tccplx_p =\str{p}$. This is the smallest complex allowed by the
definition.  For convenience, we define $\p=\lmap_p(p)$, so that
$\ccplx_p = \lmap_p(\tccplx_p) = \str{\p}$.

\begin{figure}
  \centering
  \includegraphics[scale=.3]{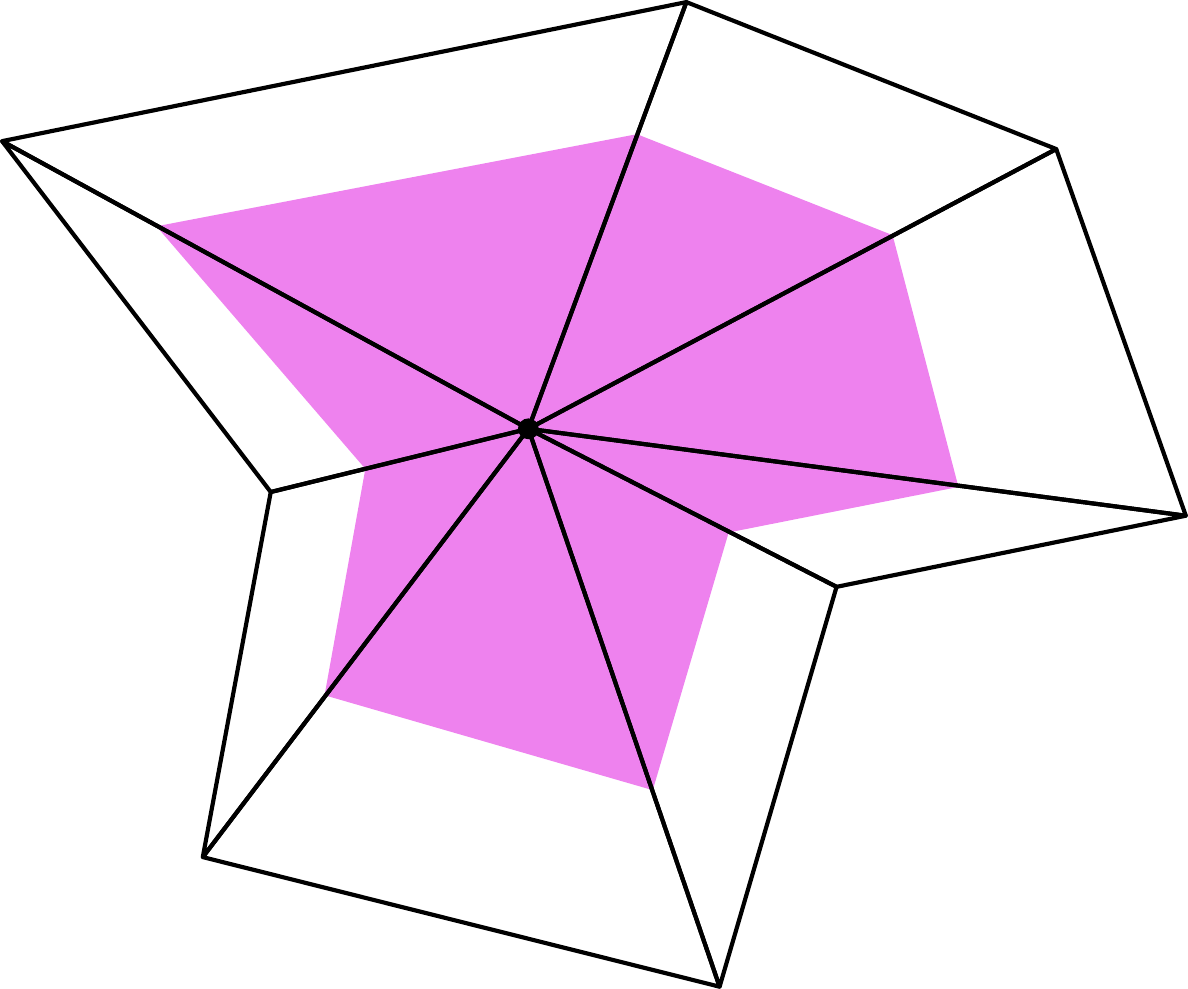}
  \caption{The open set $V_p$ (shaded) is a homothetically shrunk copy
  of the interior of $\strp$. }
  \label{fig:shrunk.star}
\end{figure}

We define $V_p$ to be the open set obtained by homothetically
``shrinking'' $\carrier{\strp}$ such that it is just large enough to
contain the barycentres of the simplices that have $\p$ as a vertex
(see Figure~\ref{fig:shrunk.star}).
To be more specific we define $V_p$ to be the open set consisting of
the points in $\carrier{\strp}$ whose barycentric coordinate with
respect to $\p$ is strictly larger than $\frac{1}{m+1}-\delta$, where
$\delta>0$ is arbitrarily small.  Since the barycentric coordinates in
each $m$-simplex sum to 1, and the piecewise linear maps $\lmap_p$
preserve barycentric coordinates, this ensures that the sets
$\lmap_p^{-1}(V_p)$ cover $\carrier{\acplx}$.

% We assume that $F_p$ is a $\xi$-distortion map on each simplex.  The
% idea is to show that $F_p$ is an embedding on a set $V_p$ that
% includes all the points $x \in \carrier{\strp}$ such that the
% barycentric coordinate of $x$ associated with $\p$ in an $m$-simplex
% that contains $x$ is at least $\frac{1}{m+1}$. (That is the homothetic
% copy of $\carrier{\strp}$, ``shrunk'' by a factor of
% $1-\frac{1}{m+1}$, using $\p$ as the origin.)  To be more specific we
% define $V_p$ to be the open set consisting of the points in
% $\carrier{\strp}$ whose barycentric coordinate with respect to $\p$ is
% strictly larger than $\frac{1}{m+1}-\delta$, where $\delta>0$ is
% arbitrarily small.  Since the barycentric coordinates in each
% $m$-simplex sum to 1, and the piecewise linear maps $\lmap_p$ preserve
% barycentric coordinates, this ensures that the sets
% $\lmap_p^{-1}(V_p)$ cover $\carrier{\acplx}$.

We assume that $F_p$ is a $\xi$-distortion map on each simplex.  The
idea is to show that $F_p$ is an embedding on $V_p$.  In order to
employ the simplexwise positive embedding lemma
(\Lemref{lem:splx.pos.embed}), we need to establish that there is a
point in $V_p \setminus \carrier{\ccplx_p^{m-1}}$ that is not mapped
to the image of any other point in $\carrier{\ccplx_p}$. We choose the
barycentre of a simplex for this purpose. We say that a simplicial
complex is a \defn{pure} $m$-dimensional simplicial complex if every
simplex is the face of an $m$-simplex.

\begin{lem}[a point covered once]
  \label{lem:pt.covered.once}
  Suppose $\ccplx$ is a pure $m$-dimensional finite simplicial complex embedded
  in $\rem$, and that for each $\gsplxs \in \ccplx$ we have
  $t(\gsplxs) \geq t_0$.  
  Let $\gsplxs \in \ccplx$ be an $m$-simplex with the largest diameter,
  i.e., $L(\gsplxs) \geq L(\gsplxt)$ for all $\gsplxt \in
  \ccplx$, and let $b$ be the barycentre of $\gsplxs$.  
If $F \colon \carrier{\ccplx} \to \rem$ 
  leaves the vertices of $\ccplx$ fixed, and its restriction to any
  $m$-simplex in $\ccplx$ is a $\xi$-distortion map with
  \begin{equation}
    \label{eq:bary.bnd.required}
    \xi \leq \frac16\frac{m}{m+1}t_0^2,
  \end{equation}
  then $F^{-1}(F(b)) = \{b\}$.
\end{lem}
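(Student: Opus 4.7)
The plan is to argue by contradiction. Suppose $y \in \carrier{\ccplx}$ satisfies $y \neq b$ and $F(y) = F(b)$; I will combine two applications of the trilateration lemma with the distance from the barycentre $b$ to the boundary of $\gsplxs$ to contradict~\eqref{eq:bary.bnd.required}.

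First, since $b \in \gsplxs$, which has $t(\gsplxs) \geq t_0$ and $L(\gsplxs) = L$ (the maximum edge length in $\ccplx$), \Lemref{lem:distort.trilateration} gives $\norm{b - F(b)} \leq \frac{3\xi L}{t_0}$. Because $\ccplx$ is pure $m$-dimensional, $y$ lies in some $m$-simplex $\gsplxt$. If $y \in \gsplxs$, then since $\xi \leq \tfrac{1}{6} < 1$ the map $F|_{\gsplxs}$ is injective (see \Lemref{lem:inv.comp.distort}), forcing $y = b$, a contradiction. Hence $y \notin \gsplxs$, and applying \Lemref{lem:distort.trilateration} to $\gsplxt$ (with $L(\gsplxt) \leq L$ and $t(\gsplxt) \geq t_0$) yields $\norm{y - F(y)} \leq \frac{3\xi L}{t_0}$. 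The triangle inequality, combined with $F(y) = F(b)$, gives $\norm{b - y} \leq \frac{6\xi L}{t_0}$.

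For the lower bound on $\norm{b - y}$, I use that $b$ has barycentric coordinate $\frac{1}{m+1}$ with respect to every vertex of $\gsplxs$, so its distance to the facet opposite a vertex $p_i$ equals $\frac{a_{p_i}}{m+1}$. The thickness bound together with $t(\gsplxs) = a(\gsplxs)/(mL)$ gives $a_{p_i} \geq mLt_0$ for every $i$, so the distance from $b$ to $\bdry \gsplxs$ is at least $\frac{mLt_0}{m+1}$. Since $y$ lies strictly outside the closed convex simplex $\gsplxs$, the segment $[b,y]$ crosses $\bdry \gsplxs$ at some point $z \neq y$, giving
\[
\norm{b-y} \;=\; \norm{b-z} + \norm{z-y} \;>\; \norm{b-z} \;\geq\; \frac{mLt_0}{m+1}.
\]
Combined with the earlier upper bound, this yields $\xi > \frac{1}{6}\frac{m}{m+1}t_0^2$, contradicting \eqref{eq:bary.bnd.required}.

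The main obstacle, such as it is, is bookkeeping: one must invoke trilateration on two different simplices ($\gsplxs$ containing $b$ and $\gsplxt$ containing $y$) and dispose of the case $y \in \gsplxs$ using the simplexwise injectivity that follows from $\xi < 1$. Everything else reduces to elementary geometry of the simplex (barycentre-to-facet distances) and a single triangle inequality.
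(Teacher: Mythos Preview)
Your argument is correct and essentially identical to the paper's: both proceed by contradiction, use injectivity of $F|_{\gsplxs}$ (from $\xi<1$) to place the hypothetical second preimage outside $\gsplxs$, bound $\norm{b-y}$ above via two applications of \Lemref{lem:distort.trilateration} and the triangle inequality, and bound it below via the barycentre-to-facet distance $a(\gsplxs)/(m+1)$. The only cosmetic difference is that the paper keeps $t(\gsplxs)$ and $t(\gsplxt)$ in the trilateration bounds before passing to $t_0$, whereas you substitute $t_0$ immediately; also, the injectivity of a $\xi$-distortion map with $\xi<1$ is immediate from \Defref{def:distortion.map} rather than \Lemref{lem:inv.comp.distort}.
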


\begin{proof}
  Since $\xi < 1$, the restriction of $F$ to $\gsplxs$ is
  injective. Suppose $x\in \carrier{\ccplx}$ is such that $F(x)=F(b)$, but
  $x\neq b$. Then $x$
  belongs to some $m$-simplex $\gsplxt \in \ccplx$ different from
  $\gsplxs$. Since the distance from $b$ to the boundary of $\gsplxs$
  is
  \[
    \frac{a(\gsplxs)}{m+1} = \frac{mt(\gsplxs)L(\gsplxs)}{m+1},
  \]
  it follows that $\norm{x-b}> \unpfrac{mt(\gsplxs)L(\gsplxs)}{m+1}$.
But
  using \Lemref{lem:distort.trilateration} and the constraint
\eqref{eq:bary.bnd.required} we arrive at a contradiction:
  \begin{equation*}
    \norm{x-b}\leq \norm{b-F(b)} + \norm{x-F(x)}
    \leq
    \frac{3\xi L(\gsplxs)}{t(\gsplxs)} 
    +  \frac{3\xi L(\gsplxt)}{t(\gsplxt)} 
    \leq  \frac{mt(\gsplxs)L(\gsplxs)}{m+1}.
  \end{equation*}
\end{proof}

Now we also need to ensure that
$F_p(V_p) \cap F_p(\carrier{\bdry{\ccplx_p}}) = \emptyset$. Here we
will explicitly use the assumption that $\ccplx_p$ is $\strp$.  We
say that $\strp$ is a \defn{full star} if its carrier is an
$m$-manifold with boundary and $\p$ does not belong to
$\bdry{\strp}$.

\begin{lem}[barycentric boundary separation]
  \label{lem:dist.to.star.bdry}
  Suppose $\strp$ is a full $m$-dimen\-sional star embedded in
  $\rem$. Let $a_0 = \min_{\gsplxs \in \strp}a_{\p}(\gsplxs)$ be the
  smallest altitude of $\p$ in the $m$-simplices in $\strp$. Suppose
  $x \in \gsplxs \in \strp$, where $\gsplxs$ is an $m$-simplex, and
  $\lambda_{\gsplxs,\p}(x)$, the barycentric coordinate of $x$ with
  respect to $\p$ in $\gsplxs$, satisfies
  $\lambda_{\gsplxs,\p}(x) \geq \alpha$. Then
  $\distEm{x}{\carrier{\bdry{\strp}}} \geq \alpha a_0$.

  If $t_0$ is a lower bound on the thicknesses of the simplices in
  $\strp$, and $s_0$ is a lower bound on their diameters, then
  $\distEm{x}{\carrier{\bdry{\strp}}} \geq \alpha m t_0 s_0$.
\end{lem}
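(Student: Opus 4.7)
The plan is to introduce a single global witness on $\carrier{\strp}$: the piecewise affine function $\lambda_{\p}\colon \carrier{\strp}\to[0,1]$ defined by $\lambda_{\p}(x) = \lambda_{\gsplxt,\p}(x)$ whenever $x \in \gsplxt$ for an $m$-simplex $\gsplxt \in \strp$. The first step is to check this is well-defined and continuous: on any face common to two $m$-simplices of $\strp$, the $\p$-barycentric coordinate is determined by the affine structure on that face alone (it is $1$ at $\p$ if $\p$ lies in the face, and $0$ on the other vertices), so it agrees across the two parent simplices. On each $m$-simplex $\gsplxt$, $\lambda_{\p}$ is affine with gradient of magnitude $1/a_{\p}(\gsplxt)\leq 1/a_{0}$. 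Using the full-star hypothesis, every boundary face of $\strp$ must be a face of a facet opposite $\p$ in some $m$-simplex (otherwise $\p$ would sit on $\bdry\strp$); hence $\lambda_{\p}\equiv 0$ on $\carrier{\bdry{\strp}}$.

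Next, fix an arbitrary $w\in\carrier{\bdry{\strp}}$ and bound $\norm{x-w}$ from below. The case $\alpha=0$ is trivial, so assume $\alpha>0$, which by the description of $\bdry{\strp}$ above forces $x$ into the topological interior of $\carrier{\strp}$. Parametrise the straight segment by $z(s)=(1-s)x+sw$ for $s\in[0,1]$ and let $s^{*}\in(0,1]$ be the least $s$ with $z(s)\in\carrier{\bdry{\strp}}$. Using the identification $\bdry{\carrier{\strp}}=\carrier{\bdry{\strp}}$ already invoked in the proof of \Lemref{lem:splx.pos.embed}, the arc $z([0,s^{*}))$ is a connected subset of $\R^{m}\setminus\carrier{\bdry{\strp}}$ containing the interior point $x$, so it stays in the interior of $\carrier{\strp}$; by continuity the closed sub-segment $[x,z(s^{*})]$ is contained in $\carrier{\strp}$.

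On this sub-segment $\lambda_{\p}$ is piecewise affine, and summing the affine increments across successive simplex pieces (each with Lipschitz constant at most $1/a_{0}$) gives $\alpha=|\lambda_{\p}(x)-\lambda_{\p}(z(s^{*}))|\leq\norm{x-z(s^{*})}/a_{0}$. Combined with $\norm{x-z(s^{*})}\leq\norm{x-w}$ this yields $\norm{x-w}\geq\alpha a_{0}$, and taking the infimum over $w$ completes the first inequality. The second inequality follows from the definitions: $a_{\p}(\gsplxs)\geq a(\gsplxs)=m\,t(\gsplxs)L(\gsplxs)\geq m t_{0}s_{0}$, so $a_{0}\geq m t_{0}s_{0}$. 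The main obstacle is the step that keeps the sub-segment inside $\carrier{\strp}$: once one identifies the combinatorial and topological boundaries and uses the full-star hypothesis to place $x$ in the interior, the remaining Lipschitz bookkeeping and the rescaling from $a_{0}$ to $m t_{0}s_{0}$ are routine.
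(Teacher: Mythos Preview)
Your proof is correct and follows essentially the same approach as the paper: both arguments use that the $\p$-barycentric coordinate is a well-defined continuous piecewise-affine function on $\carrier{\strp}$, vanishing on the boundary, with simplexwise Lipschitz constant at most $1/a_0$, and then bound the distance along a segment from $x$ to a boundary point. The paper partitions such a segment into its simplex pieces and telescopes the barycentric-coordinate increments directly, whereas you package the same computation as a global Lipschitz bound; and where the paper simply takes a boundary point $y$ for which $[x,y]\subset\carrier{\strp}$, you handle an arbitrary boundary point $w$ by passing to the first boundary crossing $z(s^{*})$, which is a slightly more explicit justification of the same step. (One small slip: you write $\alpha=\lvert\lambda_{\p}(x)-\lambda_{\p}(z(s^{*}))\rvert$, but it should be $\alpha\leq\lambda_{\p}(x)=\lvert\lambda_{\p}(x)-\lambda_{\p}(z(s^{*}))\rvert$; the conclusion is unaffected.)
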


\begin{proof}
  Since we are interested in the distance to the boundary, consider a
  point $y \in \carrier{\bdry{\strp}}$ such that the segment $[x,y]$
  lies in $\carrier{\strp}$. The segment passes through a sequence of
  $m$-simplices, $\gsplxs_0=\gsplxs,\gsplxs_1, \dots, \gsplxs_n$, that
  partition it into subsegments $[x_i,y_i] \subset \gsplxs_i$ with
  $x_0=x$,\, $y_n=y$ and $x_i=y_{i-1}$ for all $i\in \{1,\dots,n\}$.

  Observe that
  $\lambda_{\gsplxs_i,\p}(x_i)=\lambda_{\gsplxs_{i-1},\p}(y_{i-1})$, and
  that
\[
\norm{x_i-y_i}\geq
  a_{\p}(\gsplxs_i)\abs{\lambda_{\gsplxs_i,\p}(x_i)
    -\lambda_{\gsplxs_i,\p}(y_i)}.
\]
  Thus
  \begin{equation*}
    \begin{split}
      \norm{x-y}&= \sum_{i=0}^n \norm{x_i-y_i}\\
      &\geq \sum_{i=0}^n a_{\p}(\gsplxs_i)
      \abs{\lambda_{\gsplxs_i,\p}(x_i)-\lambda_{\gsplxs_i,\p}(y_i)}\\
      &\geq a_0\sum_{i=0}^n 
      (\lambda_{\gsplxs_i,\p}(x_i)-\lambda_{\gsplxs_i,\p}(y_i))\\
      &= a_0 (\lambda_{\gsplxs,\p}(x)-\lambda_{\gsplxs_n,\p}(y))
      = a_0\lambda_{\gsplxs,\p}(x)\\
      &\geq a_0 \alpha.
    \end{split}
  \end{equation*}

  From the definition of thickness we find that $a_0\geq t_0 m s_0$,
  yielding the second statement of the lemma.
\end{proof}

\Lemref{lem:dist.to.star.bdry} allows us to quantify the
distortion bound that we need to ensure that the boundary of $\strp$
does not get mapped by $F_p$ into the image of the open set $V_p$. The
argument is the same as for \Lemref{lem:pt.covered.once}, but there we
were only concerned with the barycentre of the largest simplex, so the
relative sizes of the simplices were not relevant as they are here
(compare the bounds \eqref{eq:bary.bnd.required} and
\eqref{eq:bdry.bnd.required}).

\begin{lem}[boundary separation for $V_p$]
  Suppose $\strp$ is a full star embedded in $\R^m$, and every
  $m$-simplex $\gsplxs$ in $\strp$ satisfies
  $s_0 \leq L(\gsplxs) \leq L_0$, and $t(\gsplxs)\geq t_0$. If the
  restriction of $F_p$ to any $m$-simplex in $\strp$
  is a $\xi$-distortion map, with
  \begin{equation}
    \label{eq:bdry.bnd.required}
    \xi < \frac{1}{6}\frac{m}{m+1}\frac{s_0}{L_0}t_0^2,
  \end{equation}
  then $F_p(V_p) \cap F_p(\carrier{\bdry{\strp}}) = \emptyset$,
  where $V_p$ is the set of points with barycentric coordinate with
  respect to $\p$ in a containing $m$-simplex strictly greater than
  $\frac{1}{m+1} - \delta$, with 
% $\delta$  an arbitrarily small
%   parameter satisfying
%   \begin{equation}
%     \label{eq:arb.small}
%     0 < \delta \leq \frac{1}{m+1} - \frac{6L_0\xi}{m s_0 t_0^2}.
%   \end{equation}
$\delta>0$ an arbitrary, suffiently small parameter.
\end{lem}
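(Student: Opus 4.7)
The plan is to argue by contradiction. Suppose there exist $x \in V_p$ and $y \in \carrier{\bdry{\strp}}$ with $F_p(x) = F_p(y)$, and bound $\distEm{x}{y}$ from below (using where $x$ sits inside $\strp$) and from above (using the distortion bound). Choosing $\delta$ small enough, the two bounds contradict each other.

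First I would lower-bound $\distEm{x}{y}$. By definition of $V_p$, the point $x$ lies in some $m$-simplex $\gsplxs \in \strp$ with $\lambda_{\gsplxs,\p}(x) > \frac{1}{m+1} - \delta$. Since $y \in \carrier{\bdry\strp}$, the barycentric boundary separation lemma (\Lemref{lem:dist.to.star.bdry}) gives
\begin{equation*}
\norm{x-y} \geq \distEm{x}{\carrier{\bdry{\strp}}} \geq \left(\frac{1}{m+1}-\delta\right) m\, t_0\, s_0.
\end{equation*}

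Next I would upper-bound $\norm{x-y}$ using the trilateration lemma. Let $\gsplxt$ be an $m$-simplex of $\strp$ containing $y$. Because $F_p$ fixes the vertices of $\ccplx_p = \strp$ and is a $\xi$-distortion map on each $m$-simplex, \Lemref{lem:distort.trilateration} applied on $\gsplxs$ and on $\gsplxt$ yields
\begin{equation*}
\norm{x-F_p(x)} \leq \frac{3\xi L(\gsplxs)}{t(\gsplxs)} \leq \frac{3\xi L_0}{t_0},
\qquad
\norm{y-F_p(y)} \leq \frac{3\xi L(\gsplxt)}{t(\gsplxt)} \leq \frac{3\xi L_0}{t_0}.
\end{equation*}
Since $F_p(x)=F_p(y)$, the triangle inequality gives $\norm{x-y} \leq \norm{x-F_p(x)} + \norm{F_p(y)-y} \leq \frac{6\xi L_0}{t_0}$.

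Finally, combining both bounds,
\begin{equation*}
\left(\frac{1}{m+1}-\delta\right) m\, t_0\, s_0 \;\leq\; \frac{6\xi L_0}{t_0},
\qquad\text{i.e.,}\qquad
\xi \;\geq\; \left(\frac{1}{m+1}-\delta\right)\frac{m\, t_0^2\, s_0}{6 L_0}.
\end{equation*}
Since $\delta$ can be taken arbitrarily small, this contradicts the strict inequality \eqref{eq:bdry.bnd.required} on $\xi$. There is no real obstacle here: the only mild point is that the hypothesis is a strict inequality and the conclusion from the two lemmas involves $\delta$, so one must observe that the slack in $\xi$ permits choosing $\delta$ small enough to force the contradiction.
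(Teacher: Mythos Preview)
Your proof is correct and follows essentially the same approach as the paper: combine the lower bound on $\norm{x-y}$ from \Lemref{lem:dist.to.star.bdry} with the upper bound from two applications of \Lemref{lem:distort.trilateration} via the triangle inequality. The only cosmetic difference is that the paper phrases it directly (deriving the inequality on $\delta$ that makes $F_p(x)\neq F_p(y)$) rather than by contradiction, and you are slightly more careful in taking $\gsplxt$ to be an $m$-simplex of $\strp$ containing $y$ rather than a boundary simplex.
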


\begin{proof}
  If $x \in \gsplxs \in \strp$ has barycentric coordinate with
  respect to $\p$ larger than $\frac{1}{m+1}-\delta$, and
  $y \in \gsplxt \in \bdry{\strp}$, then Lemmas
  \ref{lem:distort.trilateration} and \ref{lem:dist.to.star.bdry}
  ensure that $F_p(x) \neq F_p(y)$ provided
  \begin{equation*}
    \frac{3\xi L(\gsplxs)}{t(\gsplxs)} 
    +  \frac{3\xi L(\gsplxt)}{t(\gsplxt)} 
    \leq  \left(\frac{1}{m+1}-\delta\right)m s_0t_0,
  \end{equation*}
  which is satisfied by \eqref{eq:bdry.bnd.required} when 
%$\delta$ satisfies \eqref{eq:arb.small}.
  $\delta>0$ satisfies
  \begin{equation*}
    \delta \leq \frac{1}{m+1} - \frac{6L_0\xi}{m s_0 t_0^2}.
  \end{equation*}
\end{proof}

When inequality \eqref{eq:bdry.bnd.required} (and therefore
also inequality \eqref{eq:bary.bnd.required}) is satisfied, we can
employ the embedding lemma (\Lemref{lem:splx.pos.embed}) to guarantee
that $V_p$ is embedded:

\begin{lem}[local homeomorphism]
  \label{lem:local.homeo}
  Suppose $\acplx$ is a compact $m$-manifold complex (without
  boundary), with vertex set $\pts$, and $\man$ is an $m$-manifold. A
  map $H \colon \carrier{\acplx} \to \man$ is a 
%  local homeomorphism
  covering map if the following criteria are satisfied:
  \begin{enumerate}
  \item \textbf{\textup{compatible atlases}}\, There are compatible
    atlases for $H$, with
    $\tccplx_p = \str{p}$ for each $p \in \pts$
    {\rm(\Defref{def:compatible.atlases})}.
  \item \textup{\bf simplex quality}\, For each $p \in \pts$, every
    simplex $\gsplxs \in \strp = \lmap_p(\str{p})$ satisfies
    $s_0 \leq L(\gsplxs) \leq L_0$ and $t(\gsplxs) \geq t_0$
    {\rm (Notation~\ref{not:splx.qual})}.
  \item \textup{\bf distortion control}\, For each $p\in \pts$, the map
    \[
      F_p = \phi_p \circ H \circ \lmap_p^{-1}
      \colon
      \carrier{\strp} \to \rem,
    \]
    when restricted to any $m$-simplex in $\strp$,
    is an orientation-preserving $\xi$-distortion map with
    \[
      \xi < \frac{m s_0 t_0^2}{6(m+1)L_0}
    \]
    {\rm(Definitions \ref{def:orientation.preserving} and
    \ref{def:distortion.map})}.
  \end{enumerate}
\noproof
\end{lem}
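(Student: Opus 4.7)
The plan is to combine the pieces already assembled in this section. For each vertex $p\in\pts$, we want to verify the hypotheses of \Lemref{lem:splx.pos.embed} applied to $F_p\colon \carrier{\strp}\to\rem$ with $V=V_p$, the homothetically shrunk open star from \Secref{sec:star.method}. The orientation-preserving $\xi$-distortion hypothesis makes $F_p$ simplexwise positive by assumption. The condition on $\xi$ in the lemma implies the barycentric bound \eqref{eq:bdry.bnd.required}, which is strictly stronger than \eqref{eq:bary.bnd.required}; so the boundary-separation lemma for $V_p$ guarantees $F_p(V_p)\cap F_p(\carrier{\bdry\ccplx_p})=\emptyset$, and \Lemref{lem:pt.covered.once} applied with $\ccplx=\strp$, $t_0$ the thickness bound, and $\gsplxs$ a largest $m$-simplex in $\strp$ yields a point $y=F_p(b)\in F_p(V_p)\setminus F_p(\carrier{\ccplx_p^{m-1}})$ whose preimage is the single point $b$. (Note $b\in V_p$ since the barycentre has $\p$-coordinate $\tfrac{1}{m+1}$, safely above $\tfrac{1}{m+1}-\delta$.) To apply the embedding lemma we also need $V_p$ to be connected; it is, since it is a homothetic shrink of the carrier of $\strp$ toward $\p$. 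Hence $F_p|_{V_p}$ is a topological embedding.

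Next I would transfer this back to $H$. Since $\phi_p$ and $\lmap_p$ are embeddings, the identity $H|_{\tV_p}=\phi_p^{-1}\circ F_p\circ \lmap_p|_{\tV_p}$ from \eqref{eq:def.Fp} shows that $H$ restricts to a topological embedding on the open set $\tV_p=\lmap_p^{-1}(V_p)\subset\carrier{\acplx}$. Because $\lmap_p$ preserves barycentric coordinates, the sets $\tV_p$ cover $\carrier{\acplx}$: any point $x\in\carrier\acplx$ lies in some $m$-simplex $\gsplxt\in\acplx$, and at least one vertex $p$ of $\gsplxt$ has barycentric coordinate $\geq\tfrac{1}{m+1}$ at $x$, hence (taking $\delta>0$) $x\in\tV_p$. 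Consequently $H$ is a local homeomorphism.

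Finally, I would upgrade local homeomorphism to covering map. By Brouwer's invariance of domain $H(\carrier{\acplx})$ is open in $\man$; by compactness of $\carrier{\acplx}$ it is also closed. Each connected component of $\man$ contains some $H(p)$ by \Remref{rem:touch.all.components}, so the image hits every component and $H$ is surjective. A surjective local homeomorphism from a compact space to a Hausdorff manifold is automatically proper, and a proper local homeomorphism is a covering map, which is the desired conclusion.

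The only genuinely delicate step is the invocation of \Lemref{lem:splx.pos.embed}: one must be careful that the chosen $V_p$ lies in the interior of $\carrier{\ccplx_p}$ (so that the boundary-image disjointness statement is meaningful) and that the single-preimage point $y$ and the connected set $V_p$ are coupled correctly, so that every point of $V_p$ can be joined to a neighbourhood of $y$ without crossing $F_p(\carrier{\bdry\ccplx_p})$. Everything else is a bookkeeping check that the inequality in criterion~(3) implies the two thresholds \eqref{eq:bary.bnd.required} and \eqref{eq:bdry.bnd.required}.
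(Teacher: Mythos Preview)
Your proposal is correct and follows essentially the same route as the paper: the lemma is presented there without a separate proof precisely because the preceding Lemmas~\ref{lem:splx.pos.embed}, \ref{lem:pt.covered.once}, and the boundary-separation lemma for $V_p$ (together with the covering discussion at the start of \Secref{sec:homeo.criteria}) already assemble the argument exactly as you describe. Your additional remarks on the connectedness of $V_p$ and on why $F_p(b)\notin F_p(\carrier{\ccplx_p^{m-1}})$ make explicit points the paper leaves implicit, but there is no divergence in strategy.
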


\subsection{Injectivity}
\label{sec:injectivity}

Having established that $H$ is a covering map, to ensure that $H$ is
injective it suffices to demonstrate that on each component of $\man$
there is a point with only a single point in its preimage. Injectivity
follows since the number of points in the preimage is locally constant
for covering maps.

Since each simplex is embedded by $H$, it is sufficient to show that
for each vertex $q \in \pts$, if $H(q) \in H(\gsplxs)$, then $q$ is a
vertex of $\gsplxs$. This ensures that $H^{-1}(H(q)) = \{q\}$, and
%\Defref{def:compatible.atlases} ensures that 
by \Remref{rem:touch.all.components} each component of $\man$ must
contain the image of a vertex.

In practice, we typically don't obtain this condition directly. The
complex $\acplx$ is \emph{constructed} by means of the local patches
$\ccplx_p$, and it is with respect to these patches that the vertices
behave well. 

\begin{de}[vertex sanity]
  \label{def:vertex.sanity}
  If $H \colon \carrier{\acplx} \to \man$ has compatible atlases
  (\Defref{def:compatible.atlases}), then $H$ exibits \defn{vertex
    sanity} if: for all vertices $p,q \in \pts$, if
  $\phi_p \circ H(q) \in \carrier{\strp} =
  \lmap_p(\carrier{\str{p}})$, then $q$ is a vertex of~$\str{p}$.
\end{de}

Together with the distortion bounds that are imposed on $F_p$,
\Defref{def:vertex.sanity} ensures that the image of a vertex cannot
lie in the image of a simplex to which it does not belong:

\begin{lem}[injectivity]
  \label{lem:injectivity}
  If $H \colon \carrier{\acplx} \to \man$ satisfies the hypotheses of
  \Lemref{lem:local.homeo} as well as \Defref{def:vertex.sanity}, then
  $H$ is injective, and therefore a homeomorphism.
\end{lem}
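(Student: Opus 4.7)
The plan is to leverage the covering-map property supplied by Lemma~\ref{lem:local.homeo}: a covering map has locally constant multiplicity on each connected component of its codomain, so to get injectivity on a component it suffices to exhibit one point with a singleton preimage. By Remark~\ref{rem:touch.all.components}, every component of $\man$ contains $H(p)$ for some vertex $p\in\pts$, so the whole question reduces to proving $H^{-1}(H(p)) = \{p\}$ for each vertex $p$.

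Fix such a $p$ and suppose $x\in H^{-1}(H(p))$. Since the sets $\lmap_q^{-1}(V_q)$ cover $\carrier{\acplx}$, I choose a vertex $q\in\pts$ with $\lmap_q(x)\in V_q$, and set $y:=F_q(\lmap_q(x)) = \phi_q(H(x)) = \phi_q(H(p))$. The first task is to verify $y\in\carrier{\str{\hat q}}$, which is what vertex sanity requires as input. The trilateration bound (Lemma~\ref{lem:distort.trilateration}) gives $\norm{\lmap_q(x)-y}\leq 3\xi L_0/t_0$, while Lemma~\ref{lem:dist.to.star.bdry} places $\lmap_q(x)\in V_q$ at distance at least $(\tfrac1{m+1}-\delta)\,m t_0 s_0$ from $\carrier{\bdry\str{\hat q}}$; the distortion hypothesis of Lemma~\ref{lem:local.homeo} leaves enough slack between these two quantities. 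Definition~\ref{def:vertex.sanity} then yields that $p$ is a vertex of $\str{q}$, so $\lmap_q(p)$ is a well-defined vertex of $\str{\hat q}$, and since $\lmap_q$ agrees with $\phi_q\circ H$ on vertices, $F_q(\lmap_q(x)) = \lmap_q(p)$.

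Now I split on whether $p=q$. If $p=q$, then $\lmap_q(p)=\hat q$ is the central vertex, which is fixed by $F_q$ and lies in $V_q$ (its barycentric coordinate with respect to itself being $1$); since $F_q|_{V_q}$ is an embedding, $F_q(\lmap_q(x)) = F_q(\hat q)$ forces $\lmap_q(x)=\hat q$, whence $x=p$. If $p\neq q$, then $\lmap_q(p)$ is a non-central vertex of the full star $\str{\hat q}$, so $\lmap_q(p)\in\carrier{\bdry\str{\hat q}}$; being a fixed point of $F_q$ it lies in $F_q(\carrier{\bdry\str{\hat q}})$, yet simultaneously $\lmap_q(p) = F_q(\lmap_q(x))\in F_q(V_q)$, contradicting the disjointness $F_q(V_q)\cap F_q(\carrier{\bdry\str{\hat q}})=\emptyset$ supplied by the boundary-separation lemma that underlies Lemma~\ref{lem:local.homeo}.

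The main obstacle is the first step of the second paragraph: establishing $y\in\carrier{\str{\hat q}}$ \emph{without} already invoking vertex sanity, which has this inclusion as a hypothesis rather than a conclusion. The trick is to route $x$ through a vertex $q$ with $\lmap_q(x)\in V_q$, so that the quantitative safety margin $V_q$ enjoys from $\carrier{\bdry\str{\hat q}}$ dominates the trilateration displacement; once that foothold is secured, vertex sanity together with the embedding of $V_q$ and the boundary disjointness close the argument.
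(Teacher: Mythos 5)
Your proof is correct, and while it runs on the same quantitative fuel as the paper's argument (the trilateration bound of \Lemref{lem:distort.trilateration} and the barycentric boundary separation of \Lemref{lem:dist.to.star.bdry}, combined with the distortion bound of \Lemref{lem:local.homeo}(3)), the endgame is organized differently. The paper fixes the $m$-simplex $\gsplxs$ containing the second preimage point $x$, applies vertex sanity in contrapositive form at \emph{every} vertex $p$ of $\gsplxs$ (the image of the offending vertex is outside $\carrier{\strp}$ or on its boundary, since a non-central vertex of a full star lies on the star boundary), and concludes that all $m+1$ barycentric coordinates of $x$ are below $\frac{1}{m+1}$ --- contradicting that they sum to $1$; no case split and no appeal to the internal structure of the local-homeomorphism proof is needed beyond the covering property. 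You instead pick a \emph{single} chart vertex $q$ via the cover by the shrunk stars $V_q$, use the safety margin of $V_q$ to verify the hypothesis of vertex sanity in the forward direction (this step tacitly uses $\bdry{\carrier{\str{\hat q}}}=\carrier{\bdry{\str{\hat q}}}$ and a choice of $\delta<\frac{1}{2(m+1)}$, which you should make explicit), and then resolve the dichotomy $p=q$ versus $p\neq q$ by invoking, respectively, the embedding $F_q|_{V_q}$ (from \Lemref{lem:splx.pos.embed} via \Lemref{lem:pt.covered.once}) and the disjointness $F_q(V_q)\cap F_q(\carrier{\bdry{\str{\hat q}}})=\emptyset$ from the unnumbered boundary-separation lemma for $V_p$. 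This is legitimate --- those lemmas are stated independently and hold under the hypotheses of \Lemref{lem:local.homeo}, even though its stated conclusion is only the covering property --- and your route has the small advantage of making transparent \emph{which} chart witnesses injectivity; the paper's route is slightly leaner in that it needs no case analysis and uses only the two quantitative lemmas plus the covering property.
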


\begin{proof}
  % Observe that for any simplex $\gsplxs \in \acplx$, the piecewise
  % linear coordinate maps $\lmap_p$ preserve the barycentric
  % coordinates (by definition).
  Towards a contradiction, suppose that $H(q) \in H(\gsplxs)$ and that
  $q$ is not a vertex of the $m$-simplex $\gsplxs$. This means there
  is some $x \in \gsplxs$ such that $H(x) = H(q)$. Let $p$ be a vertex
  of~$\gsplxs$.  The vertex sanity hypothesis
  (\Defref{def:vertex.sanity}) implies that $\phi_p \circ H(q)$ must
  be either outside of $\carrier{\strp}$, or belong to its boundary.
  Thus Lemmas \ref{lem:dist.to.star.bdry} and
  \ref{lem:distort.trilateration}, and the bound on $\xi$ from
  \Lemref{lem:local.homeo}(3) imply that the barycentric coordinate of
  $x$ with respect to $p$ must be smaller than $\frac{1}{m+1}$: Let
  $\x = \lmap_p(x)$, and $\hat{\gsplxs}=\lmap_p(\gsplxs)$.
  \Lemref{lem:distort.trilateration} says that
  \begin{equation*}
    \abs{F_p(\x)-\x} \leq \frac{3\xi L_0}{t_0} < \frac{m s_0 t_0}{2(m+1)}
    \leq \frac{a_0}{2(m+1)},
  \end{equation*}
  where $a_0$ is a lower bound on the altitudes of $\p$, as in
  \Lemref{lem:dist.to.star.bdry}. Since $F_p(\x)=\phi_p \circ H(x)$ is
  at least as far away from $\x$ as $\bdry{\strp}$,
  \Lemref{lem:dist.to.star.bdry} implies that the barycentric
  coordinate of $\x \in \hat{\gsplxs}$ with respect to $\p$ must be
  no larger than $\frac{1}{2(m+1)}$. Since $\lmap_p$ preserves
  barycentric coordinates, and the argument works for any vertex $p$
  of $\gsplxs$, we conclude that all the barycentric coordinates of
  $x$ in $\gsplxs$ are strictly less than $\frac{1}{m+1}$. We have reached a
  contradiction with the fact that the barycentric coordinates
  of $x$ must sum to 1.
\end{proof}

\subsection{Main result}

To recap, Lemmas \ref{lem:local.homeo} and \ref{lem:injectivity} yield
the following triangulation result.  In the bound on $\xi$ from
\Lemref{lem:local.homeo}(3), we replace the factor $\frac{m}{m+1}$
with $\frac12$, the lower bound attained when $m=1$.

\begin{thm}[triangulation]
  \label{thm:metric.triang}
  Suppose $\acplx$ is a compact $m$-manifold complex (without
  boundary), with vertex set $\pts$, and $\man$ is an $m$-manifold. A
  map $H \colon \carrier{\acplx} \to \man$ is a homeomorphism if the
  following criteria are satisfied:
  \begin{enumerate}
  \item \textbf{\textup{compatible atlases}}\, There are compatible
    atlases
\[
\{(\tccplx_p,\lmap_p)\}_{p\in\pts}, \quad \tccplx_p \subset \acplx, 
\quad and \quad
\{(U_p,\phi_p)\}_{p\in\pts}, \quad U_p \subset \man,
\]
    for $H$, with
    $\tccplx_p = \str{p}$ for each $p \in \pts$, the vertex set of~$\acplx$
    {\rm(\Defref{def:compatible.atlases})}.
  \item \textup{\bf simplex quality}\, For each $p \in \pts$, every
    simplex $\gsplxs \in \strp = \lmap_p(\str{p})$ satisfies
    $s_0 \leq L(\gsplxs) \leq L_0$ and $t(\gsplxs) \geq t_0$ {\rm
      (Notation~\ref{not:splx.qual})}.
  \item \textup{\bf distortion control}\, For each $p\in \pts$, the map
    \[
      F_p = \phi_p \circ H \circ \lmap_p^{-1}
      \colon
      \carrier{\strp} \to \rem,
    \]
    when restricted to any $m$-simplex in $\strp$,
    is an orientation-preserving $\xi$-distortion map with
    \[
      \xi < \frac{s_0 t_0^2}{12 L_0}
    \]
    {\rm(Definitions \ref{def:orientation.preserving} and
    \ref{def:distortion.map})}.
  \item \textup{\bf vertex sanity}\, For all vertices $p,q \in \pts$,
    if $\phi_p \circ H(q) \in \carrier{\strp}$, then $q$ is a vertex
    of~$\str{p}$.
  \end{enumerate}
\noproof
\end{thm}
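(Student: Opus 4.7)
The plan is to assemble the theorem directly from \Lemref{lem:local.homeo} and \Lemref{lem:injectivity}, with a small bookkeeping step to reconcile the stated distortion bound with the one in \Lemref{lem:local.homeo}. First I would observe that the hypothesis $\xi < \frac{s_0 t_0^2}{12 L_0}$ in item (3) is a uniform strengthening of the bound $\xi < \frac{m s_0 t_0^2}{6(m+1) L_0}$ appearing in \Lemref{lem:local.homeo}(3), because $\frac{m}{m+1} \geq \frac{1}{2}$ for every integer $m \geq 1$, so
\[
\frac{s_0 t_0^2}{12 L_0} \;\leq\; \frac{m s_0 t_0^2}{6(m+1) L_0}.
\]
Thus hypotheses (1)--(3) of the theorem imply the corresponding hypotheses of \Lemref{lem:local.homeo}.

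Next I would invoke \Lemref{lem:local.homeo} to conclude that $H$ is a covering map. Since $\carrier{\acplx}$ is compact and each $U_p$ in the compatible atlas is connected, \Remref{rem:touch.all.components} guarantees that $H$ meets every component of $\man$; combined with Brouwer's invariance of domain (as noted in the opening of \Secref{sec:homeo.criteria}), this delivers surjectivity and the covering map property.

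Finally, I would apply \Lemref{lem:injectivity}, whose hypotheses are exactly those of \Lemref{lem:local.homeo} plus the vertex sanity condition (\Defref{def:vertex.sanity}), given as item (4) of the theorem. That lemma gives injectivity of $H$. A bijective covering map is automatically a homeomorphism (it is a continuous open bijection, being a local homeomorphism), which completes the argument. No step presents a genuine obstacle: the only subtlety is the numerical reconciliation between the dimension-dependent $\frac{m}{m+1}$ factor in \Lemref{lem:local.homeo} and the clean dimension-free constant $\frac{1}{12}$ stated in the theorem, which is handled by the inequality above.
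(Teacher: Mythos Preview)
Your proposal is correct and matches the paper's own proof essentially verbatim: the paper states that the theorem is obtained from \Lemref{lem:local.homeo} and \Lemref{lem:injectivity}, with the only additional remark being that the factor $\frac{m}{m+1}$ in the bound of \Lemref{lem:local.homeo}(3) is replaced by its lower bound $\frac12$ (attained at $m=1$), which is exactly the numerical reconciliation you carry out.
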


\begin{remark}
  The constants $L_0$, $s_0$, and $t_0$ that constrain the simplices
  in the local complex $\strp$, and the metric distortion of $F_p$ in
  \Thmref{thm:metric.triang} can be considered to be local, i.e., they
  may depend on $p\in\pts$.
\end{remark}

%%% Local Variables:
%%% mode: latex
%%% TeX-master: "homeo"
%%% End:

\section{Metric and differentiable distortion maps}
\label{sec:distortion}

In this short section we gather some useful lemmas on general
distortion maps, and on differentiable distortion maps.

\subsection{Metric distortion maps}
\label{sec:metric.distortion}

The definition of a distortion map (\Defref{def:distortion.map}) makes 
sense in a more general context: a
map $F: (X,d_X) \to (Y,d_Y)$ between metric spaces is a
\defn{$\xi$-distortion map} if
\begin{equation}
\label{eq:generic.distortion.map}
  \abs{d_Y(F(x),F(y)) - d_X(x,y)} \leq \xi d_X(x,y) \quad \text{for
    all }x,y \in X.
\end{equation}

\begin{lem}[inverse and composition of distortion maps]
  \label{lem:inv.comp.distort}
  \begin{enumerate}
    \item  If $F: (X,d_X) \to (Y,d_Y)$ is a
      $\xi$-distortion map with $\xi < 1$, then $F^{-1}$ is a
      $\frac{\xi}{1-\xi}$-distortion map.
    \item  Suppose $F_i:
      (X_i,d_{X_i})\to(X_{i+1}, d_{X_{i+1}})$,\, $1\leq i \leq k$, are
      respectively $\xi_i$-distortion maps. Then
      \begin{equation*}
        F_k \circ F_{k-1} \circ \cdots \circ F_1: X_1 \to X_{k+1}
      \end{equation*}
      is a $\bigl(\sum_{W \in \npwr{\{k\}}} \prod_{i\in W} \xi_i
      \bigr)$-distortion map, where $\npwr{\{k\}}$ is the set of nonempty
      subsets of $\{1, \ldots, k\}$.

      In particular, $F_2\circ F_1$ is a
      $(\xi_1{+}\xi_2{+}\xi_1\xi_2)$-distortion map.
  \end{enumerate}
\end{lem}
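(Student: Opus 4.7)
The plan is to prove part (1) directly from the definition, and part (2) by induction on $k$ with the two-factor composition as the base case.

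For part (1), I would first note that $\xi<1$ forces $F$ to be injective on $X$, since $d_Y(F(x),F(y)) \geq (1-\xi) d_X(x,y)$ is strictly positive when $x\neq y$. Hence $F^{-1}$ is a well-defined map on $F(X)$. Given $u=F(x)$ and $v=F(y)$ in $F(X)$, the distortion inequality rearranges to $d_X(x,y) \leq \frac{1}{1-\xi} d_Y(u,v)$. Then
\[
\abs{d_X(F^{-1}(u),F^{-1}(v)) - d_Y(u,v)} = \abs{d_X(x,y) - d_Y(u,v)} \leq \xi d_X(x,y) \leq \frac{\xi}{1-\xi} d_Y(u,v),
\]
which is exactly the claimed bound.

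For part (2), I would first handle $k=2$ by a triangle-inequality-style estimate on $|d_{X_3}(F_2(F_1(x)),F_2(F_1(y))) - d_{X_1}(x,y)|$: split off $d_{X_2}(F_1(x),F_1(y))$ as an intermediate term, apply the distortion bound for $F_2$ to the first part and for $F_1$ to the second, and use $d_{X_2}(F_1(x),F_1(y)) \leq (1+\xi_1) d_{X_1}(x,y)$ to obtain the constant $\xi_1 + \xi_2 + \xi_1\xi_2$. This matches $\sum_{W \in \npwr{\{2\}}} \prod_{i\in W}\xi_i$ and gives the claimed special case.

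For the inductive step, assume $G = F_{k-1} \circ \cdots \circ F_1$ is a $\xi_G$-distortion map with $\xi_G = \sum_{W \in \npwr{\{k-1\}}} \prod_{i\in W} \xi_i$. Applying the two-factor case to $F_k \circ G$ yields a distortion constant of $\xi_G + \xi_k + \xi_k \xi_G$. The only remaining verification is the combinatorial identity
\[
\xi_G + \xi_k + \xi_k \xi_G = \sum_{W \in \npwr{\{k\}}} \prod_{i\in W} \xi_i,
\]
which follows by splitting the sum on the right according to whether $k \in W$ or not: subsets not containing $k$ contribute $\xi_G$, the singleton $\{k\}$ contributes $\xi_k$, and subsets of the form $W' \cup \{k\}$ with $W' \in \npwr{\{k-1\}}$ contribute $\xi_k \xi_G$. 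None of the steps present any real obstacle; the slightly fiddly part is simply keeping the chain of metric inequalities lined up so that the bound is expressed in terms of $d_{X_1}$ throughout.
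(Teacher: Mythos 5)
Your proposal is correct and follows essentially the same route as the paper's proof: part (1) by rewriting the distortion inequality in terms of $u=F(x)$, $v=F(y)$ and feeding the bound $d_X(x,y)\leq \frac{1}{1-\xi}d_Y(u,v)$ back into it, and part (2) by the triangle-inequality estimate for two maps followed by induction and the subset-splitting identity. The only addition is your explicit remark that $\xi<1$ makes $F$ injective so $F^{-1}$ is well defined, which is a harmless (and welcome) clarification.
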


\begin{proof}
  (1)\, Let $u=F(x)$ and $v=F(y)$. Then
  \eqref{eq:generic.distortion.map} becomes
  \begin{equation}
    \label{eq:raw.inv.distort}
    \abs{d_X(F^{-1}(u),F^{-1}(v)) - d_Y(u,v)} \leq \xi
    d_X(F^{-1}(u),F^{-1}(v)). 
  \end{equation}
  But it follows from \eqref{eq:raw.inv.distort} that
  $d_X(F^{-1}(u),F^{-1}(v)) \leq \frac{1}{1-\xi}d_Y(u,v)$, and
  plugging this back into \eqref{eq:raw.inv.distort} yields the result.

  \smallbreak\noindent
(2)\, Using the observation that
\begin{equation*}
  d_{X_2}(F_1(x),F_1(y)) \leq (1 + \xi_1)d_{X_1}(x,y),
\end{equation*}
we find
\begin{equation*}
  \begin{split}
    &\abs[big]{d_{X_3}(F_2\circ F_1(x), F_2\circ F_1(x)) - d_{X_1}(x,y)}
\\
% &\phantom{d_{X_3}F_2\circ}
% = \bigl\lvert d_{X_3}(F_2\circ F_1(x), F_2\circ F_1(x))
%       - d_{X_2}(F_1(x), F_1(y))
% \\
%   &\hskip220pt
% + d_{X_2}(F_1(x), F_1(y))- d_{X_1}(x,y) \bigr\rvert \\
&\phantom{d_{X_3}F_2\circ}
\leq \abs[big]{ d_{X_3}(F_2\circ F_1(x), F_2\circ F_1(x))
      - d_{X_2}(F_1(x), F_1(y))}
\\
&\hskip220pt
+ \abs[big]{d_{X_2}(F_1(x), F_1(y))- d_{X_1}(x,y)} \\
&\phantom{d_{X_3}F_2\circ}
\leq \xi_2 d_{X_2}(F_1(x), F_1(y))
+ \xi_1 d_{X_1}(x,y) \\
&\phantom{d_{X_3}F_2\circ}
\leq \xi_2 (1 + \xi_1)d_{X_1}(x,y)
+ \xi_1 d_{X_1}(x,y) \\
&\phantom{d_{X_3}F_2\circ}
= (\xi_1 +\xi_2  + \xi_2\xi_1)d_{X_1}(x,y).
  \end{split}
\end{equation*}
This establishes the bound for the composition of two maps, but it
also serves as the inductive step for the general bound. If $G$ is the
$\eta$-distortion map defined by $G = F_{k-1}\circ \cdots \circ F_1$,
then $F_k \circ G$ is a $(\xi_k {+} \eta {+} \xi_k \eta)$-distortion
map. Now notice that 
\begin{equation*}
\sum_{W \in \npwr{\{k\}}} \prod_{i\in W} \xi_i
=
\xi_k +   \left(\sum_{W \in \npwr{\{k-1\}}} \prod_{i\in W} \xi_i \right)
+ \xi_k \left(\sum_{W \in \npwr{\{k-1\}}} \prod_{i\in W} \xi_i \right).
\end{equation*}
\end{proof}

\subsection{Differentiable distortion maps}
\label{sec:differentiable.distortion}

Although the homeomorphism demonstration that yields
\Thmref{thm:metric.triang} makes no explicit requirement of
differentiability, it is convenient to exploit differentiability when
it is available. We collect here some observations relating metric
distortion and bounds on the differential of a map.

Recall that if $F\colon U\subseteq \R^m \to \R^m$ is differentiable,
then the \defn{differential} of $F$ at $x$ is the linear map defined
by
\begin{equation*}
  dF_x(v) = \frac{d}{dt}F\circ \alpha(t)\big|_{t=0},
\end{equation*}
where $\alpha \colon I\subset \R \to \R^m$ is any curve such that
$\alpha(0)=x$ and $\alpha'(0) = v$, where $\alpha'$ is the derivative
with respect to $t$.

Bounds on the differential of $F$ are closely related to the metric
distortion of $F$. If $A\colon U\subseteq \R^m \to \R^m$ is a linear
map, we let $\onorm{A}$ denote the operator norm:
$\onorm{A} = \sup_{\norm{v}=1}\norm{Av}$. Associated with $A$ are $m$
nonnegative numbers called the \defn{singular values} of $A$, denoted
$s_i(A)$,\, $1\leq i \leq m$, ordered such that $s_i(A) \geq s_j(A)$
if $i\leq j$.  We only mention the singular values because they
provide a notational convenience. The largest singular value is
defined by $s_1(A)=\onorm{A}$, and the smallest singular value is
$s_m(A)=\inf_{\norm{v}=1}\norm{Av}$. 

\begin{lem}
  \label{lem:distort.bnd.differential}
  If $F\colon U \subseteq \R^m \to \R^m$ is a differentiable
  $\xi$-distortion map, then
  \begin{equation*}
    \abs{s_i(dF_x)-1} \leq \xi
    \quad \text{for all }x\in U \text{ and } 1 \leq i \leq m.
  \end{equation*}
\end{lem}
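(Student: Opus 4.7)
The plan is to work directly from the definition of the differential. Fix a point $x \in U$ and a unit vector $v \in \R^m$, and take the straight-line curve $\alpha(t) = x + tv$, which satisfies $\alpha(0) = x$ and $\alpha'(0) = v$. Then $dF_x(v)$ is the limit as $t \to 0$ of $\bigl(F(x+tv) - F(x)\bigr)/t$, provided $t$ is small enough that $x + tv \in U$.

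Next I would apply the $\xi$-distortion inequality to the two points $x$ and $x + tv$. Since $\norm{(x+tv) - x} = \abs{t}$, the distortion bound gives
\[
(1-\xi)\abs{t} \;\leq\; \norm{F(x+tv) - F(x)} \;\leq\; (1+\xi)\abs{t},
\]
and dividing by $\abs{t}$ and passing to the limit yields $(1-\xi) \leq \norm{dF_x(v)} \leq (1+\xi)$ for every unit vector $v$.

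Taking the supremum and infimum over unit $v$ gives $s_1(dF_x) = \onorm{dF_x} \leq 1+\xi$ and $s_m(dF_x) = \inf_{\norm{v}=1}\norm{dF_x v} \geq 1-\xi$. Since the singular values are ordered $s_1 \geq s_2 \geq \cdots \geq s_m$, every intermediate $s_i(dF_x)$ is sandwiched between $1-\xi$ and $1+\xi$, which is exactly $\abs{s_i(dF_x) - 1} \leq \xi$.

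I don't anticipate a real obstacle here: the only subtlety is making sure $t$ is taken small enough that $x + tv$ remains in the domain $U$ (so that the distortion hypothesis can be invoked), which is fine since $U$ is open. Everything else is just unwinding the definitions.
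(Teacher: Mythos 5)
Your proposal is correct and matches the paper's proof essentially verbatim: the same straight-line curve $\alpha(t)=x+tv$, the same application of the distortion inequality followed by dividing by $\abs{t}$ and passing to the limit, and the same conclusion via the characterizations $s_1(dF_x)=\sup_{\norm{v}=1}\norm{dF_x v}$ and $s_m(dF_x)=\inf_{\norm{v}=1}\norm{dF_x v}$ together with the ordering of the singular values. No gaps.
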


\begin{proof}
  For $x\in U$, and $v \in T_x\R^m = \R^m$ with $\norm{v}=1$, let 
$\alpha(t)= x + tv$. Since $F$ is a $\xi$-distortion map, we have
\begin{equation*}
  (1-\xi)\norm{(x+tv)-x} \leq \norm{F(x+tv)-F(x)} \leq (1+\xi)\norm{(x+tv)-x},
\end{equation*}
so for all $t\neq 0$,
\begin{equation*}
  1-\xi \leq \frac{\norm{F(x+tv) - F(x)}}{\abs{t}} \leq 1+\xi.
\end{equation*}
Since $F$ is differentiable, 
\begin{equation*}
\lim_{t\to0}\frac{\norm{F(x+tv) - F(x)}}{\abs{t}}
=  \norm{\lim_{t\to0}\frac{F(x+tv) - F(x)}{t}}
= \norm{dF_x(v)},
\end{equation*}
so
\begin{equation*}
  1-\xi \leq \norm{dF_x(v)} \leq 1+\xi,
\end{equation*}
which yields the claimed result.
\end{proof}

So a bound on the metric distortion of a differentiable map directly
yields the same bound on the amount the singular values of the
differential can differ from $1$. We are interested in a converse
assertion: we want to bound the metric distortion of $F$, given a
bound on (the singular values of) its differential. This can only be
done with caveats.

\begin{lem}
  \label{lem:differential.bnd.distort}
  Suppose $\conset$ is a convex set, $\conset \subseteq U \subseteq \rem$, and
  $F\colon U \to \rem$ is a differentiable map such that
  $F(\conset) \subseteq V \subseteq F(U)$ for some convex set $V$.
  If $F$ is injective, and
  \begin{equation*}
    \abs{s_i(dF_x)-1} \leq \xi < 1, \quad \text{for all }
    x\in U \text{ and } 1 \leq i \leq m,
  \end{equation*}
  then $F|_{\conset}$ is a $\xi$-distortion map.
\end{lem}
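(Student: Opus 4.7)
The plan is to establish the two inequalities $(1-\xi)\norm{x-y}\leq\norm{F(x)-F(y)}\leq(1+\xi)\norm{x-y}$ separately, using convexity of $\conset$ for the upper bound and convexity of $V$ for the lower bound. The singular value hypothesis gives $\onorm{dF_z}=s_1(dF_z)\leq 1+\xi$ and $\inf_{\norm{v}=1}\norm{dF_z(v)}=s_m(dF_z)\geq 1-\xi$ for all $z\in U$, which is what I will feed into each side.

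For the upper bound, I would pick $x,y\in\conset$ and parametrize the segment by $\gamma(t)=x+t(y-x)$, $t\in[0,1]$. Convexity of $\conset$ keeps $\gamma\subset U$, so the composition $F\circ\gamma$ is a $C^1$ curve from $F(x)$ to $F(y)$. The fundamental theorem of calculus gives $F(y)-F(x)=\int_0^1 dF_{\gamma(t)}(y-x)\,dt$, hence $\norm{F(y)-F(x)}\leq\int_0^1\onorm{dF_{\gamma(t)}}\norm{y-x}\,dt\leq(1+\xi)\norm{y-x}$.

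For the lower bound, the idea is dual: use the straight segment in $V$ rather than in $\conset$. Because $s_m(dF_z)\geq 1-\xi>0$ for all $z\in U$, every differential $dF_z$ is invertible; combined with the global injectivity hypothesis, the inverse function theorem yields that $F\colon U\to F(U)$ is a $C^1$ diffeomorphism with $\onorm{(dF_z)^{-1}}=1/s_m(dF_z)\leq 1/(1-\xi)$. For $x,y\in\conset$, the segment $\beta(s)=F(x)+s(F(y)-F(x))$, $s\in[0,1]$, lies in $V\subseteq F(U)$ by convexity of $V$, so $F^{-1}\circ\beta$ is a $C^1$ curve in $U$ joining $x$ and $y$. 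Its length is at least $\norm{x-y}$, giving
\begin{equation*}
\norm{x-y}\;\leq\;\int_0^1\onorm{dF^{-1}_{\beta(s)}}\norm{F(y)-F(x)}\,ds\;\leq\;\frac{1}{1-\xi}\norm{F(y)-F(x)},
\end{equation*}
which rearranges to $\norm{F(y)-F(x)}\geq(1-\xi)\norm{x-y}$.

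The main subtlety is the lower bound: the convexity of $\conset$ is not enough, because the straight segment in $\conset$ need not minimize length after applying $F$. That is exactly why the hypothesis includes a convex $V$ with $F(\conset)\subseteq V\subseteq F(U)$ — the convex set must live on the target side so that the straight segment can be pulled back through $F^{-1}$. The other delicate point is verifying that $F^{-1}$ is actually available and differentiable on all of $V$; here the combination of global injectivity and the uniform lower bound $s_m(dF_z)\geq 1-\xi>0$ is essential, and it is the one place the strict inequality $\xi<1$ is used.
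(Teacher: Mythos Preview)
Your proof is correct and follows essentially the same approach as the paper: the upper bound comes from integrating along the straight segment in $\conset$ and using $\onorm{dF}\leq 1+\xi$, and the lower bound comes from integrating along the straight segment $[F(x),F(y)]\subset V\subseteq F(U)$ and using $\onorm{dF^{-1}}\leq (1-\xi)^{-1}$. The paper's proof is slightly terser but the argument is identical.
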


\begin{proof}
  Since $\conset$ is convex, the length of the image of the line segment
  between $x$ and $y$ provides an upper bound on the distance between
  $F(x)$ and $F(y)$, and this length can be bounded because  of the
  bound on $dF$:
  \begin{equation}
    \label{eq:up.bnd.length}
    \norm{F(y) - F(x)} \leq \int_0^1 \norm{dF_{x +t(y-x)}(y-x)} \, dt
    \leq (1 +\xi)\norm{y-x}. 
  \end{equation}
%  Thus $\norm{F(y) - F(x)} - \norm{y-x} \leq \xi\norm{y-x}$.

  To get a lower bound we use the fact that injectivity and the bound
  on the singular values imply that $F$ is invertible. For any point
  $z=F(x)$ we have
  \begin{equation*}
    \onorm{dF^{-1}_z} = s_m(dF_x)^{-1} \leq (1-\xi)^{-1}.
  \end{equation*}
  Since, for $x,y\in\conset$ the segment $[F(x),F(y)]$ is contained in
  $F(U)$, we can use the same argument as in \eqref{eq:up.bnd.length},
  but using $F^{-1}$ instead of $F$, so
  \begin{equation*}
    \norm{x-y} \leq (1-\xi)^{-1} \norm{F(x)-F(y)}.
  \end{equation*}
  Therefore, combining with \eqref{eq:up.bnd.length} we have
  \begin{equation*}
    (1-\xi)\norm{x-y} \leq \norm{F(x)-F(y)}
      \leq (1+\xi) \norm{x-y}.
  \end{equation*}
\end{proof}

%%% Local Variables:
%%% mode: latex
%%% TeX-master: "homeo"
%%% End:

\section{Submanifolds of Euclidean space}
\label{sec:subman}

As a specific application of \Thmref{thm:metric.triang}, we consider a
smooth (or at least $C^2$) compact $m$-dimensional submanifold of
Euclidean space: $\man \subset \R^N$. A simplicial complex $\acplx$ is
built whose vertices are a finite set $\pts$ sampled from the manifold:
$\pts \subset \man$. The motivating model for this setting is the
tangential Delaunay complex \cite{boissonnat2014tancplx}. In that case
$\acplx$ is constructed as a subcomplex of a weighted Delaunay
triangulation of $\pts$ in the ambient space $\R^N$, so it is necessarily
embedded. However, in general we do not need to assume \emph{a priori}
that $\acplx$ is embedded in $\R^N$. 
(This does not force us to consider $\acplx$ to be abstract in the
combinatorial sense. In particular, the simplices are Euclidean
simplices, not just sets of vertices.)
Instead, we assume only that the
embedding of the vertex set $\pts \hookrightarrow \R^N$ defines an
\defn{immersion} $\iota\colon\carrier{\acplx} \to \R^N$. By this we
mean that for any vertex $p\in \pts$ we have that the restriction of
$\iota$ to $\carrier{\str{p}}$ is an embedding.

At each point $x\in \man$, the tangent space $T_x\man \subset T_x\amb$
is naturally viewed as an $m$-dimensional affine flat in $\R^N$, with
the vector-space structure defined by taking the distinguished point
$x$ as the origin.  The maps involved in \Thmref{thm:metric.triang}
will be defined by projection maps. The coordinate charts are defined
using the orthogonal projection $ \projp\colon \R^N \to T_p\man.  $ As
discussed in \Secref{sec:orthog.proj}, for a sufficiently small
neighbourhood $U_p \subset \man$, we obtain an embedding
\begin{equation*}
  \phi_p = \projp|_{U_p}
  \colon
  U_p \subset \man \to T_p\man \cong \R^m,
\end{equation*}
which will define our coordinate maps for $\man$. 

For the map $H\colon \carrier{\acplx} \to \man$, we will employ the
closest point projection map defined in \Secref{sec:subman.geom} and
discussed further in 
\Secref{sec:dist.closest.pt.proj}.
There is an open
neighbourhood $\UM \subset \R^N$ of $\man$ on which each point
has a unique closest point on $\man$, so the closest-point projection
map $\projM\colon \UM \to \man$ is well-defined. We define
$H = \projM \circ \iota$.

As demanded by \Defref{def:compatible.atlases}, for each $p\in \pts$
the coordinate map $\lmap_p$ for $\acplx$ is the secant map of
$\phi_p \circ H$ restricted to $\tccplx_p = \carrier{\str{p}}$, and
since $\projp$ is already a linear map, and $\projM$ is the identity on the
vertices, this means
$\lmap_p = \projp \circ \iota|_{\carrier{\str{p}}}$.

In Sections \ref{sec:subman.geom} and \ref{sec:flats.n.angles} we
review some of the geometric concepts and standard results that we
will use in the rest of the section. 
% We take advantage of the fact
% that $\man$ has a natural Riemmannian structure defined by its
% embedding in $\R^N$, and exploit concepts such as the shape opertor
% and the second fundamental form. 
% However, 
In order to bound the metric
distortion of the maps $F_p = \phi_p \circ H \circ \lmap_p^{-1}$ via
\Lemref{lem:inv.comp.distort}, we are free to choose any convenient
metric on $\man$. As is common in computational geometry, we employ
here the metric of the ambient space $\R^N$, rather than the intrinsic
metric of geodesic distances.

\subsection{Submanifold geometry}
\label{sec:subman.geom}

Since $\man \subset \amb$ is compact, for any $x \in \amb$ there is a
point $z \in \man$ that realizes the distance to $\man$, i.e.,
\begin{equation*}
  \dtoM{x} :=
\distEN{x}{\man}
:= \inf_{y\in \man}\distEN{x}{y} = \distEN{x}{z}.
\end{equation*}

The \defn{medial axis} of $\man$ is the set of points
$\ax(M) \subset\amb$ that have more than one such closest point on
$\man$. In other words, if $x \in \ax(M)$, then an open ball
$\ballEN{x}{r}$, centred at $x$ and of radius $r=\dtoM{x}$, will be
tangent to $\man$ at two or more distinct points. The \defn{cut locus}
of $\man$ is the closure of the medial axis, and is denoted
$\claxM$. 
The \defn{reach} of $\man$ is defined by $\rchM :=
\distEN{\man}{\claxM}$. 
We observe below that
for compact $C^2$ submanifolds, $\rchM>0$.
Thus, by definition, every point $x$ in the open
neighbourhood $\UM := \amb \setminus \claxM$ of $\man$ has a unique
closest point $\px \in \man$. The \defn{closest-point projection} map
\begin{equation*}
  \projM\colon \UM \to \man,
\end{equation*}
takes $x$ to this closest point: $\projM(x)=\px$.

To each point $x\in \man$, we associate a \defn{normal space}
\[
N_x\man
= \{n\in T_x\amb \mid \dotprod{n}{v}= 0 \ \forall v\in T_x\man\}
\]
of vectors orthogonal to $T_x\man$. Thus
$T_x\amb = N_x\man \oplus T_x\man$. As with the tangent space, the
normal space at $x$ is naturally identified with an affine flat in
$\amb$. It has dimension $k=N-m$ and is orthogonal to $T_x\man$.
% The distinguished point $x$ serves as the
% origin for the vector-space structure.

%\subsubsection{The tubular neighbourhood theorem} 

The tubular neighbourhood theorem is a well known result in
differential topology. The statement presented here is adapted from
\cite[Theorem~4.8(13)]{federer1959}, and the regularity assertions in
the second paragraph are demonstrated in \cite{foote1984}.

\begin{thm}[tubular neighbourhood]
  \label{thm:tube.nbrhd.thm}
  There is a natural structure on $\UM$ given by partitioning it into
  subsets of points that all project via $\projM$ onto the same point
  of $\man$. This allows us to identify $\UM$ as a portion of the
  \defn{normal bundle} of $\man$,
\begin{equation*}
  \NM := \{(x,n) \in \R^N \times \R^N \mid x \in \man,\ n \in
  N_x\man \}.
\end{equation*}
The map $\psi: \UM \to \NM$ given by $x \mapsto (x, x-\projM(x))$ is
a diffeomorphism onto its image, with inverse
$\psi(\UM) \subset \NM \to \amb$ given by $(x,n) \mapsto x+n$.

If $\man$ is a $C^j$ submanifold, then $\psi$ is a $C^{j-1}$
diffeomorphism onto its image \cite{foote1984}, and the function
$\gdtoM$ has is $C^j$ on $\UM \setminus \man$.
\end{thm}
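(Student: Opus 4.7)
The plan is to establish the statement in three stages. First, I would verify that $\psi$ is well-defined by showing that $x - \projM(x) \in N_{\projM(x)}\man$ for all $x \in \UM$. Since $\px := \projM(x)$ uniquely minimizes $y \mapsto \norm{x-y}^2$ over $y \in \man$, the first-order optimality condition along any tangent direction $v \in T_{\px}\man$ gives $\dotprod{x-\px}{v} = 0$, so the displacement is normal. Continuity of $\projM$ on $\UM$ is a standard consequence of the uniqueness of the closest point: any subsequential limit of $\projM(x_n)$ for $x_n \to x$ realizes the distance from $x$ to $\man$ and so must equal $\projM(x)$.

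Second, I would introduce the endpoint map $E \colon \NM \to \amb$, $(y,n) \mapsto y+n$, which is the natural candidate for the inverse of $\psi$. By definition $E \circ \psi = \id_{\UM}$, and for $(y,n) \in \psi(\UM)$ the fact that $y$ is the closest point of $y+n$ on $\man$ yields $\psi \circ E = \id$ on $\psi(\UM)$, so $\psi$ is a bijection onto its image. For the diffeomorphism claim, the normal bundle $\NM$ inherits a $C^{j-1}$ structure from the $C^j$ submanifold via local orthonormal frames for $N_y\man$; the map $E$ is then $C^{j-1}$. At $(y, 0)$ the differential $dE$ is the identity on both $T_y\man$ and $N_y\man$, hence invertible, so the inverse function theorem gives a local $C^{j-1}$ diffeomorphism near the zero section. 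The hard part will be extending this to all of $\psi(\UM)$: one computes $dE_{(y,n)}$ restricted to tangential variations as $\Id - S_n$, where $S_n$ is the shape operator of $\man$ in the direction $n$, and must show that $y+n \in \UM$ precludes $\det(\Id - S_n) = 0$. This is the analytic characterization of the cut locus $\claxM$ through coincidences of $\norm{n}$ with reciprocals of principal curvatures, and is the technical core of the statement.

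Third, for the $C^j$-regularity of $\gdtoM$ on $\UM \setminus \man$, I would exploit the orthogonality established in the first step. For $w \in \R^N$, the chain rule gives
\begin{equation*}
d(\gdtoM^2)_x(w) = 2\dotprod{x - \projM(x)}{w - d(\projM)_x(w)} = 2\dotprod{x - \projM(x)}{w},
\end{equation*}
since $d(\projM)_x(w) \in T_{\projM(x)}\man$ lies in the orthogonal complement of $x - \projM(x)$. Hence $\nabla \gdtoM^2(x) = 2(x - \projM(x))$, and the right-hand side is $C^{j-1}$ because $\projM$ is; so $\gdtoM^2$ is in fact $C^j$, and $\gdtoM$ is likewise $C^j$ on $\UM \setminus \man$ by smoothness of $\sqrt{\cdot}$ away from $0$.
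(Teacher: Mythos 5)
You should first note what the paper itself does with this statement: it gives no proof at all, but quotes it as a known result, citing Federer \cite[Theorem~4.8(13)]{federer1959} for the tubular-neighbourhood structure and \cite{foote1984} for the regularity assertions. Measured against that, parts of your sketch are sound and indeed reproduce the standard arguments: the orthogonality $x-\projM(x)\in N_{\px}\man$ from first-order optimality, continuity of $\projM$ from uniqueness plus compactness, the identities $E\circ\psi=\id_{\UM}$ and $\psi\circ E=\id$ on $\psi(\UM)$, and your third stage (the gradient computation $\nabla\gdtoM^2=2(x-\projM(x))$, which is essentially Foote's argument) is correct \emph{once} the $C^{j-1}$ regularity of $\projM$ is in hand.

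The genuine gap is exactly where you say ``the hard part will be'': the invertibility of $dE_{(y,n)}$ at every point of $\psi(\UM)$ with $n\neq 0$ is not established, and it is not a technicality --- it is the entire content of the theorem, since everything else in your sketch is routine. To close it you would need the quantitative link between $\UM$ and the shape operator: e.g.\ for $x\in\UM$ with $n=x-\px$ and $u=n/\norm{n}$, the union of the open balls $\ballEN{\px+tu}{t}$, $t<t^*:=\sup\{t:\px+tu\in\UM\}$, is disjoint from $\man$ (cf.\ \Lemref{lem:empty.reach.ball}), which bounds the eigenvalues of $S_u$ by $1/t^*$ and hence those of $S_n$ strictly below $1$, so $\Id-S_n$ is invertible; note that the global bound $1/\rchM$ or even $1/\rch(\px,\man)$ is not enough, since $\dtoM{x}$ may exceed the local reach at $\px$ in other normal directions. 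Two further loose ends: your formula ``$dE_{(y,n)}$ restricted to tangential variations is $\Id-S_n$'' omits the normal-connection terms (harmless for the determinant in an adapted frame, but false as stated), and after local invertibility of $E$ you still must argue that the local $C^{j-1}$ inverse coincides with $\psi$ near each point (using continuity of $\psi$ and injectivity of $E$ on the neighbourhood) to conclude that $\psi$ is a $C^{j-1}$ diffeomorphism onto its image. As it stands, your proposal is an outline that defers precisely the step the citation to Federer and Foote is meant to cover.
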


%\subsubsection{Local reach, local feature size, and reach}

If $x \in \UM$, with $\px = \projM(x)$, then the ball
$\ballEN{x}{r}$ of radius $r= \norm{x-\px}$ is tangent to $\man$ at
$\px$, and $(x-\px) \in N_{\px}\man$. For any point $y\in \man$, the
\defn{local reach} \cite{attali2007} of $\man$ at $y$ is
\begin{equation*}
  \rch(y,\man) := \sup\{r \in \R \mid (y + ru) \in \UM\ 
 \forall u \in N_y\man \text{ with } \norm{u}=1\}.
\end{equation*}
By the tubular neighbourhood theorem,
$\rch(y,\man)$ can be equivalently defined~as
\begin{equation*}
%\label{eq:2nd.def.loc.rch}
  \rch(y,\man) = \sup\{r \in \R \mid \projM(y + ru) = y\ 
 \forall u \in N_y\man \text{ with } \norm{u}=1\}.
\end{equation*}
With this formulation, it is easy to see the following standard
observation: 

\begin{lem}
  \label{lem:empty.reach.ball}
  For any $y \in \man$,
  any open ball that is tangent to $\man$ at $y$ and with radius
  $r \leq \rch(y,\man)$, does not intersect $\man$.
\end{lem}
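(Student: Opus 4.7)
The plan is to reduce the claim to the equivalent characterization of local reach given just before the lemma: for every radius strictly below $\rch(y,\man)$, the center of the tangent ball projects to $y$ under $\projM$. Let $u\in N_y\man$ be the unit normal such that the ball is $\ballEN{c}{r}$ with $c = y + ru$, and split into two cases according to whether $r<\rch(y,\man)$ or $r=\rch(y,\man)$.

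For the strict case $r<\rch(y,\man)$, the equivalent formulation gives $c\in\UM$ and $\projM(c)=y$. By definition of $\projM$, the point $y$ is then the \emph{unique} closest point of $\man$ to $c$, so every other point $z\in\man$ satisfies $\norm{z-c} > \norm{y-c} = r$. Together with the fact that $y$ itself lies on the boundary (not the interior) of $\ballEN{c}{r}$, this yields $\ballEN{c}{r}\cap \man = \emptyset$.

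For the boundary case $r=\rch(y,\man)$, I will argue by approximation. Pick a sequence $r_n\uparrow r$ with $r_n < \rch(y,\man)$; by the previous case each open ball $\ballEN{y+r_n u}{r_n}$ misses $\man$. It then suffices to check the set identity
\[
  \ballEN{c}{r} \;=\; \bigcup_{n}\ballEN{y+r_n u}{r_n},
\]
which is an elementary exercise: if $p\in\ballEN{c}{r}$ with $\norm{p-c}=r-\varepsilon$, then for $n$ large enough that $r-r_n < \varepsilon/2$ the triangle inequality gives $\norm{p-(y+r_n u)} \le r-\varepsilon + (r-r_n) < r_n$. Since each set in the union is disjoint from $\man$, so is $\ballEN{c}{r}$.

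I do not anticipate any real obstacle: the strict-inequality case is immediate from the uniqueness built into the closest-point projection on $\UM$, and the only mildly delicate point is the harmless continuity argument needed to cover $r=\rch(y,\man)$, where $c$ itself may fail to lie in $\UM$.
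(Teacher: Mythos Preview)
Your proof is correct and follows precisely the route the paper intends: the paper does not give an explicit proof, merely remarking that the lemma is ``easy to see'' from the equivalent formulation of $\rch(y,\man)$ in terms of $\projM$, and your argument is exactly the natural way to flesh that out. The case split and the approximation for $r=\rch(y,\man)$ are both fine (and for the latter you only need the inclusion $\ballEN{c}{r}\subseteq\bigcup_n\ballEN{y+r_nu}{r_n}$, which is what you actually verify).
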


This property is useful for bounding the extrinsic curvatures of
$\man$ at $y$ (see, e.g.,
\cite[Lemma~3.3]{boissonnat2017convtanvar}). However, it can be
awkward to work with the local reach since it is not known to be
continuous, even when $\man$ is smooth. The \defn{local feature size}
at $y \in \man$ is the distance from $y$ to the medial axis;
equivalently, it is the supremum of the radii of balls centred at $y$
and contained in $\UM$ \cite[p.~432]{federer1959}:
\begin{equation*}
  \lfs(y) := \sup\{r \mid \ballEN{y}{r} \subset \UM \}.
\end{equation*}
Although it was introduced by Federer, the local feature size was
later rediscovered, and given its name, by Amenta and Bern
\cite{amenta1999vf}. Since the local reach at $y$ is the distance to
the medial axis measured only in directions orthogonal to $\man$ at
$y$, we have $\rch(y,\man) \geq \lfs(y)$. It is a short exercise to
show that the local feature size is 1-Lipschitz:
\begin{equation*}
  \abs{\lfs(y) - \lfs(x)} \leq \norm{x-y}, \quad \text{for all }x,y\in \man.
\end{equation*}

The \defn{reach}
of $\man$ is the infimum of the local feature size, or equivalently,
the infimum of the local reach:
\begin{equation*}
  \rchM := \distEN{\man}{\claxM} = \inf_{y\in \man} \lfs(y) =
  \inf_{y\in\man} \rch(y,\man). 
\end{equation*}
The last equality comes from the observation that since $\man$ is
compact, there is some $y^*$ for which $\lfs(y^*)=\rchM$, and we must
have $\rch(y^*,\man) = \lfs(y^*)$ because $z-y^*$ must lie in
$N_{y^*}\man$ for all $z \in \claxM \cap \cballEN{y^*}{\lfs(y^*)}$.

Observe also, that since, by the tubular neighbourhood theorem,
$\lfs(x)>0$ at any point $x\in\man$, the continuity of the local
feature size implies that $\rchM>0$ for any compact $C^2$ submanifold.

\begin{remark}[locally bounding the local reach]
  \label{rem:loc.bnd.loc.rch}
  We will often need a lower bound $\rbnd$ on $\rch(x,\man)$ in a
  neighbourhood of a point $p\in\man$, and moreover, the \emph{size}
  of this neighbourhood will depend on the size of the bound. For
  example, we will have the following awkward self-referential
  definition: $U_p = \ballEN{p}{r} \cap \man$, where $r \leq \rbnd$
  and $\rch(x,\man) \geq \rbnd$ for all $x \in U_p$.

  This can be easily resolved if we choose $\rbnd=\rchM$, the global bound
  on the local reach. However, the local reach could vary by orders of
  magnitude over the manifold, making it inefficient to use a global
  bound to govern the size of the simplices in the constructed
  simplicial approximation.
  
  The Lipschitz property of $\lfs$ makes it well-suited to bound the
  local reach in a small neighbourhood of $p\in \man$. For example,
  \begin{equation*}
    \rch(x,\man)\geq \lfs(x) \geq (1-\epsilon)\lfs(p), 
    \quad \text{for all }x \text{ with } \norm{p-x} \leq \epsilon\lfs(p).
  \end{equation*}
  Thus we can choose $r = \epsilon\lfs(p)$ and
  $\rbnd=(1-\epsilon)\lfs(p)$, and the criterion $r\leq \rbnd$ is satisfied
  provided $\epsilon \leq \frac12$.
\end{remark}

\subsection{Affine flats and angles}
\label{sec:flats.n.angles}

The angle between two vectors $u,v \in \R^N \setminus \{0\}$ 
is denoted $\angle(u,v)$ (this angle is $\leq \pi$). 
% If $K \subset\R^n$ is a linear subspace,
% then $\angle(u,K)$ is the angle between $u$ and $\proj{K}(u)$, its
% orthogonal projection into $K$ (this angle is $\leq \pi/2$). 
If $K \subseteq\R^n$ is a linear subspace, $\proj{K}(u)$ is the
orthogonal projection of into $K$. We define $\angle(u,K)$ to be
$\pi/2$ if $\proj{K}(u) = 0$, and otherwise
$\angle(u,K) = \angle(u,\proj{K}(u))$ (thus $\angle(u,K)\leq \pi/2$).
If $K$ and $L$ are two linear subspaces, 
%with $\dim K \leq \dim L$,
then
\begin{equation*}
  \angle(K,L) = \sup_{u \in K \setminus \{0\}}\angle(u,L).
\end{equation*}
The definition is only interesting when $\dim K \leq \dim L$.
Observe that if $\dim K = \dim L$, then $\angle(K,L)=\angle(L,K)$. If
$K$ and $L$ are affine flats in $\R^N$, 
%with $\dim K \leq \dim L$,
then $\angle(K,L)$ is the angle between the corresponding parallel
vector subspaces. If $\gsplxs$ is a simplex with
$\dim\gsplxs \leq \dim L$, then
$\angle(\gsplxs,L):=\angle(\aff(\gsplxs),L)$, where $\aff(\gsplxs)$ is
the affine hull of $\gsplxs$. We denote the orthogonal complement of a
linear subspace $K \subseteq \R^N$ by $K^{\perp}$. A short exercise
yields the following observations:

\begin{lem}
  \label{lem:angle.flat.complement}
  \begin{enumerate}
  \item If $K,L$ are subspaces of $\R^N$, then
    \begin{equation*}
      \angle(L^\perp,K^\perp) = \angle(K,L).
    \end{equation*}
  \item If $Q \subset \R^N$ is a subspace of codimension 1, then
    \begin{equation*}
      \angle(Q^\perp,K)= \pi/2 - \angle(K,Q).
    \end{equation*}
  \end{enumerate}
\end{lem}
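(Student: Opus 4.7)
The plan is to rephrase both identities in terms of the operator norm of a projection, and then exploit the elementary fact that a linear operator and its adjoint have the same operator norm. Throughout, $\proj{K}$ denotes the orthogonal projection onto the linear subspace $K \subseteq \R^N$.

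The key observation is that if $u \neq 0$ is not contained in $L^\perp$, then
\[
\sin \angle(u,L) \;=\; \frac{\|\proj{L^\perp}(u)\|}{\|u\|},
\]
and this formula extends trivially to the case $u \in L^\perp$ (both sides equal $1$). Taking the supremum over unit $u \in K$, this gives
\[
\sin \angle(K,L) \;=\; \bigl\|\proj{L^\perp}\!\big|_K \bigr\|,
\]
where the right-hand side is the operator norm of $\proj{L^\perp}$ restricted to $K$. Thus the angle between two subspaces is completely encoded by the norm of an orthogonal projection between them.

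For part (1), I would apply the displayed formula twice: once to the pair $(K,L)$ and once to the pair $(L^\perp, K^\perp)$, yielding
\[
\sin \angle(K,L) = \bigl\|\proj{L^\perp}\!\big|_K\bigr\|
\quad\text{and}\quad
\sin \angle(L^\perp, K^\perp) = \bigl\|\proj{K}\!\big|_{L^\perp}\bigr\|.
\]
Since the restriction $\proj{L^\perp}\!\big|_K \colon K \to L^\perp$ has adjoint $\proj{K}\!\big|_{L^\perp} \colon L^\perp \to K$ (both are restrictions of self-adjoint projections on $\R^N$), the two operator norms coincide. As both angles lie in $[0,\pi/2]$, equality of the sines forces equality of the angles.

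For part (2), I would exploit the fact that $Q^\perp$ is one-dimensional. Fix a unit vector $n$ spanning $Q^\perp$; then the supremum defining $\angle(Q^\perp, K)$ is attained trivially at $n$, so $\cos \angle(Q^\perp, K) = \|\proj{K}(n)\|$. On the other side, for any unit $v \in K$,
\[
\|\proj{Q^\perp}(v)\| = |\langle v, n\rangle|,
\]
so $\sin \angle(K,Q) = \sup_{v\in K,\,\|v\|=1} |\langle v, n\rangle| = \|\proj{K}(n)\|$ (the supremum being attained at $v = \proj{K}(n)/\|\proj{K}(n)\|$ when this is nonzero, and the identity being trivial otherwise). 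Combining, $\sin\angle(K,Q) = \cos\angle(Q^\perp, K)$, which gives the stated complementarity. There is no serious obstacle here; the only care needed is in the edge cases where a projection vanishes, which are handled by the convention $\angle(u,K) = \pi/2$ when $\proj{K}(u) = 0$.
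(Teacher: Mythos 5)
Your proposal is correct. The paper offers no proof of this lemma at all (it is introduced with ``a short exercise yields the following observations''), so there is no in-paper argument to compare against; your argument is a complete and clean way of carrying out that exercise. The identity $\sin\angle(u,L)=\onorm{\proj{L^\perp}(u)}/\norm{u}$, the resulting formula $\sin\angle(K,L)=\onorm{\proj{L^\perp}|_K}$, and the fact that $\proj{L^\perp}|_K\colon K\to L^\perp$ and $\proj{K}|_{L^\perp}\colon L^\perp\to K$ are adjoints (hence have equal operator norms) are all verified correctly, and since both angles lie in $[0,\pi/2]$ the equality of sines does give equality of angles. Part (2) is likewise sound: one-dimensionality of $Q^\perp$ makes the supremum trivial on one side, the Cauchy--Schwarz equality case gives $\sin\angle(K,Q)=\onorm{\proj{K}(n)}=\cos\angle(Q^\perp,K)$ on the other, and the degenerate situations are covered by the $\pi/2$ convention exactly as you note.
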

%NB14p102

The following standard observation 
(cf.\ \cite[Theorem~4.8(7)]{federer1959}) follows
easily from \Lemref{lem:empty.reach.ball}. 

\begin{lem}
  \label{lem:dist.to.tanspace}
  Given any two points $x,y \in \man \subset \amb$, we have
  \begin{enumerate}
  \item $\sin\angle([x,y],T_x\man) \leq
    \frac{\norm{x-y}}{2\rch(x,\man)}$.
  \item $\distEN{y}{T_x\man} \leq \frac{\norm{x-y}^2}{2\rch(x,\man)}$.
  \end{enumerate}
\end{lem}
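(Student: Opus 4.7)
The plan is to deduce both statements from the empty-ball property recorded in \Lemref{lem:empty.reach.ball}. Set $r=\rch(x,\man)$ and write $v=y-x\in\R^N$. For any unit vector $n\in N_x\man$, the open ball $\ballEN{x+rn}{r}$ is tangent to $\man$ at $x$ and has radius at most $\rch(x,\man)$, so by \Lemref{lem:empty.reach.ball} it is disjoint from $\man$. In particular $y$ lies outside this ball, so
\[
\norm{y-(x+rn)}^2 \ge r^2,
\]
which expands to $\norm{v}^2 - 2r\,\dotprod{v}{n} \ge 0$, i.e.
\[
\dotprod{v}{n} \le \frac{\norm{v}^2}{2r}.
\]
Applying the same argument to the opposite tangent ball $\ballEN{x-rn}{r}$ yields the matching lower bound, so $|\dotprod{v}{n}| \le \norm{v}^2/(2r)$ for every unit $n\in N_x\man$.

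Next, I would choose $n$ optimally. If $\proj{N_x\man}(v)=0$, then $y\in T_x\man$ and both inequalities in the lemma are trivial. Otherwise, set $n = \proj{N_x\man}(v)/\norm{\proj{N_x\man}(v)}$, which is a unit vector in $N_x\man$. Then $\dotprod{v}{n} = \norm{\proj{N_x\man}(v)}$, and since $T_x\man$ is an affine flat through $x$ whose orthogonal complement at $x$ is $N_x\man$, we have
\[
\distEN{y}{T_x\man} = \norm{\proj{N_x\man}(v)} = \dotprod{v}{n} \le \frac{\norm{v}^2}{2r} = \frac{\norm{x-y}^2}{2\rch(x,\man)},
\]
which establishes part (2).

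Part (1) then follows immediately from the definition of $\angle([x,y],T_x\man)$: by orthogonal decomposition $v = \proj{T_x\man}(v) + \proj{N_x\man}(v)$, so
\[
\sin\angle(v,T_x\man) = \frac{\norm{\proj{N_x\man}(v)}}{\norm{v}} = \frac{\distEN{y}{T_x\man}}{\norm{x-y}} \le \frac{\norm{x-y}}{2\rch(x,\man)}.
\]
There is no real obstacle here; the only subtlety is handling the degenerate cases $y=x$ or $\proj{N_x\man}(v)=0$, which is trivial, and checking that $\rch(x,\man)$ really is an admissible radius in the definition of the tangent ball, which is built into \Lemref{lem:empty.reach.ball}.
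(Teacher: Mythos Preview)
Your proof is correct and is exactly the standard argument the paper has in mind: the paper does not spell out a proof but simply says the result ``follows easily from \Lemref{lem:empty.reach.ball}'' (citing Federer), and your computation---expanding $\norm{y-(x+rn)}^2\ge r^2$ for the tangent ball of radius $r=\rch(x,\man)$ and then choosing $n$ along $\proj{N_x\man}(y-x)$---is precisely how that deduction goes.
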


The following lemma is a local adaptation of results presented in
\cite{boissonnat2017convtanvar}:

\begin{lem}[tangent space variation]
  \label{lem:tan.var}
  Suppose $\ballEN{c}{r} \subset \UM$,\, 
  $x,y \in B = \ballEN{c}{r} \cap \man$ and $\rch(z,\man)\geq
  \rbnd$ for all $z$ in $B$. If $r < \rbnd$, then
  \begin{equation*}
    \sin \angle(T_x\man,T_y\man) 
    \leq
    \frac{\norm{y-x}}{\rbnd},
  \end{equation*}
and
  \begin{equation*}
    \angle(T_x\man,T_y\man) \leq \frac{\pi\norm{y-x}}{2\rbnd}.
  \end{equation*}
\end{lem}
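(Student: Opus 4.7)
The plan is to convert the reach hypothesis into a curvature bound on $\man$ and then integrate that infinitesimal bound to control $\angle(T_x\man, T_y\man)$. First I would dispense with the easy regime: if $\norm{y-x} \geq \rbnd$ then the first bound is trivial (since $\sin\angle \leq 1$), and the second bound then follows from the first via the elementary inequality $\theta \leq (\pi/2)\sin\theta$ valid for $\theta \in [0,\pi/2]$, applied to $\theta := \angle(T_x\man, T_y\man) \in [0,\pi/2]$. So I may assume $\norm{y-x} < \rbnd$.

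The main step is a local graph representation of $\man$ over $T_y\man$. For any unit $n \in N_y\man$, the empty-ball characterisation of reach (equivalently the content underlying \Lemref{lem:dist.to.tanspace}(2)) says that the open balls $\ballEN{y \pm \rbnd n}{\rbnd}$ are disjoint from $\man$. As is standard (cf.\ \cite{federer1959,boissonnat2017convtanvar}), this implies that in a neighbourhood of $y$ the manifold is the graph of a $C^2$ map $h\colon U \subset T_y\man \to N_y\man$ with $h(0)=0$ and $dh_0 = 0$, whose Hessian operator norm is bounded pointwise by the reciprocal of the local reach, and hence by $1/\rbnd$ on the portion of the graph sitting inside $B$. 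Let $\bar{x} := \proj{T_y\man}(x-y)$, so that $x$ corresponds to the graph point $(\bar{x}, h(\bar{x}))$ and $\norm{\bar{x}} \leq \norm{x-y}$. Integrating the Hessian bound along the segment in $T_y\man$ from $0$ to $\bar{x}$ yields
\[
\onorm{dh_{\bar{x}}} \;\leq\; \norm{\bar{x}}/\rbnd \;\leq\; \norm{x-y}/\rbnd.
\]
Since $T_x\man$ is the graph of the linear map $dh_{\bar{x}}$, a direct computation gives
\[
\sin\angle(T_x\man, T_y\man) \;=\; \onorm{dh_{\bar{x}}}\big/\sqrt{1 + \onorm{dh_{\bar{x}}}^2} \;\leq\; \norm{x-y}/\rbnd,
\]
which is the first inequality; the second then follows from $\theta \leq (\pi/2)\sin\theta$.

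The main obstacle is verifying that the graph representation is valid all the way out to $\bar{x}$: one needs both that $\bar{x} \in U$ and that the Hessian bound $1/\rbnd$ is in force along the entire integration segment. The hypotheses $r < \rbnd$, $x,y \in B$, and $\rch(z,\man) \geq \rbnd$ for all $z \in B$ are precisely what should ensure that the lifted curve $\{(t\bar{x}, h(t\bar{x})) : t \in [0,1]\}$ lies in the region where the curvature bound holds and that the graph does not fold back. Making this containment rigorous—without circularity, since $h$ and its domain are themselves being constructed—is the core technical point; it is exactly the localisation step by which the global result of \cite{boissonnat2017convtanvar} is adapted to the present setting.
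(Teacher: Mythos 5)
Your reduction of the second inequality to the first is fine: by the paper's definition the angle between two $m$-dimensional subspaces lies in $[0,\pi/2]$, so $\theta\le\frac{\pi}{2}\sin\theta$ applies directly (the paper reaches the same conclusion via a half-angle estimate), and discarding the regime $\norm{y-x}\ge\rbnd$ is harmless. Your main step, however, departs from the paper --- which first proves that $B$ is geodesically convex (\Lemref{lem:convexity}) and then bounds tangent variation along a minimizing geodesic via curvature-bounded-curve estimates --- and as written it contains a concrete error.

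The false step is the pointwise Hessian bound. What the reach hypothesis controls is the second fundamental form, $\onorm{\II_z}\le 1/\rch(z,\man)$; the Hessian of the graph function $h$ over $T_y\man$ at $\bar x\ne 0$ is \emph{not} bounded by $1/\rbnd$, because it carries a tilt factor of order $(1+\onorm{dh_{\bar x}}^2)^{3/2}$. Already for a circle of radius $R$ tangent to the axis at $y$, $h(u)=R-\sqrt{R^2-u^2}$ gives $h''(u)=R^2/(R^2-u^2)^{3/2}>1/R$ for $u\ne0$, and the integrated claim $\onorm{dh_{\bar x}}\le\norm{\bar x}/\rbnd$ fails there as well: $h'(u)=u/\sqrt{R^2-u^2}>u/R$. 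So the chain ``Hessian $\le 1/\rbnd$, hence slope $\le\norm{\bar x}/\rbnd$, hence $\sin\angle\le\norm{x-y}/\rbnd$'' breaks at its first link. It is repairable: with the correct Hessian bound one should integrate the derivative of the \emph{sine of the tilt angle} along the segment (schematically $(\sin\phi)'\le 1/\rbnd$), not the slope, but that is a different computation from the one you wrote, and in higher dimension it requires an operator-norm argument rather than the scalar identity.

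Second, the point you flag at the end and leave open is not a routine verification but is where the substance of the lemma lies: you must show that the lifted segment $\{(t\bar x,h(t\bar x)):t\in[0,1]\}$ remains in the region where $\rch\ge\rbnd$ is known, i.e.\ in $B=\ballEN{c}{r}\cap\man$, and that the graph representation persists that far. Nothing in the hypotheses gives this for free: the reach lower bound is assumed only on $B$, and the lift of the straight segment between the projections of two points of $B$ need not obviously stay in $\ballEN{c}{r}$. The paper's proof devotes essentially all of its effort to the analogous containment, establishing geodesic convexity of $B$ by adapting the arguments of \cite{boissonnat2017convtanvarsup}; your route would need a corresponding ``the graph stays in $B$'' lemma, which is not supplied. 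As it stands the proposal has one incorrect step and one essential step missing.
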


\begin{proof}
  See \Appref{app:convex.tanvar} for a sketch of how this result
  follows from the arguments presented in
  \cite{boissonnat2017convtanvar}. 
\end{proof}

\begin{remark}
  Since the bounds in \Lemref{lem:tan.var} are vacuous if
  $\norm{y-x} \geq \rbnd$, in practice we require that either
  $c=x$ or $r<\frac12\rbnd$.
\end{remark}

We will need to bound the angle between a simplex with vertices on
$\man$ and the nearby tangent spaces. To this end we employ a result
established by Whitney \cite[p.~127]{whitney1957} in the formulation
presented in \cite[Lemma~2.1]{boissonnat2013stab1}:

\begin{lem}[Whitney's angle bound]
  \label{lem:Whitneys.angle.bnd}
  Suppose $\gsplxs$ is a $j$-simplex whose vertices all lie within a
  distance $\eta$ from a $k$-dimensional affine space, $K
  \subset \R^N$, with $k \geq j$. Then
  \begin{equation*}
    \sin \angle(\gsplxs,K) \leq
    \frac{2\eta}{tL},
  \end{equation*}
  where $t$ is the thickness of $\gsplxs$ and $L$ is the length of its
  longest edge. 
\end{lem}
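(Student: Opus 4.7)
The plan is to reduce the statement to a bound on $\|\proj{K^\perp}(u)\|$ for unit vectors $u$ in the linear subspace $V$ parallel to $\aff(\gsplxs)$. Indeed, by the definitions in \Secref{sec:flats.n.angles},
\begin{equation*}
\sin\angle(\gsplxs,K) \;=\; \sup_{u\in V,\ \norm{u}=1}\sin\angle(u,K) \;=\; \sup_{u\in V,\ \norm{u}=1}\norm{\proj{K^\perp}(u)},
\end{equation*}
so it suffices to produce a bound of the form $\norm{\proj{K^\perp}(u)}\leq 2\eta/(tL)$ that is independent of the unit vector $u\in V$.

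First I would bound the normal component of each edge vector of $\gsplxs$. Label the vertices $p_0,\dots,p_j$ and set $v_i=p_i-p_0$. Write $p_i=q_i+n_i$ and $p_0=q_0+n_0$ with $q_i,q_0\in K$ and $\norm{n_i},\norm{n_0}\leq\eta$; since $q_i-q_0$ is parallel to $K$, one has $\proj{K^\perp}(v_i)=n_i-n_0$, so $\norm{\proj{K^\perp}(v_i)}\leq 2\eta$ for every edge emanating from $p_0$.

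Next I would express $u\in V$ in the basis of edge vectors as $u=\sum_{i=1}^{j}\alpha_i v_i = P\alpha$, where $P$ is the $N\times j$ matrix whose $i$-th column is $v_i$. This is exactly the setup used in the trilateration lemma (\Lemref{lem:distort.trilateration}), so \cite[Lemma~2.4]{boissonnat2013stab1} applies and gives the thickness--singular-value bound $\onorm{P^{\dagger}}\leq (\sqrt{j}\,tL)^{-1}$ for the pseudoinverse. Hence $\norm{\alpha}_2\leq (\sqrt{j}\,tL)^{-1}\norm{u}=(\sqrt{j}\,tL)^{-1}$, and Cauchy--Schwarz yields $\norm{\alpha}_1\leq \sqrt{j}\,\norm{\alpha}_2\leq (tL)^{-1}$.

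Combining the two ingredients by the triangle inequality,
\begin{equation*}
\norm{\proj{K^\perp}(u)} \;\leq\; \sum_{i=1}^{j}|\alpha_i|\,\norm{\proj{K^\perp}(v_i)} \;\leq\; 2\eta\,\norm{\alpha}_1 \;\leq\; \frac{2\eta}{tL},
\end{equation*}
and taking the supremum over unit $u\in V$ gives the claimed inequality. I do not anticipate any real obstacle here: the work is isolated in the thickness-to-singular-value bound imported from \cite{boissonnat2013stab1}, and the hypothesis $k\geq j$ is used only implicitly, in that $V$ is genuinely the smaller of the two subspaces so the $\sup$ formulation of $\angle(\gsplxs,K)$ is the correct one.
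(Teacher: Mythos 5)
Your argument is correct, and it is worth noting that the paper itself offers no proof of this lemma: it is imported from Whitney (p.~127 of \cite{whitney1957}) in the formulation of \cite[Lemma~2.1]{boissonnat2013stab1}, so the comparison is really with those sources. Your reduction is sound: for a unit vector $u$ in the direction space $V$ of $\aff(\gsplxs)$ one indeed has $\sin\angle(u,K)=\norm{\proj{K^\perp}(u)}$ (with $K^\perp$ understood as the orthogonal complement of the direction space of the affine flat $K$), and since all angles lie in $[0,\pi/2]$ the supremum commutes with $\sin$; the edge-vector bound $\norm{\proj{K^\perp}(v_i)}\leq 2\eta$ is immediate as you say. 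Where your route differs from the classical one is in how the coefficients of $u=\sum_i\alpha_i v_i$ are controlled: Whitney's argument bounds each $\abs[big]{\alpha_i}$ directly by $1/a_i\leq 1/(jtL)$, using the unit vector in $\aff(\gsplxs)$ orthogonal to the facet opposite $p_i$, which kills all terms but the $i$-th in the inner product; summing then gives $\norm{\proj{K^\perp}(u)}\leq j\cdot 2\eta/(jtL)=2\eta/(tL)$ with no linear algebra imported. You instead invoke the singular-value bound $\onorm{P^{\dagger}}\leq(\sqrt{j}\,tL)^{-1}$ of \cite[Lemma~2.4]{boissonnat2013stab1} and Cauchy--Schwarz, which yields the same constant; this is legitimate (that lemma is stated for general $j$-simplices in $\R^N$, not only the square case used in \Lemref{lem:distort.trilateration}, and $\alpha=P^{\dagger}u$ since $P$ has full column rank for a nondegenerate simplex), and it has the small advantage of reusing machinery the paper already relies on, at the cost of being slightly less elementary than the altitude argument. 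Either way the hypothesis $k\geq j$ plays exactly the implicit role you describe, and the bound is vacuous only in the degenerate case $t=0$.
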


\begin{lem}[simplices lie close to $\man$]
  \label{lem:splxs.close.to.man}
  Let $\gsplxs \subset \UM$ be a simplex with vertices on $\man$, 
%and with $L(\gsplxs) < \rbnd$, where 
  and $\rbnd$ a constant such that $\rch(\px,\man) \geq \rbnd$ for all
  $x \in \gsplxs$, where $\px = \projM(x)$. Then for all
  $x,y \in \gsplxs$,
  \begin{equation*}
    \distEN{y}{T_{\px}\man} < \frac{2L(\gsplxs)^2}{\rbnd},
    \quad \text{and in particular,} \quad
    \dtoM{x} < \frac{2L(\gsplxs)^2}{\rbnd}.
  \end{equation*}
\end{lem}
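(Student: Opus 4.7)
The plan is to first bound the distance from each vertex of $\gsplxs$ to the tangent space $T_{\px}\man$ via \Lemref{lem:dist.to.tanspace}(2), then extend this to an arbitrary point $y\in\gsplxs$ using the convexity of the distance-to-an-affine-flat function, and finally deduce the $\dtoM{x}$ bound as the special case $y=x$.

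For the first step, each vertex $v_i$ of $\gsplxs$ lies on $\man$, as does $\px$, and by hypothesis $\rch(\px,\man)\geq\rbnd$. Applying \Lemref{lem:dist.to.tanspace}(2) with base point $\px$ and second point $v_i$ gives
\[
\distEN{v_i}{T_{\px}\man}\leq\frac{\norm{v_i-\px}^2}{2\rbnd}.
\]
The key estimate is then $\norm{v_i-\px}\leq 2L(\gsplxs)$, which I would obtain from the triangle inequality $\norm{v_i-\px}\leq\norm{v_i-x}+\dtoM{x}$: since $v_i,x\in\gsplxs$ we have $\norm{v_i-x}\leq L(\gsplxs)$, and since $v_i\in\man$ the definition of $\dtoM{x}$ gives $\dtoM{x}\leq\norm{x-v_i}\leq L(\gsplxs)$. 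Substituting yields $\distEN{v_i}{T_{\px}\man}\leq 2L(\gsplxs)^2/\rbnd$.

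For the second step, $z\mapsto\distEN{z}{T_{\px}\man}$ is a convex function of $z$ (it is a norm of the affine projection onto the orthogonal complement). Since $y\in\gsplxs$ is a convex combination of the $v_i$, convexity gives $\distEN{y}{T_{\px}\man}\leq\max_i\distEN{v_i}{T_{\px}\man}\leq 2L(\gsplxs)^2/\rbnd$. For the ``in particular'' clause, note that by construction $x-\px\in N_{\px}\man$, so $x-\px$ is perpendicular to $T_{\px}\man$, and since $\px\in T_{\px}\man$ this forces $\distEN{x}{T_{\px}\man}=\norm{x-\px}=\dtoM{x}$; so specialising the first bound to $y=x$ delivers the second.

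The only delicate point is the strict inequality: the argument above formally yields only $\leq$. Strictness presumably comes from the observation that the bound $\norm{v_i-\px}=2L(\gsplxs)$ would require both $\norm{x-v_i}=L(\gsplxs)$ and $\dtoM{x}=L(\gsplxs)$; but $\dtoM{x}\leq\norm{x-v_j}$ for \emph{every} vertex $v_j$, and no $x\in\gsplxs$ can simultaneously be at distance $L(\gsplxs)$ from every vertex of a nondegenerate simplex. I expect this to be the only subtle detail in an otherwise short computation.
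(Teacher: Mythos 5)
Your proof is correct and follows essentially the same route as the paper: the triangle-inequality bound $\norm{v_i-\px}\leq\norm{v_i-x}+\dtoM{x}< 2L(\gsplxs)$, an application of \Lemref{lem:dist.to.tanspace}(2) at $\px$, extension from the vertices to all of $\gsplxs$, and the identification $\dtoM{x}=\norm{x-\px}=\distEN{x}{T_{\px}\man}$ coming from $x-\px\in N_{\px}\man$. Your justification of the extension step by convexity is actually cleaner than the paper's wording (which calls the distance-to-flat function affine on $\gsplxs$, whereas in codimension greater than one convexity is what one really has, and it suffices), and your explicit argument for strictness spells out a point the paper simply asserts via $\norm{x-\px}<L(\gsplxs)$.
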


\begin{proof}
  Since $\px$ is the closest point on $\man$ to $x$, and every vertex
  lies on $\man$, we must have $\norm{x-\px}< L(\gsplxs)$,
%  and $\norm{x-\px}\leq \norm{p-x}$ for any vertex $p$ of $\gsplxs$,
  and therefore, for any vertex $p$ of $\gsplxs$,
  \begin{equation}
    \label{eq:dist.leq.2L}
    \norm{p-\px} \leq \norm{p-x} + \norm{x-\px} 
    % \leq 2\norm{p-x}
    < 2L(\gsplxs). 
  \end{equation}
  By \Lemref{lem:dist.to.tanspace}(2) we have $\distEN{p}{T_{\px}\man}
  < 2L^2/\rbnd$.
  This is true for all vertices $p$ of $\gsplxs$, and since the
  function $\distEN{\,\cdot\,}{T_{\px}\man}$ is affine on $\gsplxs$,
  it is also true for any  $y\in \gsplxs$. The second inequality
  follows by taking $y=x$, since
  \begin{equation*}
    \distEN{x}{T_{\px}\man} = \norm{x-\px}.
% < \frac{2L^2}{\rbnd}.
  \end{equation*}
\end{proof}

\begin{remark}
  If $p$ is a vertex of $\gsplxs$, then the constraint
  $\gsplxs \subset \UM$ of \Lemref{lem:splxs.close.to.man} is ensured
  if $L(\gsplxs) < \lfs(p)$, since $\lfs$ is the distance to the
  medial axis. In practice $\rbnd$ is defined either in terms of
  $\rchM$ or in terms of $\lfs(p)$.

  For example, since $\norm{p-\px} \leq 2L(\gsplxs)$, we have that
  $\lfs(\px) \geq \lfs(p) - 2L$. So the requirements of
  \Lemref{lem:splxs.close.to.man} are satisfied by demanding
  $L \leq \epsilon\lfs(p)$, with $\epsilon < \tfrac12$, and setting
  $\rbnd = (1-2\epsilon)\lfs(p)$. Alternatively, we can simply demand
  $L<\rchM$, and use $\rbnd=\rchM$. Of course, other variations are
  possible.
\end{remark}

\begin{lem}[simplex-tangent space angle bounds]
  \label{lem:splx.tanspace.angle.bnds}
  Suppose $\gsplxs \subset \amb$ is a simplex of dimension $\leq m$
  with vertices on $\man$. If $p$ is a vertex of $\gsplxs$, 
  % and 
  % \begin{equation*}
  %   L(\gsplxs) < t(\gsplxs)\rch(p,\man),
  % \end{equation*}
  then 
  \begin{enumerate}
  \item\hfil
    $\displaystyle
      \sin\angle(\gsplxs,T_p\man) \leq 
      \frac{L}{t\rch(p,\man)}. 
      $

  \item In addition, suppose $\gsplxs \subset \UM$, and there is a
    ball $\ballEN{c}{r} \subset \UM$ such that for any $x \in \gsplxs$,\,
    $\px = \projM(x) \in B= \ballEN{c}{r} \cap \man$ and
    $\rch(z,\man)\geq \rbnd$ for all $z\in B$. If $r < \rbnd$, then
    \begin{equation*}
      \sin\angle(\gsplxs,T_{\px}\man) 
%      \leq \frac{2L}{t\rbnd}\left(1 + \frac{L}{\rbnd}\right)
      \leq \frac{3L}{t\rbnd}.
    \end{equation*}
  \end{enumerate}
\end{lem}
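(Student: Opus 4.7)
The plan is to handle the two parts in sequence, with part (1) feeding directly into part (2). For part (1), I would invoke Whitney's angle bound (\Lemref{lem:Whitneys.angle.bnd}) with the flat $K = T_p\man$, so that $k=m \geq j = \dim\gsplxs$. The required input is a bound on the distance from each vertex of $\gsplxs$ to $T_p\man$. The vertex $p$ itself lies on $T_p\man$, and for every other vertex $q$ of $\gsplxs$ we have $\norm{p-q} \leq L$, so \Lemref{lem:dist.to.tanspace}(2) applied at base point $p$ yields
\begin{equation*}
\distEN{q}{T_p\man} \leq \frac{\norm{p-q}^2}{2\rch(p,\man)} \leq \frac{L^2}{2\rch(p,\man)}.
\end{equation*}
Setting $\eta = L^2/(2\rch(p,\man))$ in Whitney's bound gives exactly $\sin\angle(\gsplxs, T_p\man) \leq L/(t\,\rch(p,\man))$, as claimed.

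For part (2), fix $x\in\gsplxs$ with $\px = \projM(x)$, and let $p$ be any vertex of $\gsplxs$. I would route the angle through $T_p\man$ using the triangle inequality for subspace angles,
\begin{equation*}
\angle(\gsplxs, T_{\px}\man) \leq \angle(\gsplxs, T_p\man) + \angle(T_p\man, T_{\px}\man).
\end{equation*}
The first summand is controlled by part (1): since $p \in \man$ we have $p = \projM(p) \in B$, so $\rch(p,\man) \geq \rbnd$ and $\sin\angle(\gsplxs, T_p\man) \leq L/(t\rbnd)$. For the second summand I would apply the tangent variation lemma (\Lemref{lem:tan.var}) to the pair $(p, \px)$: both points lie in $B$, and using $\norm{x-\px} = \dtoM{x} \leq \norm{x-p}$ (since $p\in\man$) together with the triangle inequality yields
\begin{equation*}
\norm{p - \px} \leq \norm{p - x} + \norm{x - \px} \leq 2L,
\end{equation*}
so that $\sin\angle(T_p\man, T_{\px}\man) \leq 2L/\rbnd$.

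Combining via subadditivity of $\sin$ on $[0,\pi/2]$, valid since both subspace angles lie in that range, gives $\sin\angle(\gsplxs, T_{\px}\man) \leq L/(t\rbnd) + 2L/\rbnd$, which is at most $3L/(t\rbnd)$ because the thickness satisfies $t \leq 1$. The only real subtlety is the regime: if $L/(t\rbnd) \geq 1$, then the target bound $3L/(t\rbnd)$ is already $\geq 1$ and trivial; otherwise both intermediate angles are strictly acute and the sine-subadditivity step is safe. This last point, together with the verification that both $p$ and $\px$ lie in $B$ so that \Lemref{lem:tan.var} is applicable, is the main piece of bookkeeping but presents no conceptual obstacle.
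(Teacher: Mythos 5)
Your proof is correct and takes essentially the same route as the paper: part (1) is Whitney's angle bound applied with $\eta = L^2/(2\rch(p,\man))$ from \Lemref{lem:dist.to.tanspace}(2), and part (2) routes the angle through $T_p\man$ using \Lemref{lem:tan.var} with the bound $\norm{p-\px}\leq 2L$ (the paper's inequality \eqref{eq:dist.leq.2L}), combining via subadditivity of sine and $t\leq 1$. Your extra bookkeeping about $p,\px\in B$ and the trivial regime $L/(t\rbnd)\geq 1$ only makes explicit what the paper leaves implicit.
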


\begin{proof}
  (1)\, By \Lemref{lem:dist.to.tanspace}(2), all the
  vertices of $\gsplxs$ are within a distance $\eta =
  L^2/(2\rch(p,\man))$ from $T_p\man$, and so
  \Lemref{lem:Whitneys.angle.bnd} ensures
  \begin{equation*}
    \sin\angle(\gsplxs,T_p\man) \leq \frac{2\eta}{tL} =
    \frac{L}{t\rch(p,\man)}. 
  \end{equation*}

  \smallbreak\noindent
  (2)\, By \Lemref{lem:tan.var} and 
  \eqref{eq:dist.leq.2L},
%  \Lemref{lem:splxs.close.to.man}
  \begin{equation*}
    \sin\angle(T_p\man,T_{\px}\man) \leq \frac{\norm{p-\px}}{\rbnd}
    % \leq \frac{\norm{p-x}+\norm{x-\px}}{\rbnd}
    % \leq 
    % \frac{L}{\rbnd}\left(1 + \frac{2L}{\rbnd}\right),
    \leq \frac{2L}{\rbnd},
  \end{equation*}
  and the result follows using part 1.
\end{proof}

\subsection{Distortion of orthogonal projection: $\lmap_p$ and $\phi_p$}
\label{sec:orthog.proj}

The coordinate maps $\phi_p$ and $\lmap_p$ are defined in terms of the
orthogonal projection to $T_p\man$. The size of the neighbourhoods
used to define the coordinate charts are constrained by the
requirement that these maps be embeddings, which we establish by
ensuring that they are $\xi$-distortion maps with $\xi<1$.

\begin{lem}[definition and distortion of $\phi_p$]
  \label{lem:def.distort.phi}
  Let $U_p = \ballEN{p}{r} \cap \man$, where $r=\rho \rbnd$, with
  $\rho < \frac12$, and
  $\rch(x,\man) \geq \rbnd$ for all $x \in U_p$ {\rm(see
  Remark~\ref{rem:loc.bnd.loc.rch})}. Define $\phi_p :=
\proj{T_p\man}|_{U_p}$. Then $\phi_p$ is a $\xi$-distortion map with
\begin{equation*}
  \xi = 4\rho^2.
\end{equation*}
\end{lem}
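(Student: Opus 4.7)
The plan is to exploit the fact that $\phi_p$ is an orthogonal projection, so for any $x, y \in U_p$ we have $\norm{\phi_p(x) - \phi_p(y)} = \norm{x-y}\cos\theta$, where $\theta = \angle(y-x, T_p\man) \in [0, \pi/2]$. The upper bound $\norm{\phi_p(x) - \phi_p(y)} \leq (1+\xi)\norm{x-y}$ then comes for free since projection is $1$-Lipschitz, so the entire task reduces to showing $1 - \cos\theta \leq 4\rho^2$. Note also that $x, y \in \ballEN{p}{r} \cap \man$ gives $\norm{x-y} \leq 2r = 2\rho \rbnd$.

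Next, I would bound $\theta$ by a triangle inequality for angles, routing through the tangent space at $x$:
\[
\theta \;=\; \angle(y-x, T_p\man) \;\leq\; \angle([x,y], T_x\man) + \angle(T_x\man, T_p\man).
\]
(This follows by first projecting $y-x$ into $T_x\man$ and then into $T_p\man$, combined with the spherical triangle inequality on the unit sphere and the definition of the angle between subspaces.) For the first term, \Lemref{lem:dist.to.tanspace}(1) and $\norm{x-y} \leq 2\rho\rbnd$ give $\sin\angle([x,y], T_x\man) \leq \norm{x-y}/(2\rbnd) \leq \rho$. For the second, applying \Lemref{lem:tan.var} with $c = p$ and $r = \rho\rbnd < \rbnd$ (legitimate since $\rho < \tfrac12$ and $\rch(z,\man)\geq \rbnd$ on $U_p$) yields $\sin\angle(T_x\man, T_p\man) \leq \norm{x-p}/\rbnd \leq \rho$.

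Since each of the two summand angles has sine at most $\rho < \tfrac12$, each is at most $\pi/6$, so $\theta \leq \pi/3 < \pi/2$ and $\sin$ is monotone across the range. Expanding $\sin(\alpha+\beta)$ and using $\cos \leq 1$ then gives $\sin\theta \leq 2\rho$. Finally, because $\cos\theta > 0$,
\[
1 - \cos\theta \;=\; \frac{\sin^2\theta}{1+\cos\theta} \;\leq\; \sin^2\theta \;\leq\; 4\rho^2,
\]
from which $\bigl|\norm{\phi_p(x)-\phi_p(y)} - \norm{x-y}\bigr| = \norm{x-y}(1-\cos\theta) \leq 4\rho^2 \norm{x-y}$, as required.

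The only nontrivial step is the angle triangle inequality in the second paragraph; all other ingredients are direct applications of \Lemref{lem:dist.to.tanspace} and \Lemref{lem:tan.var}. The quadratic-in-$\rho$ improvement comes precisely from the identity $1 - \cos\theta \leq \sin^2\theta$, which converts the linear-in-$\rho$ bound on $\sin\theta$ into a quadratic bound on the distortion.
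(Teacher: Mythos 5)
Your proof is correct and follows essentially the same route as the paper: bound $\sin\angle([x,y],T_x\man)$ and $\sin\angle(T_x\man,T_p\man)$ by $\rho$ via \Lemref{lem:dist.to.tanspace}(1) and \Lemref{lem:tan.var}, combine to get $\sin\angle([x,y],T_p\man)\leq 2\rho$, and convert to the distortion bound via $1-\cos\theta\leq 4\rho^2$ (the paper writes this as $1-\sqrt{1-4\rho^2}\leq 4\rho^2$, which is equivalent to your $1-\cos\theta\leq\sin^2\theta$ step). Your explicit treatment of the angle triangle inequality and of the monotonicity range for $\sin$ is a welcome bit of extra care that the paper leaves implicit.
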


\begin{proof}
  For any distinct $x,y \in U_p$, we have, from
  \Lemref{lem:dist.to.tanspace}(1), that
  \begin{equation*}
    \sin \angle([x,y],T_x\man) \leq \rho,
  \end{equation*}
  and, from \Lemref{lem:tan.var}, that
  \begin{equation*}
    \sin \angle(T_p\man,T_x\man) \leq \rho.
  \end{equation*}
  Combining these bounds we have
  \begin{equation*}
    \sin \angle([x,y],T_p\man) \leq 2\rho.
  \end{equation*}
  Letting $\tx = \phi_p(x)$,\, $\ty = \phi_p(y)$, and $\theta =
  \angle([x,y],T_p\man)$, we find
  \begin{equation*}
    \begin{split}
      \norm{x-y} - \norm{\tx-\ty}
      &= (1-\cos\theta)\norm{x-y}\\
      &\leq (1 - \sqrt{1-4\rho^2})\norm{x-y}\\
      &\leq 4\rho^2 \norm{x-y}.
    \end{split}
  \end{equation*}
    The result follows, since $\norm{\tx-\ty}\leq \norm{x-y}$.
\end{proof}

\begin{remark}[differential of $\phi_p$]
  \label{rem:diff.phip}
  It straight forward to verify from the definitions that for any
  $x\in U_p$,
  \begin{equation*}
    d(\proj{T_p\man}|_{\man})_x = \proj{T_p\man}|_{T_x\man}.
  \end{equation*}
\end{remark}

The domain of the map $\lmap_p$, i.e., an upper bound on the allowable
size of the simplices in $\iota(\carrier{\str{p}})$, is governed by
the following bound on the metric distortion of the projection from a
simplex.

\begin{lem}[simplexwise distortion of $\lmap_p$]
  \label{lem:orth.proj.splx.distort}
  Suppose $\gsplxs \subset \amb$ is a simplex of dimension $\leq m$
  with vertices on $\man$. If $p$ is a vertex of $\gsplxs$, 
  and 
  \begin{equation*}
    L(\gsplxs) < t(\gsplxs)\rch(p,\man),
  \end{equation*}
  then the restriction of $\proj{T_p\man}$ to $\gsplxs$ is a
  $\xi$-distortion map with
  \begin{equation*}
    \xi = \left(\frac{L(\gsplxs)}{t(\gsplxs)\rch(p,\man)}\right)^2.
  \end{equation*}
\end{lem}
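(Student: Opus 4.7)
The plan is to reduce everything to the angle bound already established in Lemma~\ref{lem:splx.tanspace.angle.bnds}(1) together with the elementary fact that orthogonal projection contracts a segment by a factor equal to the cosine of the angle it makes with the target subspace.

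First I would set $s = L(\gsplxs)/(t(\gsplxs)\rch(p,\man))$, so the hypothesis gives $s<1$, and apply Lemma~\ref{lem:splx.tanspace.angle.bnds}(1) to get $\sin\theta \le s$, where $\theta = \angle(\gsplxs, T_p\man)$. For any two points $x,y \in \gsplxs$, the vector $x-y$ lies in the direction subspace of $\aff(\gsplxs)$, so the angle $\alpha = \angle([x,y],T_p\man)$ satisfies $\alpha \le \theta$, and hence $\cos\alpha \ge \cos\theta = \sqrt{1-\sin^2\theta} \ge \sqrt{1-s^2}$, which is well-defined because $s<1$.

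Second, since $\proj{T_p\man}$ is an orthogonal projection, one has the standard identity $\norm{\proj{T_p\man}(x) - \proj{T_p\man}(y)} = \norm{x-y}\cos\alpha$. In particular the projection can only shrink distances, giving immediately the upper half of the distortion inequality, $\norm{\proj{T_p\man}(x)-\proj{T_p\man}(y)} - \norm{x-y} \le 0 \le s^2\norm{x-y}$.

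For the lower half, I need $(1-\cos\alpha)\norm{x-y} \le s^2\norm{x-y}$, i.e.\ $1-\cos\alpha \le s^2$. Using $\cos\alpha \ge \sqrt{1-s^2}$, it suffices to check $1 - \sqrt{1-s^2} \le s^2$, which is equivalent to $\sqrt{1-s^2} \ge 1-s^2$; this holds because $\sqrt{u}\ge u$ for $u\in[0,1]$, and $u = 1-s^2 \in [0,1]$. Combining the two halves gives the claimed $\xi$-distortion bound with $\xi = s^2$. There is no real obstacle here: the hypothesis $L < t\,\rch(p,\man)$ is exactly what forces $s<1$, keeping the square root real and the bound meaningful.
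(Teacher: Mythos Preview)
Your proof is correct and follows essentially the same route as the paper's: both invoke \Lemref{lem:splx.tanspace.angle.bnds}(1) to bound $\sin\angle(\gsplxs,T_p\man)$, use that orthogonal projection scales a segment by the cosine of its angle with the target, and then apply the elementary inequality $1-\sqrt{1-s^2}\le s^2$. The only cosmetic difference is that the paper names the angle $\theta=\angle([x,y],[\x,\y])$ while you name it $\alpha=\angle([x,y],T_p\man)$; these are the same quantity.
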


\begin{proof}
  Let $x,y \in \gsplxs$ and set $\x = \proj{T_p\man}(x)$,\, 
  $\y = \proj{T_p\man}(y)$. 
By \Lemref{lem:splx.tanspace.angle.bnds}(1),
  \begin{equation*}
    \sin\angle(\gsplxs,T_p\man) \leq 
    \frac{L}{t\rch(p,\man)}. 
  \end{equation*}
% By \Lemref{lem:dist.to.tanspace}, all the
%   vertices of $\gsplxs$ are within a distance $\eta =
%   L^2/(2\rch(p,\man))$ from $T_p\man$, and so
%   \Lemref{lem:Whitneys.angle.bnd} ensures
%   \begin{equation*}
%     \sin\angle(\gsplxs,T_p\man) \leq \frac{2\eta}{tL} =
%     \frac{L}{t\rch(p,\man)}. 
%   \end{equation*}
  So, putting
  $\theta=\angle([x,y],[\x,\y]) \leq \angle(\gsplxs,T_p\man)$ we find
  \begin{equation*}
    \begin{split}
      \norm{x-y} - \norm{\x-\y}
      &= (1 - \cos\theta)\norm{x-y}\\
      &\leq \left(1 -
        \sqrt{1 - \left(\frac{L}{t\rch(p,\man)}\right)^2} \right)\norm{x-y}\\
      &\leq \left(\frac{L}{t\rch(p,\man)}\right)^2\norm{x-y}.
    \end{split}
  \end{equation*}
  The result follows since $\norm{\x-\y}\leq \norm{x-y}$.
\end{proof}

Therefore, in order to use our framework, we need to ensure that for
each simplex $\gsplxs \in \acplx$, the simplex
$\iota(\gsplxs) \subset \amb$ satisfies $L < t \rbnd$, where
$\rbnd \leq \rch(p,\man)$ for each vertex $p$ of $\iota(\gsplxs)$ (we will
in fact need a stronger bound than this). Then,
in conformance with \Secref{sec:star.method}, we set
$\lmap_p = \proj{T_p\man} \circ \iota|_{\carrier{\str{p}}}$, and we
\emph{require} that it be an embedding. Although we have established
that $\proj{T_p\man}$ is an embedding on each simplex, and we have
assumed that $\iota|_{\carrier{\str{p}}}$ is an embedding, these
criteria do not imply that $\lmap_p$ is an embedding. We leave this as
a requirement of the embedding theorem (requirement (a) of
\Thmref{thm:triang.subman}), i.e., something that needs to be
established in context. For the case of the tangential Delaunay
complex, the embedding follows naturally because
$\phi_p(\carrier{\str{p}})$ is seen as a weighted Delaunay
triangulation in $T_p\man$ \cite{boissonnat2014tancplx}.

\subsection{Distortion of the closest-point projection map: $H$}
\label{sec:dist.closest.pt.proj}

% repeating the intro to Sec. 4 for Mathijs:
Recall that $H\colon \carrier{\acplx} \to \man$ is the map that we
wish to show is a homeomorphism. In our current context $H$ is based
on the closest-point projection map $\projM\colon \R^N \to \man$. As
discussed at the beginning of this section, we define
$H = \projM \circ \iota$, where $\iota\colon \carrier{\acplx}\to \R^N$
is the immersion of our simplicial complex into $\R^N$. Once $H$ is
shown to be a homeomorphism, it follows that $\iota$ is in fact an
embedding, but we don't assume this a priori. The metric on
$\carrier{\acplx}$ (i.e., the edge lengths of the Euclidean simplices)
is \emph{defined} by $\iota$, so $\iota$ itself does not contribute to
the metric distortion of $H$.

An upper bound for the metric distortion of $\projM$ was demonstrated
by \cite[Theorem~4.8(8)]{federer1959}; the proof we present here is
similar, but less general, since we require $\man$ to be a
differentiable submanifold:

\begin{lem}[upper bound for $\projM$ distortion]
  \label{lem:projM.distort.upper.bnd}
  Let $x,y \in\UM$ and $\rbnd = \min\{\rch(\px,\man),\rch(\py,\man)\}$,
where $\px=\projM(x)$, and $\py=\projM(y)$, as usual.  If
$a \geq \max\{\dtoM{x},\dtoM{y}\}$ for some $a<\rbnd$, then
  \begin{equation}
    \label{eq:projM.distort.upper.bnd}
    \norm{\py-\px} \leq \left(1-\frac{a}{\rbnd} \right)^{-1}\norm{y-x}.
  \end{equation}
\end{lem}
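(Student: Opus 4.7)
The plan is to bound $\norm{\py-\px}^2$ by decomposing the vector $\py-\px = (\py-y)+(y-x)+(x-\px)$ and taking the inner product with $\py-\px$, then controlling the two ``endpoint'' inner products using the empty tangent ball property (\Lemref{lem:empty.reach.ball}).

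First, I would use the fact that $x-\px$ lies in $N_{\px}\man$ (this is the defining property of the closest point $\px$). Writing $n_x = (x-\px)/\norm{x-\px}$, the open ball of radius $\rbnd$ centered at $\px + \rbnd\,n_x$ is tangent to $\man$ at $\px$ and, by \Lemref{lem:empty.reach.ball} together with the hypothesis $\rch(\px,\man)\geq\rbnd$, contains no point of $\man$. Applied to $\py\in\man$ this yields
\begin{equation*}
\norm{\py-\px-\rbnd n_x}^2 \geq \rbnd^2,
\end{equation*}
which expands to give the key inequality
\begin{equation*}
\langle \py-\px,\,x-\px\rangle \leq \frac{\norm{x-\px}}{2\rbnd}\,\norm{\py-\px}^2 \leq \frac{a}{2\rbnd}\,\norm{\py-\px}^2.
\end{equation*}
By an identical argument with the roles of $x,\px$ and $y,\py$ swapped (using $\rch(\py,\man)\geq\rbnd$ and $\dtoM{y}\leq a$), I get
\begin{equation*}
\langle \px-\py,\,y-\py\rangle \leq \frac{a}{2\rbnd}\,\norm{\py-\px}^2.
\end{equation*}

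Now I take the inner product of $\py-\px$ with itself using the decomposition above:
\begin{equation*}
\norm{\py-\px}^2 = \langle \py-\px,\,\py-y\rangle + \langle \py-\px,\,y-x\rangle + \langle \py-\px,\,x-\px\rangle.
\end{equation*}
The first and third terms are precisely the quantities bounded above (the first being the negative of $\langle \px-\py, y-\py\rangle$), so each contributes at most $\frac{a}{2\rbnd}\norm{\py-\px}^2$. The middle term is bounded by Cauchy--Schwarz by $\norm{\py-\px}\cdot\norm{y-x}$. Combining,
\begin{equation*}
\norm{\py-\px}^2 \leq \frac{a}{\rbnd}\,\norm{\py-\px}^2 + \norm{\py-\px}\cdot\norm{y-x}.
\end{equation*}
Since $a<\rbnd$, the factor $(1-a/\rbnd)$ is positive, and dividing through (assuming $\py\neq\px$, otherwise the claim is trivial) yields the desired estimate \eqref{eq:projM.distort.upper.bnd}.

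The only mildly delicate point is the step that extracts the inequality $\langle \py-\px, n_x\rangle\leq \norm{\py-\px}^2/(2\rbnd)$ from the empty-ball condition; this is a standard expansion, but it is the place where the hypothesis $\rch(\px,\man)\geq\rbnd$ (and analogously at $\py$) is used, so one must be careful that $\rbnd$ was defined to satisfy both. Everything else is a straightforward algebraic manipulation.
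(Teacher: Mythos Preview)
Your proof is correct and is essentially the same argument as the paper's, just packaged algebraically rather than geometrically. The paper orthogonally projects $x$ and $y$ onto the line $\ell$ through $\px$ and $\py$, obtaining points $\tx,\ty$, and uses $\norm{y-x}\geq\norm{\ty-\tx}\geq\norm{\py-\px}-\norm{\px-\tx}-\norm{\py-\ty}$; the ``slippage'' terms $\norm{\px-\tx}$ and $\norm{\py-\ty}$ are then bounded by $\frac{a}{2\rbnd}\norm{\py-\px}$ via \Lemref{lem:dist.to.tanspace}(1) applied to the segment $[\px,\py]$. Your inner-product decomposition is computing the same quantities (indeed $\langle x-\px,\py-\px\rangle$ is $\norm{\py-\px}$ times the signed projection of $x-\px$ onto $\ell$), but you skip the intermediate angle lemma and go straight to \Lemref{lem:empty.reach.ball}; this is marginally more direct and closer to Federer's original argument, while the paper's version has the small advantage of reusing an already-stated lemma. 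One minor omission: when $x=\px$ the unit normal $n_x$ is undefined, but then $\langle\py-\px,x-\px\rangle=0$ and the bound holds trivially.
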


\begin{proof}
  Let $\tx$, $\ty$ be the orthogonal projection of $x$ and $y$ into
  the line $\ell$ generated by $\py-\px$. Then
  \begin{equation}
    \label{eq:init.upbnd}
    \norm{y-x} \geq \norm{\ty-\tx} \geq 
    \norm{\py-\px} - \norm{\py-\ty} - \norm{\px-\tx}.
  \end{equation}
  Let $Q_x$ be the hyperplane through $\px$ and orthogonal to
  $[x,\px]$. Then, using \Lemref{lem:angle.flat.complement}(2),
  we have 
  \[
    \norm{\px-\tx} = \norm{x-\px}\cos\angle([x,\px],\ell)
    = \norm{x-\px}\sin\angle(\ell,Q_x).
  \]
  Since $T_{\px}\man \subseteq Q_x$, we have
  $\angle(\ell,Q_x) \leq \angle(\ell,T_{\px}\man)$, and so by
  \Lemref{lem:dist.to.tanspace}(1),
  \begin{equation*}
    \norm{\px-\tx} \leq \frac{\norm{\py-\px}}{2\rch(\px,\man)}\norm{x-\px}
    \leq \frac{a}{2\rbnd}\norm{\py-\px}.
  \end{equation*}
  Likewise,
  \begin{equation*}
    \norm{\py-\ty} \leq \frac{a}{2\rbnd}\norm{\py-\px}.    
  \end{equation*}
  Thus \eqref{eq:init.upbnd} yields
  \begin{equation*}
    \norm{y-x} \geq \left(1 - \frac{a}{\rbnd} \right)\norm{\py-\px},
  \end{equation*}
  and hence the result.
\end{proof}

\begin{lem}[simplexwise distortion of $H$]
  \label{lem:distort.projM}
  Suppose $\gsplxs \subset \UM$ is a simplex of dimension $\leq m$
  whose vertices lie in $\man$, and there is a ball $\ballEN{c}{r}$
  such that for all $z\in\gsplxs$,\, $\pz \in B = \ballEN{c}{r} \cap
  \man$, where $\pz=\projM(z)$. Let $\rbnd$ be a lower bound on
  $\rch(\tz,\man)$ for all $\tz\in B$. If $r < \rbnd$, and
  $L(\gsplxs) < t(\gsplxs)\rbnd/3$, then
  the restriction of $\projM$ to $\gsplxs$ is a $\xi$-distortion
  map with
  \begin{equation*}
    \xi = \frac{12L^2}{t^2\rbnd^2}.
  \end{equation*}
\end{lem}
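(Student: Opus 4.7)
The plan is to split the distortion inequality into its two sides and handle them separately. The upper (stretching) side follows almost immediately from \Lemref{lem:projM.distort.upper.bnd}, while the lower (contracting) side requires an integrated-differential argument, since a direct chord-length computation leaves an additive error that resists conversion to a multiplicative bound for very short chords.

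For the upper side, by \Lemref{lem:splxs.close.to.man} every point of $\gsplxs$ lies within $a := 2L^2/\rbnd$ of $\man$, and the hypothesis $L < t\rbnd/3$ (together with $t \leq 1$) gives $a/\rbnd \leq 2/9 < 1$, so \Lemref{lem:projM.distort.upper.bnd} applies to any pair $x,y \in \gsplxs$ and yields $\norm{\py - \px} \leq \norm{y-x}/(1 - 2L^2/\rbnd^2)$. A short rearrangement produces $\norm{\py - \px} - \norm{y-x} \leq 3 L^2/(t^2\rbnd^2)\cdot \norm{y-x}$, comfortably within the claimed bound.

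For the lower side, I would parametrize the chord by $\gamma(s) = x + s(y-x)$, $s \in [0,1]$, which stays in the convex set $\gsplxs \subset \UM$. \Thmref{thm:tube.nbrhd.thm} makes $\projM$ of class $C^1$ on $\UM$, so
\[
\py - \px \;=\; \int_0^1 d\projM_{\gamma(s)}(y - x)\, ds.
\]
Dotting with $\hat{e} := (y-x)/\norm{y-x}$ and using $\norm{\py - \px} \geq \langle \py - \px, \hat{e}\rangle$, the task reduces to a uniform pointwise lower bound on $\langle d\projM_z(\hat{e}), \hat{e}\rangle$ over $z \in \gsplxs$. Two ingredients feed this estimate: (i)~a standard tubular-neighbourhood calculation giving that $d\projM_z$ annihilates $N_{\pz}\man$ and restricts to a symmetric operator on $T_{\pz}\man$ with eigenvalues at least $1 - h_z/\rbnd$, where $h_z = \dtoM{z} \leq 2L^2/\rbnd$ (via the Weingarten-type identity $d\projM_z|_{T_{\pz}\man} = (\Id - h_z W_{n_z})^{-1}$ combined with $\onorm{W_{n_z}} \leq 1/\rbnd$); and (ii)~the tangent-space angle bound $\sin\angle(\gsplxs, T_{\pz}\man) \leq 3L/(t\rbnd)$ from \Lemref{lem:splx.tanspace.angle.bnds}(2), which forces the tangential component $\hat{e}_T = \proj{T_{\pz}\man}(\hat{e})$ to satisfy $\norm{\hat{e}_T}^2 \geq 1 - 9L^2/(t^2\rbnd^2)$. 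Combining these at each $z$ and integrating in $s$ gives $\norm{\py - \px} \geq (1 - 12 L^2/(t^2\rbnd^2))\norm{y-x}$, after absorbing cross terms into the constant.

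The main obstacle is the lower side: a naive triangle-inequality computation using the bounds $\norm{x-\px}, \norm{y-\py} \leq 2L^2/\rbnd$ together with the simplex-tangent angle yields only $\norm{\py - \px} \geq \norm{y-x} - O(L^3/(t\rbnd^2))$, an additive error that cannot be turned into a multiplicative $\xi\norm{y-x}$ bound when $\norm{y-x}$ is small. The integrated-differential approach circumvents this, but it relies on the pointwise eigenvalue description of $d\projM_z|_{T_{\pz}\man}$ above, which is the technical heart of the proof and requires either a self-contained derivation from the tubular neighbourhood $\psi$ or a citation to the standard second-fundamental-form identity for closest-point projection onto a Euclidean submanifold.
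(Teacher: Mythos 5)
Your proposal is correct, and its upper-bound half coincides with the paper's (both invoke \Lemref{lem:splxs.close.to.man} and \Lemref{lem:projM.distort.upper.bnd}); for the lower bound, however, you take a genuinely different route. The paper stays purely metric: it projects $y$ and $\py$ onto the hyperplane $Q_x$ through $\px$ orthogonal to $[x,\px]$, bounds $\norm{\hy-\px}\geq\bigl(1-9L^2/(t^2\rbnd^2)\bigr)\norm{y-x}$ via the simplex--tangent angle bound of \Lemref{lem:splx.tanspace.angle.bnds}(2), and then bounds the correction term by $\norm{\hpy-\hy}\leq\norm{\py-y}\sin\angle(Q_x,Q_y)\leq (2L^2/\rbnd^2)\norm{\py-\px}$ using \Lemref{lem:angle.flat.complement}(2) and \Lemref{lem:tan.var}; since this error is \emph{multiplicative} in $\norm{\py-\px}$, feeding the already-established upper bound back in makes the whole estimate multiplicative, so the additive $O(L^3)$ obstruction you attribute to chord-based computations does not in fact arise -- the paper's proof is exactly such a computation, organised to avoid it. Your route instead integrates $d\projM$ along the chord and uses the identity $d\projM_z=(\Id-h_zW_{n_z})^{-1}\circ\proj{T_{\pz}\man}$ together with $\onorm{W_{n_z}}\leq 1/\rbnd$ and the same angle bound; the estimate does close, and more cleanly than you suggest, since the cross term vanishes exactly ($(\Id-h_zW_{n_z})^{-1}\hat{e}_T\perp\hat{e}_N$), giving the pointwise bound $\langle d\projM_z\hat{e},\hat{e}\rangle\geq\bigl(1-2L^2/\rbnd^2\bigr)\bigl(1-9L^2/(t^2\rbnd^2)\bigr)$ and in fact the slightly better constant $11$ in place of $12$. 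What your approach buys is this cleaner pointwise picture and marginally sharper constant; what it costs is that the Weingarten identity for $d\projM$ is established nowhere in the paper and must be derived (via curves in the tubular neighbourhood) or cited, and it requires the $C^1$ regularity of $\projM$ furnished by \Thmref{thm:tube.nbrhd.thm} for $C^2$ submanifolds, whereas the paper's argument uses only the metric lemmas already proved. If you keep your route, make that derivation and the regularity remark explicit -- as written they are the only places where your proof is not self-contained.
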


\begin{proof}
  By \Lemref{lem:splxs.close.to.man},
  \begin{equation}
    \label{eq:splx.man.dist}
    \dtoM{x} < a= \frac{2L^2}{\rbnd}  \quad\text{for any }
    x\in \gsplxs.
\end{equation}
Thus it follows from
  \Lemref{lem:projM.distort.upper.bnd} that
  \begin{equation}
    \label{eq:up.bnd.H.dist}
    \norm{\py-\px} \leq \left(1-\frac{2L^2}{\rbnd^2}
    \right)^{-1}\norm{y-x}
    \leq \left(1+ \frac{4L^2}{\rbnd^2}\right)\norm{y-x},
  \end{equation}
  for any $x,y\in\gsplxs$. 

  We now need to establish a lower bound on $\norm{\py-\px}$. Let $Q_x$
  be the hyperplane through $\px$ and orthogonal to $[x,\px]$, and let
  $\hy$ and $\hpy$ be the orthogonal projection of $y$ and $\py$ into
  $Q_x$. We have
  \begin{equation}
    \label{eq:raw.low.bnd}
    \norm{\py-\px} \geq \norm{\hpy-\px} \geq \norm{\hy-\px} - \norm{\hpy-\hy}.
  \end{equation}

  To get a lower bound on $\norm{\hy-\px} =
  \norm{y-x}\cos\angle([x,y],Q_x)$, notice that
  \begin{equation*}
    \angle([x,y],Q_x) \leq \angle(\aff(\gsplxs),Q_x)
    \leq \angle(\aff(\gsplxs),T_{\px}\man),
  \end{equation*}
  and by \Lemref{lem:splx.tanspace.angle.bnds}(2),
  \begin{equation*}
    \sin\angle(\aff(\gsplxs),T_{\px}\man) \leq \frac{3L}{t\rbnd}.
  \end{equation*}
  Thus,
  \begin{equation*}
    \cos\angle([x,y],Q_x)
    \geq
    \left(1 - \left(\frac{3L}{t\rbnd}\right)^2\right)^{\frac12}
    \geq
    \left(1 - \frac{9L^2}{t^2\rbnd^2}\right),
  \end{equation*}
  and so
  \begin{equation}
    \label{eq:raw.bnd.term.one}
    \norm{\hy-\px} \geq \left(1 - \frac{9L^2}{t^2\rbnd^2}\right)\norm{y-x}.
  \end{equation}

  To get an upper bound on the second term on the right side of
  \eqref{eq:raw.low.bnd}, let $Q_y$ be the hyperplane through $\py$
  and orthogonal to $[\py,y]$. We have
  \begin{equation*}
    \begin{split}
    \norm{\hpy-\hy}
    &= \norm{\py-y}\cos\angle([\py,y],Q_x)\\
    &= \norm{\py-y}\sin\angle(Q_x,Q_y)
    \quad \text{by \Lemref{lem:angle.flat.complement}(2)}\\
    &\leq \norm{\py-y}\sin\angle(T_{\px},T_{\py})\\
    &\leq \norm{\py-y}\frac{\norm{\py-\px}}{\rbnd}
    \quad \text{by \Lemref{lem:tan.var}}\\
    &\leq \frac{2L^2}{\rbnd^2}\norm{\py-\px}
    \quad \text{by \eqref{eq:splx.man.dist}}.
 \end{split}
  \end{equation*}
  Now, using \eqref{eq:up.bnd.H.dist}, we have
  \begin{equation}
    \label{eq:raw.bnd.term.two}
    \norm{\hpy-\hy}
    \leq \frac{2L^2}{\rbnd^2}\left(1+
      \frac{4L^2}{\rbnd^2}\right)\norm{y-x}
    < \frac{3L^2}{\rbnd^2} \norm{y-x},
  \end{equation}
  since the hypothesis $3L/\rbnd < t$ implies $4L^2/\rbnd^2 < \frac12$.

  Finally, plugging \eqref{eq:raw.bnd.term.one} and
  \eqref{eq:raw.bnd.term.two} back into \eqref{eq:raw.low.bnd}, we get
  \begin{equation*}
    \norm{\py-\px} \geq \left(1-\frac{12L^2}{t^2\rbnd^2} \right)\norm{y-x}.
  \end{equation*}
  Comparing this lower bound with \eqref{eq:up.bnd.H.dist}, we arrive
  at the stated value for the metric distortion $\xi$.
\end{proof}

\subsection{Triangulation criteria for submanifolds}

In order to employ \Thmref{thm:metric.triang} we first ensure that we
meet the compatible atlases criteria
(\Defref{def:compatible.atlases}). In \Lemref{lem:def.distort.phi} we defined
\begin{equation*}
  U_p = \ballEN{p}{r} \cap \man,
  \quad \text{where }
  r = \rho\rbnd < \tfrac12\rbnd. 
\end{equation*}
We need to ensure that $H(\carrier{\str{p}}) \subseteq U_p$. In our
context, this means that we require $\projM(\iota(\carrier{\str{p}}))
\subset U_p$, and 
%\eqref{eq:dist.leq.2L} 
\Lemref{lem:splxs.close.to.man}
ensures that if $L(\gsplxs)
\leq \Lbnd$ for all $\gsplxs \in \iota(\str{p})$, then it is
sufficient to choose $r=\Lbnd + 2\Lbnd^2/\rbnd$, or
\begin{equation}
  \label{eq:def.rho}
  \rho = \frac{\Lbnd}{\rbnd}\left(1+\frac{2\Lbnd}{\rbnd}\right).
\end{equation}
Our bound on $\Lbnd$ itself will be much smaller than
$\rbnd$, which in turn will be expressed in terms of $\lfs(p)$ or
$\rchM$, so we will have $r<\lfs(p)$, and thus $\ballEN{p}{r} \subset
\UM$, ensuring also that each $\gsplxs \in \iota(\str{p})$ also lies
in $\UM$.

We need to establish the metric distortion of
$F_p = \phi_p \circ H \circ \lmap_p^{-1}$ restricted to any
$m$-simplex in $\lmap_p(\iota(\str{p}))$, and ensure that it meets the
distortion control criterion of \Thmref{thm:metric.triang}.

Anticipating the bound we will need to meet the distortion-control
criterion of \Thmref{thm:metric.triang}, we impose the constraint
\begin{equation}
  \label{eq:prelim.Lbnd}
\chnk \leq \frac{1}{16^2},
\end{equation}
where $\tbnd$ is a lower bound on the thickness: $t(\gsplxs)\geq
\tbnd$ for all $\gsplxs \in \iota(\str{p})$. We remark that $\Lbnd$
and $\rbnd$ may be considered to be \emph{local constants}, i.e., they
may depend on the vertex $p \in \pts$, however $\tbnd$ and the ratio
$\Lbnd/\rbnd$ will be global constants.

Using \eqref{eq:prelim.Lbnd} together with Lemmas
\ref{lem:orth.proj.splx.distort} and \ref{lem:inv.comp.distort}(1), we
can bound the metric distortion of $\lmap_p^{-1}$ as
\begin{equation*}
  \xi_1 = \frac{L^2}{t^2\rbnd^2} 
  \left(1-\frac{L^2}{t^2\rbnd^2} \right)^{-1}
  \leq \left(\frac{16^2}{16^2-1}\right)\chnk.
%  \leq \left(\frac{400}{399}\right)\chnk.
\end{equation*}

\Lemref{lem:distort.projM} gives us the distortion of $H$:
\begin{equation*}
  \xi_2 = 12\chnk.
\end{equation*}

For $\phi_p$, using \Lemref{lem:def.distort.phi} and
\eqref{eq:def.rho} we get the distortion bound
\begin{equation*}
  \xi_3 = 4\rho^2
  = \frac{4\Lbnd^2}{\rbnd^2}\left(1 + \frac{2\Lbnd}{\rbnd} \right)^2
  \leq \frac{9}{2}  \chnk.
\end{equation*}

\Lemref{lem:inv.comp.distort}(2) says that the distortion of $F_p$ is
no more than
\begin{equation*}
  \xi = \xi_1 + \xi_2 + \xi_3 + \xi_1\xi_2 + \xi_1\xi_3 + \xi_2\xi_3
  + \xi_1\xi_2\xi_3.
\end{equation*}
Using \eqref{eq:prelim.Lbnd} we find that $F_p$ is a $\xi$-distortion
map with
\begin{equation}
  \label{eq:defn.xi}
  \xi = \chink{19}.
\end{equation}
%Note: 
% The calculations for this section are done in calc.py. They are
% self-referential, in that the bound in \eqref{eq:prelim.Lbnd} cannot
% be stricter than the final bound found in \eqref{eq:Lbnd.squared},
% but if it is too loose, the final bound will get bigger.
Observe that \eqref{eq:prelim.Lbnd} implies that all the maps involved
have distortion less than~1.

Now we need to ensure that this bound meets the distortion-bound
requirement for \Thmref{thm:metric.triang}. We have chosen to use here
the properties of the Euclidean simplices in the ambient space $\amb$,
but in \Thmref{thm:metric.triang} we are considering simplices in the
local coordinate space; for us these are the projected simplices, e.g.,
$\hat{\gsplxs} = \lmap_p(\gsplxs) = \proj{T_p\man}(\gsplxs)$.  Using
\Lemref{lem:orth.proj.splx.distort}, the distortion properties of the
affine map $\proj{T_p\man}$ imply
\begin{equation*}
  a(\hat{\gsplxs}) \geq \left(1 - \chnk \right)a(\gsplxs),
  \quad
  L(\hat{\gsplxs}) \leq L(\gsplxs),
\end{equation*}
and therefore
\begin{equation*}
  t(\hat{\gsplxs}) \geq \left(1 - \chnk \right)t(\gsplxs).
\end{equation*}
We can thus set 
\begin{equation*}
  \htbnd = \left(1 - \chnk \right)\tbnd,
  \quad
  \hLbnd = \Lbnd,
  \quad
  \hsbnd = \left(1 - \chnk \right)\sbnd,
\end{equation*}
where $\sbnd$ is a lower bound for the diameters of the simplices
in $\iota(\str{p})$.

Then, in order to meet the distortion control criterion of
\Thmref{thm:metric.triang}, we require
\begin{equation}
  \label{eq:Lbnd.needed}
  \chink{19} < \frac{\hsbnd\htbnd^2}{12\hLbnd}
  = \left(1 - \chnk \right)^3 \frac{\sbnd\tbnd^2}{12\Lbnd}.
\end{equation}

It is convenient to define $\sepbnd = \sbnd/\Lbnd$. Then, observing
that
\begin{equation*}
\left(1 - \chnk \right)^3 \geq \left(1 - 3\chnk \right),
\end{equation*}
we see that \eqref{eq:Lbnd.needed} is satisfied if
\begin{equation}
  \label{eq:Lbnd.squared}
  \Lbnd^2 \leq \frac{\sepbnd\tbnd^4\rbnd^2}{16^2}.
\end{equation}

Now we consider $\rbnd$. For any $x\in U_p$, the Lipschitz
continuity of $\lfs$ ensures that $\lfs(x) > \lfs(p) - \rho\rbnd$,
where $\rho$ is given by \eqref{eq:def.rho}. Our
constraint \eqref{eq:Lbnd.squared} on $\Lbnd$ implies
\begin{equation}
  \label{eq:bnd.r}
  \rho\rbnd \leq \frac{9}{8}\Lbnd \leq \frac{9}{2^7}\rbnd.
\end{equation}
Thus we need 
%$\rbnd \leq \lfs(p) - \frac{9}{2^7}\rbnd$, 
$\rbnd \leq \lfs(p) - \frac{9}{128}\rbnd$, 
which is
satisfied by
\begin{equation}
  \label{eq:lfs.def.rbnd}
  \rbnd = \frac{128}{137}\lfs(p).
\end{equation}
Of course, we can also choose $\rbnd=\rchM$ independent of
$p$. Plugging these values back into \eqref{eq:Lbnd.squared} gives us
two alternatives for the bound on $\Lbnd$:
\begin{equation}
  \label{eq:two.Lbnds}
  \Lbnd \leq \frac{\sepbnd^{\frac12}\tbnd^2\lfs(p)}{18}
  \quad\text{or}\quad
  \Lbnd \leq \frac{\sepbnd^{\frac12}\tbnd^2\rchM}{16}.
\end{equation}

We now only need to establish that $F_p$ is simplexwise positive
to arrive at our triangulation theorem for submanifolds.

\begin{lem}
  $F_p$ is simplexwise positive on
  $\proj{T_p\man}(\iota(\carrier{\str{p}}))$. 
\end{lem}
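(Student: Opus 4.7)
The plan is to verify the two parts of simplexwise positivity (see \Appref{sec:degree.theory}): continuity of $F_p$, and that its restriction to each $m$-simplex $\hat{\gsplxs}\in\str{\p}$ is an orientation-preserving topological embedding. Continuity is immediate, since each factor of $F_p=\phi_p\circ H\circ\lmap_p^{-1}$ is continuous on its domain, and the choice of $r$ in \eqref{eq:def.rho} together with \Lemref{lem:splxs.close.to.man} guarantees $\iota(\carrier{\str{p}})\subset\UM$, so that $\projM$ is defined on the image of $\iota$. The embedding property on each $m$-simplex comes essentially for free: \eqref{eq:defn.xi} combined with \eqref{eq:prelim.Lbnd} already establishes that $F_p|_{\hat{\gsplxs}}$ is a $\xi$-distortion map with $\xi<1$, and the discussion following \Defref{def:distortion.map} shows that such a map is a homeomorphism onto its image.

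For orientation preservation I plan a differential-topological argument. First, $F_p|_{\hat{\gsplxs}}$ is differentiable: $\lmap_p^{-1}|_{\hat{\gsplxs}}$ is affine (it inverts the linear map $\projp$ on $\aff(\iota(\gsplxs))$), $\projM$ is $C^{j-1}$ on $\UM$ by \Thmref{thm:tube.nbrhd.thm}, and $\phi_p=\projp|_{U_p}$ is smooth. Applying \Lemref{lem:distort.bnd.differential} with $\xi<1$ then gives that every singular value of $dF_p$ lies in $(0,2)$, so $\det dF_p$ is continuous and nonvanishing on the connected set $\hat{\gsplxs}$, and hence $\sgn\det dF_p$ is constant there.

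To pin down that sign I would evaluate $dF_p$ at $\p$, a vertex of every $m$-simplex in $\str{\p}$. Chasing the composition
\[
dF_p|_{\p} \;=\; d\phi_p|_p \,\circ\, dH|_p \,\circ\, d\lmap_p^{-1}|_{\p},
\]
I use that $\lmap_p=\projp\circ\iota$ on each simplex, so $d\lmap_p^{-1}|_{\p}$ is the inverse of $\projp$ on the direction space $V=\aff(\iota(\gsplxs))-p$; that $dH|_p=d\projM|_p=\projp$ is a standard property of the closest-point projection at a point of $\man$ (the identity on $T_p\man$ and zero on $N_p\man$, readily checked from \Lemref{lem:empty.reach.ball}); and that $d\phi_p|_p=\projp|_{T_p\man}=\id_{T_p\man}$ by \Remref{rem:diff.phip}. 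The two projections then cancel and $dF_p|_{\p}=\id_{T_p\man}$, whose determinant is $+1$, so by sign constancy $\det dF_p>0$ throughout $\hat{\gsplxs}$. Since a smooth embedding $\hat{\gsplxs}\to\R^m$ with everywhere positive Jacobian has local degree $+1$ at every image point of the interior, $F_p|_{\hat{\gsplxs}}$ is orientation-preserving in the sense of \Defref{def:orientation.preserving}. The only mildly delicate step is the differential computation at $\p$; everything else is routine assembly of facts already established.
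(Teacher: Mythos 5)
Your proof is correct, and it assembles the same ingredients as the paper in a slightly different way. The paper factors the orientation argument through the three maps: it chooses an orientation on $U_p$, shows $\phi_p$ is simplexwise positive by continuity of the nondegenerate differential $d(\phi_p)$ starting from $x=p$, \emph{defines} the orientation on $\iota(\carrier{\str{p}})$ so that $\lmap_p$ is simplexwise positive, and then shows $\projM|_{\gsplxs}$ is positive by evaluating its differential at $p$ (namely $\id_{T_p\man}\circ\proj{T_p\man}|_{\gsplxs}$) and propagating the sign via the nondegeneracy inherited from $d(F_p)$. You instead treat the composite directly: the chain rule at the central vertex $\p$ gives $d(F_p)_{\p}=\id_{T_p\man}$, since $d(\phi_p)_p=\id$ and $d(\projM)_p=\proj{T_p\man}$ cancel against $d(\lmap_p^{-1})_{\p}=(\proj{T_p\man}|_V)^{-1}$ on the direction space $V$ of $\aff(\iota(\gsplxs))$, and then $\xi<1$ together with \Lemref{lem:distort.bnd.differential} makes $\det d(F_p)$ nonvanishing, hence of constant positive sign, on each connected $m$-simplex, which gives degree $+1$ as required by \Defref{def:orientation.preserving}. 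Your route is a bit leaner: because $F_p$ maps $\R^m$ to $\R^m$, both sides carry the canonical orientation and no orientation choices on $U_p$ or on $\iota(\carrier{\str{p}})$ are needed; the paper's factored version buys, in exchange, the explicit intermediate statements that $\phi_p$, $\lmap_p$ and $H|_{\iota(\carrier{\str{p}})}$ are each simplexwise positive, which it records along the way. One small citation quibble: the vanishing of $d(\projM)_p$ on $N_p\man$ comes from the tubular neighbourhood structure (short normal segments at $p$ project back to $p$), not really from \Lemref{lem:empty.reach.ball}; the resulting identity $d(\projM)_p=\proj{T_p\man}$ is exactly the fact the paper also uses.
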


\begin{proof}
  Recall that $F_p = \phi_p \circ H \circ \lmap_p^{-1}$, and observe
  that the restriction of $F_p$ to an $m$-simplex
  $\hat\gsplxs \in \proj{T_p\man}(\iota(\str{p}))$ is differentiable.
  The bound \eqref{eq:Lbnd.squared} together with
  \eqref{eq:defn.xi} implies $F_p$ is a $\xi$-distortion map with
  $\xi<1$, so
  \Lemref{lem:distort.bnd.differential} ensures that for any
  $m$-simplex $\gsplxs \in \proj{T_p\man}(\iota(\str{p}))$, the
  differential of $F_p$ does not vanish on $\gsplxs$. 

We choose an orientation on $U_p$; thus we have an
orientation on each tangent space $T_x\man$,\, $x\in U_p$. The
projection map $\proj{T_p\man}|_{T_x\man} =
d(\proj{T_p\man}|_{\man})_x = d(\phi_p)_x$
(see Remark~\ref{rem:diff.phip}) is then orientation preserving, because
of continuity: it is certainly true when $x=p$, and
Lemmas \ref{lem:def.distort.phi} 
and \ref{lem:distort.bnd.differential}
imply that 
the differential $d(\phi_p)$ is
nondegenerate on $U_p$. Thus $\phi_p$ is simplexwise positive.

For an $m$-simplex $\gsplxs \in \iota(\str{p})$ we \emph{define} the
orientation such that $\proj{T_p\man}|_{\gsplxs}$ is
positive. Equivalently, we define the orientation on
$\iota(\carrier{\str{p}})$ to be such that $\lmap_p$ is simplexwise
positive. This is not problematic, since we require that $\lmap_p$ be
an embedding. Thus $\lmap_p$ is simplexwise positive by our definitions.

It remains to show that $\projM|_{\gsplxs}$ is positive for each
$m$-simplex $\gsplxs \in \iota(\str{p})$. First observe that for any
$x\in \R^N$, the kernel of $d(\projM)_x$ is the subspace of $T_x\R^N$
corresponding to $N_{\px}\man$ (under the canonical identification
$T_x\R^N \cong \R^N \cong T_{\px}\R^N$, where $\px=\projM(x)$). Also,
observe that if $x\in \man$, then
$d(\projM|_{T_x\man})_x = \id_{T_x\man}$. This is easily seen by
observing that $d(\projM|_{T_x\man})_x = d(\projM)_x|_{T_x\man}$ and
using curves on $\man$ to apply the definition of the differential. 

Thus at the central vertex $p \in \gsplxs$, we can express
$d(H|_{\gsplxs})_p$ as
\begin{equation*}
  d(\projM|_{\gsplxs})_p = \id_{T_p\man} \circ \proj{T_p\man}|_{\gsplxs}.
\end{equation*}
Since we have established that $\proj{T_p\man}|_{\gsplxs}$ is
positive, it follows that $d(\projM|_{\gsplxs})_p$ is positive, and since
$d(\projM|_{\gsplxs})$ is nondegenerate on $\gsplxs$ (because $d(F_p)$
is), it follows that 
$H|_{\iota(\carrier{\str{p}})}$ is simplexwise positive.

Thus $F_p$ is orientation preserving on each simplex of $\str{\p}$.
\end{proof}

\begin{thm}[triangulation for submanifolds]
  \label{thm:triang.subman}
  Let $\man \subset \R^N$ be a compact $C^2$ manifold, and
  $\pts \subset \man$ a finite set of points such that for each
  connected component $\cmpman$ of $\man$,\,
  $\cmpman\cap \pts \neq \emptyset$. Suppose that $\acplx$ is a
  simplicial complex whose vertices, $\acplx^0$, are identified with
  $\pts$, by a bijection $\acplx^0\to\pts$ such that the resulting
  piecewise linear map $\iota\colon \carrier{\acplx} \to \R^N$ is an
  immersion, i.e., $\iota|_{\carrier{\str{p}}}$ is an embedding for
  each vertex $p$.

  If:
  \begin{enumerate}\def\theenumi{\alph{enumi}}
  \item For each vertex $p\in \pts$, the projection
    $\proj{T_p\man}|_{\iota(\carrier{\str{p}})}$ is an embedding and
    $p$ lies in the interior of $\proj{T_p\man}(\iota(\carrier{\str{p}}))$.
  \item There are constants $0<\tbnd\leq1$,\, $0<\sepbnd \leq 1$, and
    $\epsilon_0>0$ such that for each simplex
    $\gsplxs \in \iota(\acplx)$, and each vertex $p \in \gsplxs$,
    \begin{equation*}
      t(\gsplxs)\geq \tbnd,
      \quad \sepbnd\epsilon_0\lfs(p) \leq L(\gsplxs) \leq
      \epsilon_0\lfs(p),
      \quad \epsilon_0 \leq \frac{\sepbnd^{\frac12}\tbnd^2}{18}.
    \end{equation*}
  \item For any vertices $p,q \in \pts$, if
\[
q \in U_p = \ballEN{p}{r} \cap \man, 
\quad\text{where } r= \frac{\lfs(p)}{15},
%$r=9\lfs(p)/137$,
\]
    then $\proj{T_p\man}(q) \in \proj{T_p\man}(\iota(\str{p}))$ if and
    only if $q$ is a vertex of $\str{p}$.
    \end{enumerate}
    Then:
    \begin{enumerate}
    \item $\iota$ is an embedding, so the complex $\acplx$ may be
      identified with $\iota(\acplx)$.
    \item The closest-point projection map
      $\projM|_{\carrier{\acplx}}$ is a homeomorphism
      $\carrier{\acplx} \to \man$.
    \item For any $x \in \gsplxs \in \acplx$,
      \begin{equation*}
        \dtoM{x} = \norm{\px-x} \leq \tfrac73\epsilon_0^2\lfs(\px),
        \quad
        \sin\angle(\gsplxs, T_{\px}) \leq \frac{13\epsilon_0}{4\tbnd},
      \end{equation*}
      where $\px = \projM(x)$.
    \end{enumerate}
\end{thm}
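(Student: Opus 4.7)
The plan is to reduce the theorem to an application of \Thmref{thm:metric.triang}, using all the machinery developed in Sections \ref{sec:orthog.proj} and \ref{sec:dist.closest.pt.proj}. Specifically, for each vertex $p \in \pts$, I would take $U_p = \ballEN{p}{r}\cap \man$ with $\rho\rbnd$ and $\rbnd = \frac{128}{137}\lfs(p)$ as in \eqref{eq:lfs.def.rbnd}, take $\phi_p = \proj{T_p\man}|_{U_p}$, and $\lmap_p = \proj{T_p\man}\circ\iota|_{\carrier{\str{p}}}$. Hypothesis (a) says exactly that $\lmap_p$ is an embedding, and with the simplex-diameter constraint in hypothesis (b), \Lemref{lem:splxs.close.to.man} guarantees $H(\carrier{\str{p}}) = \projM(\iota(\carrier{\str{p}}))\subset U_p$, so the compatible atlases criterion of \Defref{def:compatible.atlases} is met with $\tccplx_p=\str{p}$.

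Next I would collect the quality and distortion-control hypotheses. Hypothesis (b) gives $\tbnd$ and $\Lbnd = \epsilon_0\lfs(p)$, $\sbnd = \sepbnd\epsilon_0\lfs(p)$, so that $\sepbnd = \sbnd/\Lbnd$, and the upper bound on $\epsilon_0$ in (b) is exactly \eqref{eq:two.Lbnds}. Hence the preliminary constraint \eqref{eq:prelim.Lbnd} is satisfied and the composition analysis culminating in \eqref{eq:defn.xi} shows that $F_p$ is a $\xi$-distortion map with $\xi \leq 19L^2/(t^2\rbnd^2)$, and by construction this $\xi$ satisfies the distortion-control bound of \Thmref{thm:metric.triang}(3) on the projected simplices with parameters $\hsbnd,\hLbnd,\htbnd$. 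The simplexwise positivity is already handled by the lemma preceding this theorem. For vertex sanity, note that $H(q) = \projM(\iota(q)) = q$ for $q \in \pts$, so the condition $\phi_p\circ H(q) \in \carrier{\strp}$ of \Defref{def:vertex.sanity} forces $q \in U_p$; picking $\rho$ so that $r = \rho\rbnd \leq \lfs(p)/15$ (easily verified from \eqref{eq:def.rho} and the bound on $\epsilon_0$) lets hypothesis (c) supply exactly the sanity condition needed. Applying \Thmref{thm:metric.triang} yields that $H = \projM\circ\iota$ is a homeomorphism. Since each $\iota|_{\carrier{\str{p}}}$ is an embedding by assumption, and $\projM$ is injective on each such image (because $H$ is), $\iota$ itself is injective, hence an embedding on the compact $\carrier{\acplx}$; this gives conclusions (1) and (2).

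For conclusion (3), the distance bound follows from \Lemref{lem:splxs.close.to.man} applied with $\rbnd = \frac{128}{137}\lfs(p)$: $\dtoM{x} \leq 2L^2/\rbnd \leq \frac{137}{64}\epsilon_0^2\lfs(p)$. I then pass from $\lfs(p)$ to $\lfs(\px)$ using the $1$-Lipschitz property of $\lfs$ together with the estimate $\norm{p-\px} \leq \norm{p-x} + \norm{x-\px} \leq L + \dtoM{x}$, which is much smaller than $\lfs(p)$; absorbing the resulting first-order correction into the constant gives the desired $\tfrac{7}{3}\epsilon_0^2\lfs(\px)$. The angle bound follows directly from \Lemref{lem:splx.tanspace.angle.bnds}(2): $\sin\angle(\gsplxs,T_{\px}\man) \leq 3L/(t\rbnd) \leq \frac{3\cdot 137}{128}\,\epsilon_0/\tbnd < \frac{13\epsilon_0}{4\tbnd}$, once one checks that the auxiliary hypotheses on the enclosing ball are met by the ball centred at $p$ of radius controlled by $\rho\rbnd$.

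The main obstacle, and the only part not completely routine, is bookkeeping the constants so that hypothesis (c) with the clean radius $\lfs(p)/15$ really does imply vertex sanity: one needs to check that $r = \rho\rbnd$ from the coordinate chart is at most $\lfs(p)/15$, which requires combining \eqref{eq:def.rho}, \eqref{eq:bnd.r}, and the choice \eqref{eq:lfs.def.rbnd}, and relies on $\epsilon_0 \leq \sepbnd^{1/2}\tbnd^2/18 \leq 1/18$. Everything else is a direct invocation of previously established lemmas together with the composition-of-distortions bookkeeping of \Lemref{lem:inv.comp.distort}.
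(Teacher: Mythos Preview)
Your proposal is correct and follows essentially the same route as the paper's proof: reduce to \Thmref{thm:metric.triang} via the distortion bookkeeping of \Secref{sec:orthog.proj}--\ref{sec:dist.closest.pt.proj}, invoke the simplexwise-positivity lemma, check that hypothesis~(c) with $r=\lfs(p)/15$ dominates the actual chart radius $\rho\rbnd$ (the paper gets $9\lfs(p)/137$ from \eqref{eq:bnd.r} and \eqref{eq:lfs.def.rbnd}), and then read off conclusions (1)--(3) from \Lemref{lem:splxs.close.to.man} and \Lemref{lem:splx.tanspace.angle.bnds}(2). The only cosmetic differences are that the paper notes explicitly that condition~(a) forces $\acplx$ to be a compact $m$-manifold without boundary, and for conclusion~(3) it uses the already-established inequality $\lfs(p)\leq\frac{137}{128}\lfs(\px)$ directly rather than re-deriving it from the Lipschitz property of $\lfs$ as you do; both arguments yield the same constant.
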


\begin{proof}
  Observe that the local embedding condition (a) implies that $\acplx$
  is a compact $m$-manifold without boundary. Condition (b) is a
  reformulation of the first inequality of
  \eqref{eq:two.Lbnds}.
%  \eqref{eq:lfs.def.rbnd}.
  Condition (c) is the
  vertex sanity condition of \Thmref{thm:metric.triang};
  % , where the bound on $r$ comes from
  using
  \eqref{eq:bnd.r} and \eqref{eq:lfs.def.rbnd} we get $r =
  9\lfs(p)/137 < \lfs(p)/15$.

  Thus, from our argument above, the criteria of
  \Thmref{thm:metric.triang} are satisfied, and
  $H = \projM \circ \iota$ is a homeomorphism. It follows that $\iota$
  is injective, and since $\carrier{\acplx}$ is compact, $\iota$ must
  be an embedding. The second consequence, that
  $\projM|_{\carrier{\acplx}}$ is a homeomorphism, is now immediate.

  For the third consequence, notice that, as argued to obtain
  \eqref{eq:lfs.def.rbnd}, the Lipschitz continuity of $\lfs$ imples that
  \begin{equation*}
    \lfs(p) \leq \frac{137}{128}\lfs(\px).
  \end{equation*}
  So, using \Lemref{lem:splxs.close.to.man} and
  \eqref{eq:lfs.def.rbnd}, we have
  \begin{equation*}
    \dtoM{x} \leq \frac{2L^2}{\rbnd}
    \leq
    \frac{2\epsilon_0^2\lfs(p)^2}{\rbnd}
    \leq
    2\left(\frac{137}{128}\right)^2\epsilon_0^2\lfs(\px)
    \leq \tfrac73\epsilon_0^2\lfs(\px).
  \end{equation*}

  The second inequality follows from
  \Lemref{lem:splx.tanspace.angle.bnds}(2) and
  \eqref{eq:lfs.def.rbnd}.
\end{proof}

If we use the global bound $\rbnd=\rchM$, to bound the size of the
simplices, then the third consequence of \Thmref{thm:triang.subman}
can be tightened. In this context we obtain the following variation of
\Thmref{thm:triang.subman}, which is a corollary in the sense that it
follows from essentially the same proof, even though it does not
follow from the statement of \Thmref{thm:triang.subman}.

\begin{cor}
  \label{cor:rch.subman.triang}
  If the conditions {\rm(b)} and {\rm(c)} in \Thmref{thm:triang.subman} are
  replaced by
  \begin{enumerate}\addtocounter{enumi}{1}\def\theenumi{\alph{enumi}$'$}
    \def\labelenumi{\textup{(\theenumi)}}
  \item There are constants $0<\tbnd\leq1$,\, $0<\sepbnd \leq 1$, and
    $\epsilon_0>0$ such that for each simplex
    $\gsplxs \in \iota(\acplx)$, and each vertex $p \in \gsplxs$,
    \begin{equation*}
      t(\gsplxs)\geq \tbnd,
      \quad \sepbnd\epsilon_0\rchM \leq L(\gsplxs) \leq
      \epsilon_0\rchM,
      \quad \epsilon_0 \leq \frac{\sepbnd^{\frac12}\tbnd^2}{16}.
    \end{equation*}
  \item For any vertices $p,q \in \pts$, if
    $q \in U_p = \ballEN{p}{r} \cap \man$, where
    % $r=9\rchM/128$,
    $r=\rchM/14$,
    then $\proj{T_p\man}(q) \in \proj{T_p\man}(\iota(\str{p}))$ if and
    only if $q$ is a vertex of $\str{p}$.
  \end{enumerate}
  then the conclusions of \Thmref{thm:triang.subman} hold, and
  consequence (3) can be tightened to:
  \begin{enumerate}\addtocounter{enumi}{2}\def\theenumi{\arabic{enumi}$'$}
    \def\labelenumi{\textup{(\theenumi)}}
    \item For any $x \in \gsplxs \in \acplx$,
      \begin{equation*}
        \dtoM{x} = \norm{\px-x} \leq 2\epsilon_0^2\rchM,
        \quad
        \sin\angle(\gsplxs, T_{\px}) \leq \frac{3\epsilon_0}{\tbnd},
      \end{equation*}
      where $\px = \projM(x)$.
    \end{enumerate}
\end{cor}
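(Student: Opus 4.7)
The plan is to retrace the argument of \Thmref{thm:triang.subman} with the uniform substitution $\rbnd = \rchM$, which is legitimate because $\rchM \leq \rch(x,\man)$ for every $x\in\man$. The sequence of distortion estimates culminating in \eqref{eq:defn.xi} depends on $\rbnd$ only through the ratio $\Lbnd/\rbnd$, so it carries over without change. Condition (b$'$) is precisely the second alternative in \eqref{eq:two.Lbnds}, i.e., the form of \eqref{eq:Lbnd.squared} obtained when $\rbnd = \rchM$. Hence the distortion-control criterion of \Thmref{thm:metric.triang} is met.

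For the compatible-atlases construction I would use $r = \rho\rchM$ with $\rho$ defined by \eqref{eq:def.rho}. The bound \eqref{eq:bnd.r} then gives $\rho\rchM \leq 9\rchM/128 < \rchM/14$, so the ball of radius $\rchM/14$ in condition (c$'$) is at least as large as the $U_p$ supplied by the construction. Consequently (c$'$) implies the vertex-sanity condition of \Thmref{thm:metric.triang}. Invoking that theorem yields that $H = \projM \circ \iota$ is a homeomorphism, which, since $\carrier{\acplx}$ is compact, forces $\iota$ itself to be an embedding, exactly as in the original proof.

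For the tightened consequence (3$'$), \Lemref{lem:splxs.close.to.man} applied with $\rbnd = \rchM$ and $L \leq \epsilon_0\rchM$ gives directly
\[
\dtoM{x} \leq \frac{2L^2}{\rchM} \leq 2\epsilon_0^2\rchM,
\]
which removes the Lipschitz inflation factor $(137/128)^2$ that arose in \Thmref{thm:triang.subman} when $\rbnd$ was expressed via $\lfs(p)$. Similarly, \Lemref{lem:splx.tanspace.angle.bnds}(2) applied directly with $\rbnd = \rchM$ gives $\sin\angle(\gsplxs, T_{\px}\man) \leq 3L/(\tbnd \rchM) \leq 3\epsilon_0/\tbnd$. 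There is no genuinely new technical step; the only point requiring verification is that the radius $\rchM/14$ prescribed in (c$'$) dominates the $U_p$ produced by the construction, which I sketched above.
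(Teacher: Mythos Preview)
Your proposal is correct and follows essentially the same approach as the paper: retrace the proof of \Thmref{thm:triang.subman} with the substitution $\rbnd=\rchM$ in place of \eqref{eq:lfs.def.rbnd}, observe that \eqref{eq:Lbnd.squared} then yields the second alternative in \eqref{eq:two.Lbnds}, and that \eqref{eq:bnd.r} gives $r=9\rchM/128<\rchM/14$. You supply slightly more detail on consequence~(3$'$), invoking \Lemref{lem:splxs.close.to.man} and \Lemref{lem:splx.tanspace.angle.bnds}(2) explicitly, but this is exactly what the paper's terse proof intends.
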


\begin{proof}
  This follows from the proof of \Thmref{thm:triang.subman}, using
  $\rbnd=\rchM$ instead of \eqref{eq:lfs.def.rbnd}, e.g., from
  \eqref{eq:Lbnd.squared} we obtain the second alternative in
  \eqref{eq:two.Lbnds}, and from \eqref{eq:bnd.r} we get $r=9\rchM/128
  < \rchM/14$.
\end{proof}

%%% Local Variables:
%%% mode: latex
%%% TeX-master: "homeo"
%%% End:

\appendix
%\renewcommand{\thethm}{\Alph{section}.\arabic{thm}}

% -*- LaTeX -*-
% degree.tex
% 20170318
%
\section{Elementary degree theory}
\label{sec:degree.theory}

We recall here some basic ideas in degree theory. Our primary
motivation is to facilitate the statement and proof of
\Lemref{lem:splx.pos.embed}, recovering what we need of a result of
Whitney \cite[Appendix~II Lemma 15a]{whitney1957}, without using
differentiability assumptions.

We are interested in the degree of continuous maps
$F\colon \cl{\Omega} \to \R^m$, where $\Omega \subset \R^m$ is an open,
bounded, and nonempty domain in $\R^m$, as can be found Chapter~IV,
Sections 1 and 2 of \cite{outerelo2009degree}, for example.
As with most modern treatments of degree theory, Outerelo and Ruiz start by
defining the degree for a regular value of a differentiable map, where
the idea is transparent. For $y \in \R^m \setminus F(\bdry{\Omega})$,
the degree is defined as
\begin{equation*}
  \deg(F,\Omega,y) := \sum_{x\in F^{-1}(y)}\sgn_x(F),
\end{equation*}
where $\sgn_x(F)$ denotes the sign ($\pm1$) of the Jacobian determinant
(the determinant of the differential of $F$) at $x$. Thus the degree
at $y$ counts the number of points in the preimage, accounting for the
local orientation of the map.

It is then shown that the degree is locally constant on the (open) set
of regular values, and after showing that it is also invariant under
homotopies that avoid conflicts between $y$ and the image of the
boundary, the degree is defined for an arbitrary point in
$\R^m \setminus F(\bdry{\Omega})$, and it is constant on each
connected component of $\R^m \setminus F(\bdry{\Omega})$.

Then the definition of degree is extended to continuous maps $F\colon
\cl{\Omega} \to \R^m$ by means of the Weierstrass approximation
theorem (here $\onorm{\,\cdot\,}_{\infty}$ denotes the supremum norm): 

\begin{lem}[{{\cite[Proposition and Definition IV.2.1]{outerelo2009degree}}}]
\label{lem:deg.diff.approx}
Let $F\colon \cl{\Omega}   \to \R^m$ be a continuous map, and let $y
\in \R^m \setminus F(\bdry{\Omega})$. Then there exists a smooth map
$G: \cl{\Omega} \to \R^m$ such that $\onorm{F-G}_{\infty} <
\distEm{y}{F(\bdry{\Omega})}$. For all such $G$, the degree
$\deg(G,\Omega,y)$ is defined ($y\in \R^m\setminus G(\bdry{\Omega})$)
and is the same, and we define the degree of $F$ by
\begin{equation*}
  \deg(F,\Omega,y)=\deg(G,\Omega,y).
\end{equation*}
Furthermore, $G$ can be chosen such that $y$ is a regular value of
$G|_{\Omega}$, and then
\begin{equation*}
  \deg(F,\Omega,a) = \sum_{x\in G^{-1}(y)} \sgn_x(G).
\end{equation*}
\end{lem}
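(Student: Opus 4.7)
The plan is to work in three stages: produce a smooth approximant $G$, verify the degree is well-defined for $G$, and then show that this degree does not depend on the particular choice of $G$. Since $\bdry\Omega$ is compact and $y\notin F(\bdry\Omega)$, the quantity $\eta = \distEm{y}{F(\bdry\Omega)}$ is strictly positive. The Weierstrass approximation theorem, applied to a continuous extension of $F$ to some compact superset of $\cl\Omega$ (or alternatively via convolution with a mollifier), supplies a smooth $G\colon\cl\Omega\to\R^m$ with $\onorm{F-G}_{\infty}<\eta$. For any $x\in\bdry\Omega$ the triangle inequality gives $\norm{G(x)-y}\geq\norm{F(x)-y}-\norm{G(x)-F(x)}>\eta-\eta=0$, so $y\notin G(\bdry\Omega)$ and the degree $\deg(G,\Omega,y)$ is defined by the previously established theory for smooth maps and points not in the boundary image.

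Independence from the chosen approximant is the main content. Given two smooth maps $G_0,G_1$ with $\onorm{F-G_i}_{\infty}<\eta$, I would consider the straight-line homotopy $H_t=(1-t)G_0+tG_1$, $t\in[0,1]$. Writing $H_t(x)-y=(F(x)-y)+(1-t)(G_0(x)-F(x))+t(G_1(x)-F(x))$, the triangle inequality again shows that for every $x\in\bdry\Omega$ and every $t\in[0,1]$, $\norm{H_t(x)-y}\geq\eta-\bigl((1-t)\onorm{G_0-F}_{\infty}+t\onorm{G_1-F}_{\infty}\bigr)>0$, so the homotopy never carries $y$ onto the image of $\bdry\Omega$. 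Homotopy invariance of the degree for smooth maps (already established in the cited chapter) then yields $\deg(G_0,\Omega,y)=\deg(G_1,\Omega,y)$, so the recipe $\deg(F,\Omega,y)\mathrel{:=}\deg(G,\Omega,y)$ is unambiguous.

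For the regular-value refinement, start with any smooth approximant $G$ with a slightly tighter bound $\onorm{F-G}_{\infty}<\eta-\delta$ for some small $\delta>0$ (possible by redoing the Weierstrass step with a smaller tolerance). By Sard's theorem, the set of critical values of $G|_{\Omega}$ has measure zero in $\R^m$; choose a regular value $y'$ of $G|_{\Omega}$ with $\norm{y'-y}<\delta$ and set $\widetilde G = G+(y-y')$. Then $y$ is a regular value of $\widetilde G|_{\Omega}$, the bound $\onorm{F-\widetilde G}_{\infty}<\eta$ is preserved, and the sum formula $\deg(F,\Omega,y)=\sum_{x\in\widetilde G^{-1}(y)}\sgn_x(\widetilde G)$ follows from the definition of degree at a regular value applied to $\widetilde G$.

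The main obstacle is expository rather than substantive: one needs two prerequisites from the smooth theory to be in place, namely homotopy invariance of the degree for smooth maps whose boundary image avoids $y$ throughout the homotopy, and Sard's theorem giving density of regular values. Once these are granted, as they are in the cited reference, everything above reduces to triangle-inequality bookkeeping plus one affine perturbation that shifts the critical value slightly off of $y$.
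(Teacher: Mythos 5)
Your argument is correct, but note that the paper does not prove this statement at all: it is quoted verbatim from Outerelo--Ruiz \cite[Proposition and Definition IV.2.1]{outerelo2009degree} as part of the imported smooth degree theory, so the only meaningful comparison is with that textbook's proof, which your write-up essentially reproduces (Weierstrass approximation, the triangle-inequality check that $y$ avoids $G(\bdry{\Omega})$, the straight-line homotopy plus homotopy invariance for independence of the approximant, and Sard's theorem with the translation $\widetilde G = G+(y-y')$ for the regular-value refinement). The only bookkeeping you leave implicit is why the sum $\sum_{x\in\widetilde G^{-1}(y)}\sgn_x(\widetilde G)$ is finite: the bound $\onorm{F-\widetilde G}_{\infty}<\distEm{y}{F(\bdry{\Omega})}$ keeps $\widetilde G^{-1}(y)$ a uniformly positive distance from $\bdry{\Omega}$, so it is a compact subset of $\Omega$ consisting of isolated points (inverse function theorem at a regular value), hence finite; with that remark added, your proof is complete and matches the standard one.
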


The locally constant nature of the degree is the main property we wish
to exploit:
\begin{lem}[{{\cite[Proposition  IV.2.3]{outerelo2009degree}}}]
  \label{lem:deg.locally.const}
  Let $F\colon \cl{\Omega} \to \R^m$ be a continuous map. Then the
  degree $y \mapsto \deg(F,\Omega,y)$ is constant on every connected
  component of $\R^m \setminus F(\bdry{\Omega})$. 
\end{lem}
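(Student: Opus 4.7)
The plan is to establish the stronger local statement that $y \mapsto \deg(F,\Omega,y)$ is locally constant on the open set $\R^m \setminus F(\bdry\Omega)$. Since this function takes integer values, local constancy immediately implies constancy on each connected component. The reduction from ``continuous $F$'' to the underlying smooth theory is handled by the smoothing in \Lemref{lem:deg.diff.approx}, so the real content lies in controlling the smooth case.

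Fix $y_0 \in \R^m \setminus F(\bdry\Omega)$. Since $\bdry\Omega$ is compact and $F$ is continuous, $F(\bdry\Omega)$ is compact, so $r := \distEm{y_0}{F(\bdry\Omega)} > 0$. I would show that the degree is constant on the open ball $\ballEm{y_0}{r/3}$. Using \Lemref{lem:deg.diff.approx}, pick one smooth $G\colon \cl{\Omega} \to \R^m$ with $\onorm{F-G}_\infty < r/3$. For every $y \in \ballEm{y_0}{r/3}$ the triangle inequality gives $\distEm{y}{F(\bdry\Omega)} > 2r/3$, and hence $\distEm{y}{G(\bdry\Omega)} > r/3 > \onorm{F-G}_\infty$, so $G$ is an admissible smooth approximation for the degree of $F$ at $y$. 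By the second sentence of \Lemref{lem:deg.diff.approx}, $\deg(F,\Omega,y) = \deg(G,\Omega,y)$ for every such $y$, and the problem reduces to showing $\deg(G,\Omega,\cdot)$ is constant on $\ballEm{y_0}{r/3}$ for this fixed smooth $G$.

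For the smooth case I would proceed in two steps. First, at any regular value $y^*$ of $G$ in $\R^m \setminus G(\bdry\Omega)$, the preimage $G^{-1}(y^*) \cap \Omega$ is finite, and the inverse function theorem supplies disjoint open neighborhoods of these points on which $G$ is a diffeomorphism with a constant Jacobian sign; a standard compactness argument (using that $G(\bdry\Omega)$ is closed and the preimage is bounded) shows $G$ stays uniformly away from $y^*$ outside their union, so the signed count $\sum_{x \in G^{-1}(y)} \sgn_x(G)$ is locally constant near $y^*$. Second, for an arbitrary $y \in \ballEm{y_0}{r/3}$, I would produce a smooth perturbation $\tilde G$ of $G$ for which $y$ is a regular value and $\onorm{G-\tilde G}_\infty$ is small enough that \Lemref{lem:deg.diff.approx} yields $\deg(G,\Omega,y) = \deg(\tilde G,\Omega,y)$; together with local constancy at regular values of $\tilde G$, this shows that on the connected open ball $\ballEm{y_0}{r/3}$ the integer $\deg(G,\Omega,\cdot)$ agrees with a common value realized on a dense subset.

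The main obstacle is the smooth step: ensuring that the local-constancy conclusion at regular values propagates uniformly across critical values of $G$. The cleanest alternative route is homotopy invariance of degree: the straight-line path $y_t = (1-t)y_0 + t y_1$ between any two points of $\ballEm{y_0}{r/3}$ lies entirely inside that ball and hence avoids $G(\bdry\Omega)$, so translating $G$ by $-t(y_1-y_0)$ yields a homotopy of smooth maps whose boundary images miss $y_0$ for every $t$, forcing equality of degrees at $y_0$ and $y_1$. Either route ultimately rests on the standard smooth degree theory underpinning the definition in \Lemref{lem:deg.diff.approx}.
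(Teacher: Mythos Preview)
The paper does not supply its own proof of this lemma; it is quoted verbatim from \cite[Proposition~IV.2.3]{outerelo2009degree} and used as a black box in \Appref{sec:degree.theory}. Your sketch is correct and is essentially the standard argument found in that reference: pin down a single smooth approximation $G$ valid throughout a small ball about $y_0$ via \Lemref{lem:deg.diff.approx}, then appeal to the smooth theory---either local constancy at regular values together with Sard, or translation-homotopy invariance---to get constancy of $\deg(G,\Omega,\cdot)$ on that ball. Both of those smooth ingredients are developed in \cite{outerelo2009degree} prior to the statement you are proving, so your argument is really a reassembly of the standard pieces rather than an independent proof, but that is exactly what the paper is implicitly relying on as well.
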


Notice that if $F$ is a topological embedding, then
$\deg(F,\Omega,y) = \pm 1$ for any point $y \in F(\Omega)$.  We will
also have occasion to use the following:

\begin{lem}[{{\cite[Corollary IV.2.5(3)]{outerelo2009degree}}}]
  \label{lem:deg.additive}
  Given a continuous mapping $F \colon \cl{\Omega} \to \R^m$,  two
  disjoint open subsets $\Omega_1, \Omega_2 \subset \Omega$, and a
  point $y \not\in F(\cl{\Omega}\setminus (\Omega_1 \cap \Omega_2))$,
  \begin{equation*}
    \deg(F,\Omega,y) = \deg(F,\Omega_1,y) + \deg(F,\Omega_2, y).
  \end{equation*}
\end{lem}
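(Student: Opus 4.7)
The plan is to reduce to the case of a smooth map $G$ for which $y$ is a regular value, where the degree is defined as a finite sum of local signs over $G^{-1}(y)$, and then to invoke Lemma~\ref{lem:deg.diff.approx} to transfer the identity to arbitrary continuous $F$. (As the statement is quoted from \cite{outerelo2009degree}, the clause $\cl\Omega \setminus (\Omega_1 \cap \Omega_2)$ is intended to read $\cl\Omega \setminus (\Omega_1 \cup \Omega_2)$; with the former, disjointness of $\Omega_1$ and $\Omega_2$ collapses the hypothesis to $y \notin F(\cl\Omega)$ and both sides vanish trivially. I proceed under the intended reading.)

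First I will set $K = \cl\Omega \setminus (\Omega_1 \cup \Omega_2)$, which is closed in the compact set $\cl\Omega$ and hence compact. Since $\Omega_1$ and $\Omega_2$ are open and disjoint, any $x \in \bdry\Omega_1$ cannot lie in $\Omega_2$ (otherwise a neighbourhood of $x$ contained in $\Omega_2$ would avoid $\Omega_1$ entirely, contradicting $x \in \cl{\Omega_1}$); hence $\bdry\Omega_1 \subset K$, and symmetrically $\bdry\Omega_2 \subset K$. Also $\bdry\Omega \subset K$, since the $\Omega_i$ are open subsets of $\Omega$. Thus all three degrees $\deg(F,\Omega,y)$ and $\deg(F,\Omega_i,y)$ are well-defined under the hypothesis $y \notin F(K)$, and $\delta := \distEm{y}{F(K)} > 0$.

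Next I will apply Lemma~\ref{lem:deg.diff.approx} to produce a single smooth $G\colon \cl\Omega \to \R^m$ with $\onorm{F-G}_\infty < \delta$ for which $y$ is a regular value of $G|_\Omega$. Because $\delta$ bounds the distance from $y$ to each of $F(\bdry\Omega)$, $F(\bdry\Omega_1)$, and $F(\bdry\Omega_2)$, this one $G$ simultaneously computes all three degrees: $\deg(G,\Omega,y) = \deg(F,\Omega,y)$ and $\deg(G,\Omega_i,y) = \deg(F,\Omega_i,y)$ for $i=1,2$. With $y$ a regular value, $G^{-1}(y)$ is finite; moreover $\onorm{F-G}_\infty < \delta$ also forces $G^{-1}(y) \cap K = \emptyset$, so the preimage partitions as $G^{-1}(y) = (G^{-1}(y)\cap\Omega_1) \sqcup (G^{-1}(y)\cap\Omega_2)$. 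The signed-count formula for the degree at a regular value is then manifestly additive under this partition, yielding the desired identity.

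The only delicate point is the simultaneous approximation, which is handled by measuring $\delta$ against the single ``bad'' set $K$ and verifying that $K$ contains all three relevant boundaries. The containment $\bdry\Omega_i \subset K$ is the one place where the disjointness of $\Omega_1$ and $\Omega_2$ is actually invoked, and no genuine obstacle arises beyond this bookkeeping.
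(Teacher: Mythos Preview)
The paper does not supply its own proof of this lemma; it is simply quoted from \cite[Corollary~IV.2.5(3)]{outerelo2009degree} and then used (via a short induction) to derive \Lemref{lem:deg.multi.add}. So there is no paper proof to compare against.

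Your argument is the standard one and is correct. You rightly flag the typographical slip in the hypothesis (the set should be $\cl\Omega\setminus(\Omega_1\cup\Omega_2)$, not $\cap$). The key bookkeeping step---showing that $\bdry\Omega$, $\bdry\Omega_1$, and $\bdry\Omega_2$ are all contained in the single compact set $K=\cl\Omega\setminus(\Omega_1\cup\Omega_2)$, so that one approximating $G$ with $\onorm{F-G}_\infty<\delta=\distEm{y}{F(K)}$ simultaneously computes all three degrees via \Lemref{lem:deg.diff.approx}---is handled cleanly. The only point worth making explicit is that \Lemref{lem:deg.diff.approx}, as stated, promises $G$ with $\onorm{F-G}_\infty<\distEm{y}{F(\bdry\Omega)}$, whereas you need the stronger bound $\onorm{F-G}_\infty<\delta\leq\distEm{y}{F(\bdry\Omega)}$; but the Weierstrass approximation underlying that lemma of course delivers arbitrarily close smooth approximants, and a further small perturbation (Sard) makes $y$ regular, so this is not an obstacle. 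The partition $G^{-1}(y)=(G^{-1}(y)\cap\Omega_1)\sqcup(G^{-1}(y)\cap\Omega_2)$ then gives the additivity of the signed count directly.
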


Since no connectedness assumptions are made on the open sets in
question, a straightforward inductive argument allows us to strengthen
the statement of \Lemref{lem:deg.additive}:

\begin{lem}
  \label{lem:deg.multi.add}
  Suppose $\Omega_1, \ldots, \Omega_n$ are mutually disjoint open subsets of
  the open domain
  $\Omega \subset \R^m$, and $F\colon \cl{\Omega} \to \R^m$ is a
  continuous map. If $y \not\in F(\cl{\Omega} \setminus
  \bigcup_{i=1}^n\Omega_i)$, then
  \begin{equation*}
    \deg(F,\Omega,y) = \sum_{i=1}^n \deg(F,\Omega_i,y).
  \end{equation*}
\end{lem}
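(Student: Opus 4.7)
The plan is to proceed by induction on $n$, taking \Lemref{lem:deg.additive} as the base case ($n=2$; the case $n=1$ is trivial and the case $n=0$ requires $y\notin F(\cl{\Omega})$, where both sides are $0$). Assuming the result for $n$, I would prove it for $n+1$ by grouping the first $n$ sets together. Specifically, set $U=\bigcup_{i=1}^{n}\Omega_i$ and $V=\Omega_{n+1}$. These are open subsets of $\Omega$, and they are disjoint since the $\Omega_i$ are mutually disjoint.

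The central technical observation, which I would establish first, is that if $W_1,W_2\subset\R^m$ are disjoint open sets, then $W_1\cap\cl{W_2}=\emptyset$: any point of $W_1$ has an open neighbourhood contained in $W_1$, hence disjoint from $W_2$, so it cannot be in $\cl{W_2}$. Applying this to the $\Omega_i$'s yields $\Omega_{n+1}\cap\cl{U}=\emptyset$ and $\Omega_i\cap\cl{V}=\emptyset$ for $i\leq n$. In particular, $\bdry U=\cl{U}\setminus U\subset\cl{\Omega}\setminus\bigcup_{i=1}^{n+1}\Omega_i$, and similarly $\bdry V\subset\cl{\Omega}\setminus\bigcup_{i=1}^{n+1}\Omega_i$, so the degrees $\deg(F,U,y)$ and $\deg(F,V,y)$ are both defined by the hypothesis $y\notin F(\cl{\Omega}\setminus\bigcup_{i=1}^{n+1}\Omega_i)$.

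Next I would apply \Lemref{lem:deg.additive} to $U$ and $V$, whose union is $\bigcup_{i=1}^{n+1}\Omega_i$; the required condition $y\notin F(\cl{\Omega}\setminus(U\cup V))$ is exactly our hypothesis. This gives
\[
\deg(F,\Omega,y)=\deg(F,U,y)+\deg(F,V,y).
\]
To invoke the inductive hypothesis on $\Omega_1,\ldots,\Omega_n$ viewed as mutually disjoint open subsets of $U$, I need $y\notin F(\cl{U}\setminus\bigcup_{i=1}^{n}\Omega_i)$. By the key observation, $\cl{U}\cap\Omega_{n+1}=\emptyset$, so
\[
\cl{U}\setminus\textstyle\bigcup_{i=1}^{n}\Omega_i\;\subset\;\cl{\Omega}\setminus\bigcup_{i=1}^{n+1}\Omega_i,
\]
and the condition holds. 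The inductive hypothesis then gives $\deg(F,U,y)=\sum_{i=1}^{n}\deg(F,\Omega_i,y)$, and substituting completes the induction.

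There is no real obstacle here; the only point requiring care is the bookkeeping to ensure that the hypotheses of \Lemref{lem:deg.additive} and of the inductive hypothesis are met at each stage, which is handled uniformly by the disjointness-of-closures observation above.
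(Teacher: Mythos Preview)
Your proposal is correct and follows exactly the approach the paper indicates: the paper simply states that a ``straightforward inductive argument'' on \Lemref{lem:deg.additive} yields the result, and you have carried out precisely that induction with the necessary bookkeeping. The only extra content you supply is the careful verification that the hypotheses of \Lemref{lem:deg.additive} and of the inductive step are met, which the paper leaves implicit.
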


\subsection{Orientation and cogent maps}
\label{sec:cogent}

A simplex is \defn{oriented} by choosing an orientation for its affine
hull, or equivalently, by ordering its vertices; any even permutation
of this order describes the same orientation. An $m$-simplex in $\R^m$
has a natural orientation induced from the canonical orientation of
$\R^m$ defined by the standard basis. Our convention is that $\gsplxs
\subset \R^m$ is positively oriented if its vertices $v_i$,\, $0\leq
i\leq m$, are ordered such that the basis $\{v_1-v_0, \ldots,
v_m-v_0\}$ defines the same orientation as the canonical basis of
$\R^m$. 

In the case that concerns us, where $\ccplx$ is a finite pure
$m$-complex piecewise linearly embedded in $\R^m$ (so that we can
naturally view $\carrier{\ccplx}\subset \R^m$), we assume that the
$m$-simplices carry the canonical orientation inherited from the
ambient space.

\begin{de}[orientation preserving map]
  \label{def:orientation.preserving}
If $\gsplxs \subset \R^m$ is an $m$-simplex, we say that a continuous
topological embedding $F\colon \gsplxs \to \R^m$ is \defn{orientation
  preserving}, or \defn{positive}, if $\deg(F,\rint{\gsplxs},y)=1$, for
any point $y\in \intr{F(\gsplxs)}$, otherwise $F$ is \defn{orientation
  reversing}.
\end{de}

\begin{de}[cogent maps]
  \label{def:cogent.maps}
Suppose $\ccplx$ is a pure $m$-complex. We call a map $F\colon
\carrier{\ccplx} \to \R^m$ \defn{cogent} with respect to $\ccplx$ if 
it is continuous and
its restriction to each simplex is an embedding. 
\end{de}

For cogent maps, the points in
$\R^m \setminus F(\carrier{\ccplx^{m-1}})$ are analogous to the
regular values of a differentiable map. If $F$ is a cogent map and
$x\in \rint{\gsplxs}$, where $\gsplxs$ is an $m$-simplex in $\ccplx$,
we define
\begin{equation}
  \label{eq:def.sgn.cogent}
  \sgn_x(F) := \deg(F,\rint{\gsplxs}, F(x)).
\end{equation}
This makes the analogy with the degree of a differentiable map
transparent: 

\begin{lem}[degree of cogent maps]
  \label{lem:deg.cogent.map}
  Suppose $\ccplx$ is a finite pure $m$-complex embedded in $\R^m$. Let
  $\Omega = \carrier{\ccplx} \setminus \bdry{\carrier{\ccplx}}$.
  If a continuous map $F\colon \carrier{\ccplx} \to \R^m$ is cogent
  with respect to $\ccplx$, then for any $y \in \R^m \setminus
  F(\carrier{\ccplx^{m-1}})$ {\rm (recall that $\bdry{\carrier{\ccplx}}
  \subseteq \carrier{\ccplx^{m-1}}$ \cite[Lemmas~3.6,
  3.7]{boissonnat2013stab1})},
  \begin{equation*}
    \deg(F,\Omega,y) = \sum_{x \in F^{-1}(y)} \sgn_x(F).
  \end{equation*}
\end{lem}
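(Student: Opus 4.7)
The plan is to apply the multi-additivity of degree (\Lemref{lem:deg.multi.add}) to the partition of $\Omega$ by the relative interiors of the $m$-simplices of $\ccplx$, and then identify each resulting summand with the contribution $\sgn_x(F)$ of the unique preimage of $y$ in that simplex, via the defining formula \eqref{eq:def.sgn.cogent}.

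First I would enumerate the $m$-simplices $\gsplxs_1,\dots,\gsplxs_n$ of $\ccplx$ and set $\Omega_i=\rint{\gsplxs_i}$. Each $\Omega_i$ is open in $\R^m$ (the affine hull of an $m$-simplex in $\R^m$ is all of $\R^m$), the $\Omega_i$ are pairwise disjoint, and purity gives the key identification
\[
\cl{\Omega}\setminus\bigcup_{i=1}^{n}\Omega_i \;=\; \carrier{\ccplx^{m-1}},
\]
since every point of $\carrier{\ccplx}$ either lies in the relative interior of some $m$-simplex or else belongs to the $(m{-}1)$-skeleton. The hypothesis $y\notin F(\carrier{\ccplx^{m-1}})$ then supplies exactly the admissibility condition required by \Lemref{lem:deg.multi.add}, yielding
\[
\deg(F,\Omega,y) \;=\; \sum_{i=1}^{n}\deg(F,\Omega_i,y).
\]

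Next I would evaluate each term $\deg(F,\Omega_i,y)$. By cogency, $F|_{\gsplxs_i}$ is a topological embedding, so $F|_{\Omega_i}^{-1}(y)$ contains at most one point, and since $y\notin F(\bdry\gsplxs_i)\subseteq F(\carrier{\ccplx^{m-1}})$ this point, if it exists, must lie in $\Omega_i$. When there is a (necessarily unique) $x_i\in\Omega_i$ with $F(x_i)=y$, the definition \eqref{eq:def.sgn.cogent} gives $\deg(F,\Omega_i,y)=\sgn_{x_i}(F)$ directly. When no such $x_i$ exists, then $y\notin F(\cl{\Omega_i})$, and the degree must vanish; the cleanest way to see this is via \Lemref{lem:deg.diff.approx}: approximate $F|_{\cl{\Omega_i}}$ by a smooth map $G$ with $\onorm{F-G}_\infty<\distEm{y}{F(\cl{\Omega_i})}$, so that $G^{-1}(y)\cap\Omega_i=\emptyset$ and the smooth-case formula for degree gives $0$.

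Finally, the hypothesis $y\notin F(\carrier{\ccplx^{m-1}})$ ensures that every $x\in F^{-1}(y)$ lies in the relative interior of a unique $m$-simplex, so summing the identifications across $i$ produces $\sum_i \deg(F,\Omega_i,y)=\sum_{x\in F^{-1}(y)}\sgn_x(F)$, which is the desired equality. The only step needing any real care is the vanishing in the empty-preimage case, and that amounts to the routine approximation argument above; everything else is bookkeeping against \Lemref{lem:deg.multi.add} and the definition of $\sgn_x(F)$.
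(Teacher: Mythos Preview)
Your proof is correct and follows essentially the same approach as the paper: both decompose $\Omega$ into relative interiors of $m$-simplices and invoke \Lemref{lem:deg.multi.add}. The only cosmetic difference is that the paper indexes the $\Omega_i$ by the finitely many preimage points (so every summand is nonzero by construction), whereas you index by all $m$-simplices and then argue separately that the empty-preimage terms vanish; this is a harmless variation.
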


A very minor modification to the proof of
\Lemref{lem:deg.diff.approx} would allow us to sharpen the bound on
$\onorm{F-G}_{\infty}$ in our context, so that
\begin{equation*}
\onorm{F-G}_{\infty} < 
\min_{\gsplxs\cap F^{-1}(y)\neq \emptyset}
\distEm{y}{F(\bdry{\gsplxs})},
\end{equation*}
and \Lemref{lem:deg.cogent.map} follows immediately from the same proof.
But rather than digging into the proof of that lemma, we can avoid
getting our hands dirty and just exploit 
\Lemref{lem:deg.multi.add}.

\begin{proof}[of \Lemref{lem:deg.cogent.map}]
  The preimage of $y$ is a finite set of points:
  $F^{-1}(y) = \{x_i\}$, $i\in \{1,\ldots,n\}$. For each $i$, let
  $\gsplxs_i$ be the $m$-simplex that contains $x_i$ in its interior,
  and set $\Omega_i = \rint{\gsplxs_i}$. Then the statement follows
  from \Lemref{lem:deg.multi.add} and the definition
  \eqref{eq:def.sgn.cogent} of $\sgn_{x_i}(F)$.
\end{proof}

\begin{de}[simplexwise positive]
  \label{def:simplexwise.positive}
  A map is \defn{simplexwise positive} if it is cogent and its
  restriction to any $m$-simplex is orientation preserving.
\end{de}

Lemmas
\ref{lem:deg.locally.const}
and
\ref{lem:deg.cogent.map}
yield:

\begin{lem}
  \label{lem:splxwise.pos.loc.cnst}
  If $\ccplx$ is a finite pure $m$-complex embedded in $\R^m$, and
  $F \colon \carrier{\ccplx} \to \R^m$ is simplexwise positive, then
  for any connected open subset $W$ of
  $\R^m \setminus F(\bdry{\carrier{\ccplx}})$, any two points of $W$
  not in $F(\carrier{\ccplx^{m-1}})$ are covered the same number of
  times (i.e., have the same number of points in their preimage under
  $F$).
\end{lem}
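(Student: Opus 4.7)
The plan is to prove this as a direct consequence of the two degree-theoretic facts assembled just above: Lemma~\ref{lem:deg.locally.const} (local constancy of the degree on $\R^m\setminus F(\bdry\Omega)$) and Lemma~\ref{lem:deg.cogent.map} (the degree of a cogent map at a regular-like value is the signed count of preimages). Since simplexwise positive means cogent plus every simplex contribution is $+1$, the signed count becomes the unsigned count on $W\setminus F(\carrier{\ccplx^{m-1}})$.

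More concretely, I would first set $\Omega=\carrier{\ccplx}\setminus\bdry{\carrier{\ccplx}}$, the interior of the carrier as a subset of $\R^m$. Using the identity $\bdry{\carrier{\ccplx}}=\carrier{\bdry\ccplx}$ (cited from \cite{boissonnat2013stab1} and already used in the proof of Lemma~\ref{lem:splx.pos.embed}), one checks that $\bdry\Omega=\carrier{\bdry\ccplx}\subseteq F^{-1}(F(\bdry{\carrier{\ccplx}}))$, so for every $y\in \R^m\setminus F(\bdry{\carrier{\ccplx}})$ the quantity $\deg(F,\Omega,y)$ is defined. Because $W$ is a connected subset of $\R^m\setminus F(\bdry{\carrier{\ccplx}})$, it lies inside a single connected component of that open set, and Lemma~\ref{lem:deg.locally.const} then gives a constant value $d\in\Z$ with $\deg(F,\Omega,y)=d$ for every $y\in W$.

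Next I would pick any $y\in W\setminus F(\carrier{\ccplx^{m-1}})$ and apply Lemma~\ref{lem:deg.cogent.map}, which is exactly tailored to this situation: $F$ is cogent (being simplexwise positive), and $y$ avoids the image of the $(m{-}1)$-skeleton. This yields
\begin{equation*}
  d \;=\; \deg(F,\Omega,y) \;=\; \sum_{x\in F^{-1}(y)} \sgn_x(F).
\end{equation*}
By Definition~\ref{def:simplexwise.positive}, each $m$-simplex $\gsplxs$ of $\ccplx$ is embedded in an orientation-preserving way, so for each $x\in F^{-1}(y)$, which necessarily lies in the relative interior of some $m$-simplex $\gsplxs$ (since $y\notin F(\carrier{\ccplx^{m-1}})$), the definition \eqref{eq:def.sgn.cogent} gives $\sgn_x(F)=\deg(F,\rint\gsplxs,y)=1$. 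Hence $d=\#F^{-1}(y)$, and this integer is the same for every such~$y$.

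The argument is short because the work has already been done in setting up the degree theory; the only mildly subtle point is checking that the topological boundary of $\Omega$ in $\R^m$ really is $\carrier{\bdry\ccplx}$, so that the ``avoids $F(\bdry\Omega)$'' hypothesis of Lemma~\ref{lem:deg.locally.const} matches the ``avoids $F(\bdry{\carrier{\ccplx}})$'' hypothesis of the present lemma. That identification is standard and is already invoked elsewhere in the paper, so I do not expect a real obstacle. The only other thing to verify is that $F^{-1}(y)$ is finite for $y\notin F(\carrier{\ccplx^{m-1}})$, which follows from the facts that $\ccplx$ is finite and $F$ restricted to each $m$-simplex is injective.
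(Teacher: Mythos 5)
Your proposal is correct and follows exactly the route the paper intends: the paper gives no detailed proof, simply stating that Lemmas~\ref{lem:deg.locally.const} and \ref{lem:deg.cogent.map} yield the result, and your argument is precisely that combination (constancy of $\deg(F,\Omega,\cdot)$ on the component containing $W$, then the signed preimage count with all signs $+1$ by simplexwise positivity). The details you supply---$\bdry\Omega=\carrier{\bdry\ccplx}$ via purity of $\ccplx$, and finiteness of $F^{-1}(y)$---are the right ones and pose no obstacle.
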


\begin{remark}
  Whitney's Lemma \cite[Appendix~II Lemma 15]{whitney1957} is a
  combination of Lemmas \ref{lem:splxwise.pos.loc.cnst} and
  \ref{lem:splx.pos.embed}, but applies in more generality. However,
  he assumes that the restriction of $F$ to each $m$-simplex is a
  smooth map.  The generality can be fully recovered without invoking
  this differentiability assumption.

  The definition of the degree of a map $\cl{\Omega} \to \R^m$ can be
  naturally extended to the case where $\cl{\Omega}$ is an oriented
  abstract manifold with boundary. Using this, the assumption that
  $\ccplx$ is embedded in $\R^m$ can be dropped. Also, the assumption
  that $\ccplx$ be finite can be relaxed, at least if we assume that
  $F$ is proper (so that the number of points in the preimage of a
  point is finite --- Whitney seems to assume this).

  Whitney also only assumed that $\ccplx$ was a \defn{pseudomanifold}
  with boundary: a pure $m$-complex such that any $(m{-}1)$-simplex
  is a face of either 2 or 1 $m$-simplices (those that are the face of
  only 1 $m$-simplex define the boundary complex).

  Brouwer's original exposition of degree theory \cite{brouwer1912}
  used simplicial approximations, and piecewise linear maps, rather
  than differentiable maps as the foundation. Even though the
  simplicial aspect is attractive and natural in our setting, there
  would be no economy in using this approach for our
  purposes. However, Brouwer's exposition was based on the notion of
  pseudomanifolds \cite[p.\,23]{outerelo2009degree}, so that approach
  would also allow us to recover the pseudomanifold aspect of
  Whitney's lemma.
\end{remark}

%%% Local Variables:
%%% mode: latex
%%% TeX-master: "homeo"
%%% End:

\section{On tangent space variation}
\label{app:convex.tanvar}

The purpose of this appendix is to sketch the demonstration of
\Lemref{lem:tan.var}, which is adapted from arguments presented in
%\cite{boissonnat2017convtanvar}. 
\cite{boissonnat2017convtanvarsup}.
% \rd{Actually, the argument I had was adapted from
%   \cite{boissonnat2017convtanvarsup}, so I will just clean that up for
%   now. It would be nicer to only reference the primary document
%   \cite{boissonnat2017convtanvar}.  It occurred to me that this entire
%   appendix could be made unnecessary, and the statements of many of
%   the results could be considerably simplified, if we could show that
%   the following is true:
%
%   \textit{If $B \subset \UM$ is a closed Euclidean ball, then
%     $\rch(B\cap \man)\geq \rchM$.}
%
% Do you think that is true? Is there a short concise proof?
%
% Actually, that was just a guess; I have no compelling reason to
% believe it; just a hope that there is a nicer characterization than
% that given in \Lemref{lem:convexity}. Mathijs's new result implies
% that if $A\subseteq \man$ is geodesically convex, then
% $\rch(A)\geq \rchM$. Also, if $\ballEN{c}{r}$ has $r<\rchM$, then
% $\ballEN{c}{r}\cap \man$ is geodesically convex. My question is: what
% is the right \emph{local bound} on $r$ for this to be true?  }
The argument relies on this convexity result:

\begin{lem}[convexity]
  \label{lem:convexity}
  Suppose $\ballEN{c}{r} \subset \UM$ is such that
  $\rch(x,\man)\geq \rbnd$ for all
  $x \in B = \ballEN{c}{r} \cap \man$. If $r < \rbnd$, then $B$ is
  geodesically convex in the sense that for any $x,y\in B$, any
  minimizing geodesic between $x$ and $y$ is contained in $B$.  
\end{lem}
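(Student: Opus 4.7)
The plan is to argue by contradiction using the second-derivative test on the squared distance $F(t) = \norm{\gamma(t)-c}^2$ along a minimizing geodesic $\gamma$, combined with a continuity-in-radius argument that ensures the critical point of $F$ I work with lies inside $B$, where the reach-controlled curvature bound is available.

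The basic computation is as follows. Parametrize $\gamma\colon[0,L]\to\man$ by arclength. Because $\gamma$ lies in $\man$, the acceleration $\gamma''(t)$ is normal to $\man$ at $\gamma(t)$; whenever $\gamma(t)\in B$ the reach bound gives $\norm{\gamma''(t)}\leq 1/\rch(\gamma(t),\man)\leq 1/\rbnd$, since the reach controls the norm of the second fundamental form. Using $\norm{\gamma'}=1$, straightforward differentiation yields
\begin{equation*}
F'(t) = 2\ip{\gamma(t)-c}{\gamma'(t)}, \qquad F''(t) = 2 + 2\ip{\gamma(t)-c}{\gamma''(t)}.
\end{equation*}
Because $\gamma''(t)\in N_{\gamma(t)}\man$, only the normal component of $\gamma(t)-c$ pairs nontrivially with $\gamma''(t)$, and Cauchy--Schwarz gives $F''(t)\geq 2 - 2\norm{\gamma(t)-c}/\rbnd$ whenever $\gamma(t)\in B$. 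If an interior local maximum $t_0$ of $F$ satisfied $\gamma(t_0)\in B$, then $F'(t_0)=0$ and $F''(t_0)\leq 0$ would force $\norm{\gamma(t_0)-c}\geq\rbnd$, contradicting $\norm{\gamma(t_0)-c}<r<\rbnd$. So the first half of the argument already settles the lemma provided the relevant critical point actually lies in $B$.

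The remaining step is to arrange, by a continuity argument on the radius, that this is so. Put $B_\rho=\ballEN{c}{\rho}\cap\man$. For $\rho$ just slightly larger than $\dtoM{c}$, the set $B_\rho$ sits in a small normal chart around $\projM(c)$ and is geodesically convex. Let $\rho^*$ be the supremum of $\rho\in(0,r]$ for which $B_{\rho'}$ is geodesically convex for every $\rho'<\rho$. A short check shows $B_{\rho^*}$ is itself convex: any two points of $B_{\rho^*}$ already lie in some $B_{\rho'}$ with $\rho'<\rho^*$, so the minimizer stays in $B_{\rho'}\subseteq B_{\rho^*}$. Suppose toward contradiction that $\rho^*<r$. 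Then for every $\epsilon>0$ there exist $x_\epsilon,y_\epsilon\in B_{\rho^*+\epsilon}$ whose minimizing geodesic $\gamma_\epsilon$ departs $B_{\rho^*+\epsilon}$. Using the uniform acceleration bound $1/\rbnd$ for any portion of $\gamma_\epsilon$ lying in $\cballEN{c}{r}\cap\man$, together with compactness of this set, I extract a limit geodesic $\gamma^*\subset \overline{B_{\rho^*}}$ whose squared-distance function $F^*$ attains its maximum $(\rho^*)^2$ at an interior time $t^*$. Since $\rho^*<r$ we have $\gamma^*(t^*)\in B$, and the first paragraph's estimate forces $\rho^*\geq\rbnd>r$, a contradiction. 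Hence $\rho^*=r$ and $B$ is geodesically convex.

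The principal obstacle is this limit-extraction step: one must rule out the alternatives that an offending portion of $\gamma_\epsilon$ runs far from $c$ (so that the limiting critical point escapes $B$) or that the interior maximum collapses onto an endpoint. Both are controlled by the uniform curvature bound $1/\rbnd$ for any geodesic arc that remains in $\cballEN{c}{r}\cap\man$: this bounds the deviation of $\gamma_\epsilon$ from its chord $[x_\epsilon,y_\epsilon]\subset\cballEN{c}{r}$ and confines the relevant part of the limit curve to the compact set $\cballEN{c}{r}\cap\man$, where the reach-based estimates remain in force.
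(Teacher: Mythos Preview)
Your second-derivative computation for $F(t)=\norm{\gamma(t)-c}^2$ is correct, and the continuity-in-radius strategy is a natural way to attack the circularity (the curvature bound is only assumed on $B$, but you need it along the geodesic). However, the limit-extraction step contains a genuine gap.

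Here is the problem. Suppose $\rho^*<r$ and let $\gamma_\epsilon$ be a minimizing geodesic between points of $B_{\rho^*+\epsilon}$ that leaves $B_{\rho^*+\epsilon}$. If $\gamma_\epsilon$ remained inside $\cballEN{c}{r}$, your own computation would give $F_\epsilon''\ge 2(1-r/\rbnd)>0$ along all of $\gamma_\epsilon$ (the curvature bound extends to the closure by continuity of $\gamma''$), so $F_\epsilon$ would be strictly convex and attain its maximum at an endpoint, contradicting that $\gamma_\epsilon$ leaves $B_{\rho^*+\epsilon}$. Hence every offending $\gamma_\epsilon$ must exit $\cballEN{c}{r}$: it carries a point $q_\epsilon$ with $\norm{q_\epsilon-c}>r$. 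Passing to a convergent subsequence (which requires the \emph{global} curvature bound coming from compactness of $\man$, not the local one), the limit geodesic $\gamma^*$ therefore has an interior point at distance at least $r$ from $c$. This directly contradicts your assertion that $\gamma^*\subset \overline{B_{\rho^*}}$ with $\max F^*=(\rho^*)^2$. Your proposed remedy---that the curvature bound on the portion of $\gamma_\epsilon$ inside $\cballEN{c}{r}\cap\man$ controls the deviation from the chord $[x_\epsilon,y_\epsilon]$---does not help: a chord-deviation estimate of that kind applies only to arcs lying entirely in the region where the curvature is controlled, whereas the very portion of $\gamma_\epsilon$ you need to rein in is the excursion \emph{outside} $\cballEN{c}{r}$, where no such bound is available.

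The paper takes a different route, following the argument of \cite{boissonnat2017convtanvarsup}: one first shows that a curvature-controlled arc of length below $\pi\rbnd$ between $p$ and $q$ is confined to a lens-shaped region swept out by circular arcs of radius $\rbnd$ through $p$ and $q$, and then checks geometrically that this lens sits inside $\ballEN{c}{r}$ when $p,q\in\ballEN{c}{r}$ and $r<\rbnd$. That approach must also break the same circularity (the curvature and length bounds are only known once the arc is in $B$), but the referenced convexity argument of \cite[Theorem~3.6]{boissonnat2017convtanvarsup} supplies the needed bootstrap. To repair your approach you would need a more careful open--closed argument---for instance on pairs $(x,y)\in B\times B$---using global compactness of $\man$ to control limits of minimizing geodesics, and then explicitly ruling out the scenario in which the limit minimizing geodesic from $x^*,y^*\in\overline{B_{\rho^*}}$ makes a long excursion outside $B_r$ while a second minimizing geodesic between the same endpoints stays inside.
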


\begin{proof}
This follows from the same argument that produced
\cite[Theorem~3.6]{boissonnat2017convtanvarsup}, but using $\rbnd$
instead of $\rchM$. Here is an overview of the adjustments that must
be done to the argument:

\cite[Corollary~2.3]{boissonnat2017convtanvarsup} holds with $\rchM$
replaced by $\rch(p,\man)$.

\cite[Lemma~3.1]{boissonnat2017convtanvarsup} holds with $\rch(M)$
replaced with $\rbnd$, where $\rbnd$ is a lower bound on
$\rch(\gamma(t),\man)$, for all relevant $t$. 

\cite[Lemma~3.2]{boissonnat2017convtanvarsup} also holds more
generally. In fact the essential argument has nothing to do with reach
or manifolds, it says this:

If $\alpha \colon I \to \R^N$ is parameterized by arc length, and has
curvature bounded by $1/R$, i.e., $\norm{\alpha''(t)} \leq 1/R$ for all
$t\in I$, then for any $a,b\in I$,
\begin{equation}
  \label{eq:curv.tan.angle.bnd}
  \angle(\alpha'(a),\alpha'(b)) \leq \frac{\len(\alpha([a,b]))}{R}.
\end{equation}
This comes directly from using the bound on the curvature to give a
bound on the length of the ``indicatrix of tangents'', i.e., the curve
traced out on the sphere by the unit tangent vectors.

This allows us to bound the distance between the endpoints of
$\alpha(I)$ for sufficiently small $I$. Let $I=[0,\ell]$, and
$\alpha(0)=a$ and $\alpha(\ell)=b$. Then, letting $v=\alpha'(\ell/2)$
and integrating $\ip{b-a}{v} = \int_0^\ell \ip{\alpha'(s)}{v}\, ds$ in
two parts, we get
\begin{equation}
  \label{eq:bnd.endpt.dist}
  \norm{b-a} \geq 2R\sin\left(\frac{\ell}{2R}\right),
\quad\text{assuming } \ell \leq \pi R.
\end{equation}
(Actually, it seems the argument is okay as long as $\ell \leq 2\pi
R$, but after $\ell$ becomes larger than $\pi R$ the bound becomes
smaller, finally vanishing when we've come full circle.)

\cite[Lemma~3.3]{boissonnat2017convtanvarsup}: Again this is really
just an argument about curvature controlled curves. The stated bound
holds if $\gamma$ is any space curve, and $\rchM$ is replaced with
$R$, where $\norm{\gamma''}\leq R$, and
$\len(\gamma)<\pi R$. 

\cite[Lemma~3.4]{boissonnat2017convtanvarsup}: 
% \rd{This is the lemma that
% was missing the first time I read the document}
We can replace this statement with: If $p$ and $q$ are connected by a
minimizing geodesic $\gamma$, and $\norm{p-q}<2\rbnd$, where $\rbnd$
is a lower bound on the local reach along $\gamma$, then
$\distM{p}{q}<\pi\rbnd$. 

The lens-shaped region described in
\cite[Corollary~3.5]{boissonnat2017convtanvarsup} is of course
constructed using the short arc of a circle of radius $\rbnd$, where
$\rbnd$ is a lower bound on the local reach along the geodesic.

Now the convexity argument for
\cite[Corollary~3.5]{boissonnat2017convtanvarsup} goes through when we
replace the ball of radius less than reach with the ball
$\ballEN{c}{r} \subset \UM$ such that $\rch(x,\man)\geq \rbnd$ for all
$x \in B = \ballEN{c}{r} \cap \man$, and $r < \rbnd$.
Indeed, note that the condition $\ballEN{c}{r} \subset \UM$ ensures that
$\ballEN{c}{r}$ can only intersect a single connected component of
$\man$ (since the medial axis separates topological components).
\end{proof}

Using the kind of argument that leads to
\Eqnref{eq:curv.tan.angle.bnd} we find: If $\rch(x,\man) \leq \rbnd$
for all $x$ lying on a minimizing geodesic between $a$ and $b$ on
$\man$, then
\begin{equation}
  \label{eq:geod.bnd.tan.angle}
  \angle(T_a\man,T_b\man) \leq \frac{\distM{a}{b}}{\rbnd}.
\end{equation}
This is found by using the curvature bound provided by $\rbnd$ to bound
the angle between any vector $u \in T_a\man$ and $v \in T_b\man$,
obtained by parallel transport of $u$ along the geodesic to $b$. This
bound does not require a bound on the distance between $a$ and $b$,
but it becomes vacuous if $\distM{a}{b} \geq \pi \rbnd/2$.

Using \eqref{eq:bnd.endpt.dist}, we get
\begin{equation}
  \label{eq:bnd.half.tan.angle}
  \sin \left(\tfrac12 \angle(T_a\man,T_b\man) \right)
\leq
\frac{\norm{a-b}}{2\rbnd},
\end{equation}
when $\rbnd$ is a lower bound on the local reach along the geodesic,
and the length of a minimizing geodesic doesn't exceed $\pi\rbnd$. As discussed
in the ``proof'' of \Lemref{lem:convexity}, the argument of 
\cite[Lemma~3.4]{boissonnat2017convtanvarsup} implies that if
$a$ and $b$ are connected by a
minimizing geodesic $\gamma$, and $\norm{a-b}<2\rbnd$, where $\rbnd$
is a lower bound on the local reach along $\gamma$, then
$\distM{a}{b}<\pi\rbnd$.

\begin{lem}
  \label{lem:raw.tan.var}
  Suppose $a,b \in \man$ are connected by a minimizing geodesic
  $\gamma$, and $\rch(x,\man)\leq R$ for all $x$ along $\gamma$. Then
  \begin{equation*}
    \sin \angle(T_a\man,T_b\man)
    \leq
    \frac{\norm{a-b}}{R}.
  \end{equation*}
  If $\norm{a-b} \leq 2R$, then
  \begin{equation*}
    \angle(T_a\man,T_b\man) \leq \frac{\pi\norm{a-b}}{2R}.
  \end{equation*}
\end{lem}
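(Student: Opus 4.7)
The plan is to derive both bounds directly from the two preparatory estimates assembled in the preceding discussion, namely the half-angle bound
\[
\sin\bigl(\tfrac12 \angle(T_a\man,T_b\man)\bigr) \leq \frac{\norm{a-b}}{2R}
\]
from \eqref{eq:bnd.half.tan.angle} and the geodesic-length bound $\angle(T_a\man,T_b\man) \leq \distM{a}{b}/R$ from \eqref{eq:geod.bnd.tan.angle}. Both require $R$ to be a lower bound on the local reach along $\gamma$, and the minimizing geodesic hypothesis lets us invoke them.

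For the first inequality I would simply apply the double-angle identity $\sin\theta = 2\sin(\theta/2)\cos(\theta/2) \leq 2\sin(\theta/2)$ to the half-angle estimate, obtaining $\sin\angle(T_a\man,T_b\man) \leq \norm{a-b}/R$ at once. No case analysis is needed, since the inequality is vacuous when the right-hand side exceeds $1$.

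For the second inequality the strategy is to convert the geodesic-distance bound into a chord-length bound. The additional hypothesis $\norm{a-b}\leq 2R$ together with the remark reproduced from \cite[Lemma~3.4]{boissonnat2017convtanvarsup} puts us in the regime $\distM{a}{b} < \pi R$, so $\distM{a}{b}/(2R) \in [0,\pi/2]$. On this interval Jordan's inequality gives $\sin\theta \geq (2/\pi)\theta$, and combining it with the chord-vs-arc bound \eqref{eq:bnd.endpt.dist} yields
\[
\distM{a}{b} \leq \tfrac{\pi}{2}\cdot 2R\sin\bigl(\distM{a}{b}/(2R)\bigr) \leq \tfrac{\pi}{2}\norm{a-b}.
\]
Plugging this into \eqref{eq:geod.bnd.tan.angle} produces the claimed bound $\angle(T_a\man,T_b\man) \leq \pi\norm{a-b}/(2R)$.

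The hard part is not really the computation but the bookkeeping: one must verify that the hypotheses of the quoted estimates are genuinely in force, in particular that the minimizing geodesic $\gamma$ has length less than $\pi R$ (so that the preparatory chord-arc inequalities are not vacuous) and that the local reach bound applies uniformly along $\gamma$. Both are furnished by the lemma's hypotheses together with the preceding discussion, so no new technical ingredient is required.
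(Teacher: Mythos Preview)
Your proof is correct and uses essentially the same ingredients as the paper. The first inequality is argued identically. For the second inequality there is a minor difference in routing: the paper first uses \eqref{eq:geod.bnd.tan.angle} to see that the half-angle $\theta=\tfrac12\angle(T_a\man,T_b\man)$ lies in $[0,\pi/2]$, then applies Jordan's inequality directly to $\theta$ together with the half-angle bound \eqref{eq:bnd.half.tan.angle}; you instead apply Jordan to $\distM{a}{b}/(2R)$ via the chord-arc estimate \eqref{eq:bnd.endpt.dist} to get $\distM{a}{b}\leq\tfrac{\pi}{2}\norm{a-b}$, and then feed this into \eqref{eq:geod.bnd.tan.angle}. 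Since \eqref{eq:bnd.half.tan.angle} is itself obtained by combining \eqref{eq:bnd.endpt.dist} with \eqref{eq:geod.bnd.tan.angle}, the two arguments are really the same computation arranged in a different order.
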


\begin{proof}
  The first inequality follows from \eqref{eq:bnd.half.tan.angle} and
  the observation (from the angle sum formula) that $\sin(2\theta)
  \leq 2\sin\theta$.

  The second claim follows from the observation above that $\norm{a-b}
  \leq 2\rbnd$ implies $\distM{a}{b} \leq \pi \rbnd$, and so
  \eqref{eq:geod.bnd.tan.angle} shows that $\theta = \frac12
  \angle(T_a\man,T_b\man) \leq \pi/2$. We then use the observation that
  in this case $\frac{2\theta}{\pi} \leq \sin\theta$.
\end{proof}

Combining Lemmas \ref{lem:convexity} and \ref{lem:raw.tan.var}, we
obtain \Lemref{lem:tan.var}.

%%% Local Variables:
%%% mode: latex
%%% TeX-master: "homeo"
%%% End:

\begingroup % macro redefinitions in strong_diff
\section{Exploiting strong differential bounds}
\label{sec:exploit.strong.diff.bnds}

The triangulation criteria of \Thmref{thm:metric.triang} presented in
\Secref{sec:homeo.criteria} are based on the triangulation
demonstration presented in
\cite[Proposition~16]{dyer2015riemsplx}. The main motivation for
presenting the new argument in \Secref{sec:homeo.criteria} is that the
methods of \cite{dyer2015riemsplx} require an intricate analysis of
the differential of the map $F_p$, which makes the application of the
result, considerably more difficult than meeting the purely metric
criteria of \Thmref{thm:metric.triang}.

The motivation for reviewing the previous method here is that the
demonstration of the triangulation result
\cite[Proposition~16]{dyer2015riemsplx} was incorrect, and in fact the
statement of the proposition does not ensure the injectivity of the
map $H$. We correct the problem here by employing the vertex sanity
assumption (\Defref{def:vertex.sanity}) introduced in
\Secref{sec:injectivity}, and provide an erratum in
\Secref{sec:bary.map.erratum}. Although the criteria for this method
are more difficult to establish, once they are established, a stronger
result is obtained, as mentioned in \Remref{rem:bnd.linear.in.t}, so
these results may still be of interest.

For the method of \Thmref{thm:metric.triang}, if $F_p$ is
differentiable on each $m$-simplex, as is the case in the setting of
\Secref{sec:subman}, then \Lemref{lem:differential.bnd.distort} says
that to show that $F_p$ is a $\xi$-distortion map, it is sufficient to
show that for any vector $w$ tangent to a point $u$ in the domain of
$F_p$,
\begin{equation}
  \label{eq:weak.diff.bnd}
  (1-\xi)\norm{w} \leq \norm{d(F_p)_u w} \leq (1+\xi)\norm{w}.
\end{equation}
In the analogous result, \cite[Proposition~16]{dyer2015riemsplx}, a
stronger bound on the differential is demanded: we require that
\begin{equation}
  \label{eq:strong.diff.bnd}
  \onorm{d(F_p)_u -\id} \leq \xi
\end{equation}
for all $u$ in the domain. The bound \eqref{eq:strong.diff.bnd} is
strictly stronger than \eqref{eq:weak.diff.bnd}. It is not difficult
to establish that \eqref{eq:strong.diff.bnd} implies that $F_p$ is a
$\xi$-distortion map \cite[Lemma~11]{dyer2015riemsplx} on each
simplex. However, whereas \eqref{eq:weak.diff.bnd} only constrains how
much $dF_p$ can change the magnitude of a vector,
\eqref{eq:strong.diff.bnd} also constrains how much the direction can
change. For this kind of bound on the differential, there is no need
to exploit the trilateration lemma
(\Lemref{lem:distort.trilateration}):

\begin{lem}[{{\cite[Lemma~12]{dyer2015riemsplx}}}]
  \label{lem:strong.almost.id}
  Suppose $\conset \subseteq \R^m$ is a convex set and
  $F\colon \conset \to \R^m$ is a smooth map with a fixed point
  $p \in \conset$. If
  \begin{equation*}
    \onorm{dF_x - \Id} \leq \xi \quad \text{for all } x \in \conset,
  \end{equation*}
  then
  \begin{equation*}
    \norm{F(x)-x} \leq \xi\norm{x-p} \quad \text{for all }x \in \conset.
  \end{equation*}
\end{lem}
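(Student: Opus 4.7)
The plan is to reduce the claim to a standard mean value estimate by considering the displacement map $G\colon \conset \to \R^m$ defined by $G(x) := F(x) - x$. Since the identity has differential $\Id$, we obtain $dG_x = dF_x - \Id$, so the hypothesis becomes $\onorm{dG_x} \leq \xi$ for all $x \in \conset$. The fixed point condition $F(p)=p$ translates to $G(p) = 0$, so the desired inequality $\norm{F(x)-x}\leq \xi\norm{x-p}$ is just the statement that $G$ is $\xi$-Lipschitz at the point $p$.

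Next I would exploit convexity: for any $x \in \conset$, the segment $\gamma\colon [0,1] \to \conset$ defined by $\gamma(t) = p + t(x-p)$ lies entirely in $\conset$, so we may compose to get a smooth curve $G\circ\gamma\colon [0,1]\to\R^m$ with $(G\circ\gamma)'(t) = dG_{\gamma(t)}(x-p)$. The fundamental theorem of calculus then yields
\begin{equation*}
G(x) - G(p) = \int_0^1 dG_{\gamma(t)}(x-p)\, dt.
\end{equation*}
Taking Euclidean norms, passing the norm inside the integral, and using the operator norm bound gives
\begin{equation*}
\norm{G(x)} = \norm{G(x) - G(p)} \leq \int_0^1 \onorm{dG_{\gamma(t)}}\,\norm{x-p}\, dt \leq \xi\norm{x-p},
\end{equation*}
which is the claimed inequality.

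There is no real obstacle here; the only item requiring any care is justifying the integral representation, which is standard given the smoothness hypothesis on $F$ (continuity of $dF$ on the segment is enough). Convexity of $\conset$ is used exactly once, to ensure that the straight line from $p$ to $x$ stays in the domain so that the operator norm bound $\onorm{dG_{\gamma(t)}}\leq\xi$ applies uniformly along the integration path.
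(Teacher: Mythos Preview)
Your proof is correct and is the standard argument for this mean-value estimate. The paper does not actually prove this lemma; it is quoted verbatim from \cite[Lemma~12]{dyer2015riemsplx} without proof, so there is nothing to compare against here, but your approach is exactly the expected one.
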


The added control obtained by the strong bound
\eqref{eq:strong.diff.bnd} on the differential enables us to ensure
that $F_p$ embeds the boundary of $\str{p}$. This in turn implies that
$F_p$ embeds all of $\str{p}$. This follows from the following
corrollary to Whitney's lemma \cite[Lemma~AII.15a]{whitney1957} (which
is demonstrated as Lemmas \ref{lem:splxwise.pos.loc.cnst} and
\ref{lem:splx.pos.embed} in this work). We include the proof here
since the proof in \cite{dyer2015riemsplx} erroneously states that
Whitney's proof implies that a simplexwise positive map is a local
homeomorphism; this is not true, but such a map is an open map.

\begin{de}[smooth on $\ccplx$]
  \label{def:smooth.on.C}
  Given a simplicial complex $\ccplx$, we say that a map
  $F\colon \carrier{\ccplx} \to \rem$ is \defn{smooth on $\ccplx$} if
  for each $\gsplxs \in \ccplx$ the restriction $F \big|_{\gsplxs}$ is
  smooth.
\end{de}

\begin{lem}[{{\cite[Lemma~13]{dyer2015riemsplx}}}]
  \label{lem:whitney.cplx.embed}
  Let $\ccplx$ be a (finite) simplicial complex embedded in $\rem$
  such that $\intr{\carrier{\ccplx}}$ is nonempty and connected, and
  $\bdry{\carrier{\ccplx}}$ is a compact, connected
  $(m{-}1)$-manifold. Suppose $F \colon \carrier{\ccplx} \to \rem$ is
  smooth on $\ccplx$ and simplexwise positive. If the restriction of
  $F$ to $\bdry{\carrier{\ccplx}}$ is an embedding, then $F$ is a
  topological embedding.
\end{lem}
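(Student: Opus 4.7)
The plan is to use Jordan--Brouwer separation for the boundary image together with \Lemref{lem:splx.pos.embed} applied to $V = \intr{\carrier{\ccplx}}$. Since $F|_{\carrier{\bdry\ccplx}}$ is a topological embedding of a compact connected $(m{-}1)$-manifold into $\R^m$, and $\bdry\carrier{\ccplx} = \carrier{\bdry\ccplx}$ is orientable as the boundary of the $m$-chain $\carrier{\ccplx} \subset \R^m$, the generalized Jordan--Brouwer theorem partitions $\R^m \setminus F(\carrier{\bdry\ccplx})$ into two components, a bounded inside $U$ and an unbounded outside $W$.

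The first hypothesis of \Lemref{lem:splx.pos.embed} to check is $F(V) \cap F(\carrier{\bdry\ccplx}) = \emptyset$. Observe first that $F(\intr{\carrier{\ccplx}})$ is open in $\R^m$, because every $x \in \intr{\carrier{\ccplx}}$ lies in $\ostr{\gsplxs}$ for some $\gsplxs \notin \bdry\ccplx$, and the open-star argument in the proof of \Lemref{lem:splx.pos.embed} (together with invariance of domain for $m$-simplices) shows $F(\ostr{\gsplxs})$ is open. Second, on the unbounded component one has $\#F^{-1}(y) = 0$ for $y$ far from the compact set $F(\carrier{\ccplx})$, and \Lemref{lem:splxwise.pos.loc.cnst} extends this to every $y \in W \setminus F(\carrier{\ccplx^{m-1}})$. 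So $F(\intr{\carrier{\ccplx}}) \cap W$ would be an open subset of the nowhere-dense set $F(\carrier{\ccplx^{m-1}})$, hence empty, giving $F(\intr{\carrier{\ccplx}}) \subseteq \cl{U}$. Were there an $x \in \intr{\carrier{\ccplx}}$ with $F(x) \in F(\carrier{\bdry\ccplx})$, an open neighbourhood of $F(x)$ in $F(\intr{\carrier{\ccplx}})$ would necessarily meet $W$, a contradiction; therefore $F(\intr{\carrier{\ccplx}}) \subseteq U$.

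For the other hypothesis I need to exhibit $y \in F(V) \setminus F(\carrier{\ccplx^{m-1}})$ with $\#F^{-1}(y) = 1$. I would pick an $(m{-}1)$-simplex $\gsplxt \in \bdry\ccplx$, its unique adjacent $m$-simplex $\gsplxs$, and an interior point $w$ of $\gsplxt$. Because $F|_{\gsplxs}$ is an orientation-preserving topological embedding, it carries a small half-neighbourhood $N$ of $w$ in $\gsplxs$ homeomorphically onto a half-neighbourhood of $F(w)$ whose flat face lies in $F(\carrier{\bdry\ccplx})$ and whose open side lies on the $U$-side. The disjointness established in the previous paragraph together with injectivity of $F|_{\carrier{\bdry\ccplx}}$ gives $F^{-1}(F(w)) = \{w\}$, so a standard compactness argument prevents preimages of $y$ from accumulating anywhere other than $w$ as $y \to F(w)$. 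Hence for $y \in U$ close enough to $F(w)$ and outside the nowhere-dense set $F(\carrier{\ccplx^{m-1}})$, the only preimage of $y$ lies in $N$.

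Now \Lemref{lem:splx.pos.embed} yields that $F|_{\intr{\carrier{\ccplx}}}$ is a topological embedding. Combined with the hypothesis that $F|_{\carrier{\bdry\ccplx}}$ is an embedding and the disjointness $F(\intr{\carrier{\ccplx}}) \cap F(\carrier{\bdry\ccplx}) = \emptyset$, the map $F$ is injective on the compact $\carrier{\ccplx}$, hence a topological embedding. The main obstacle is the disjointness step: without combining Jordan--Brouwer separation with the locally constant preimage count that simplexwise positivity provides via \Lemref{lem:splxwise.pos.loc.cnst}, the interior image could a priori cross the boundary image.
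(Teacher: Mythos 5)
Your argument follows essentially the same route as the paper's proof: Jordan--Brouwer separation combined with the locally constant preimage count of \Lemref{lem:splxwise.pos.loc.cnst} to force $F(\intr{\carrier{\ccplx}})$ into the bounded component and hence off $F(\carrier{\bdry{\ccplx}})$; then a singly covered point produced next to the image of a relative-interior point of a boundary $(m{-}1)$-simplex; then \Lemref{lem:splx.pos.embed} applied to $V=\intr{\carrier{\ccplx}}$, glued with the boundary embedding to get global injectivity. The core steps are correct, and your compactness localization of the preimages near $w$ (using $F^{-1}(F(w))=\{w\}$, which you correctly derive from the disjointness plus boundary injectivity) is a clean way to justify that nearby points have their unique preimage inside the adjacent $m$-simplex; the paper instead shrinks a neighbourhood of $F(w)$ to avoid the images of the other simplices of dimension at most $m-1$, which amounts to the same thing.

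The one point you pass over in silence is the preliminary combinatorial step that the paper flags explicitly as a nontrivial exercise: the hypotheses on $\intr{\carrier{\ccplx}}$ and $\bdry{\carrier{\ccplx}}$ (in particular the connectedness of the boundary) must first be shown to imply that $\ccplx$ is a pure $m$-complex in which every $(m{-}1)$-simplex is either an interior face of exactly two $m$-simplices or a boundary simplex with a \emph{unique} adjacent $m$-simplex, so that $\carrier{\bdry{\ccplx}}=\bdry{\carrier{\ccplx}}$ and $\ccplx$ qualifies as the kind of complex to which \Lemref{lem:splxwise.pos.loc.cnst} and \Lemref{lem:splx.pos.embed} apply. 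You invoke exactly these facts (``its unique adjacent $m$-simplex $\gsplxs$'', the open-star argument placing every interior point in $\ostr{\gsplxs}$ for some $\gsplxs\not\in\bdry{\ccplx}$, and the application of \Lemref{lem:splx.pos.embed} itself) without deriving them from the stated hypotheses; this is where the connectedness assumption on $\bdry{\carrier{\ccplx}}$ does real work beyond the Jordan--Brouwer step, and it should be established before the rest of your argument can be run.
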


\begin{proof}
  The assumptions on $\intr{\carrier{\ccplx}}$ and
  $\bdry{\carrier{\ccplx}}$ imply that $\ccplx$ is a pure $m$-complex,
  and that each $(m{-}1)$-simplex is either a boundary simplex, or the
  face of exactly two $m$-simplices.
% Indeed: assumption  bdry is manifold means any splx of dim < m is a face of 
% an (m-1)-simplex. An (m-1)-simplex cannot be the face of more than 2
% m-simplices, because C is embedded. So we just need to show that an
% (m-1)-simplex must be the face of an m-simplex. In the original
% statement (in riem-tri), an isolated (m-1)-simplex wasn't ruled
% out. But I have now added the assumption that the boundary is
% connected, so that takes care of that. 
%
% Now, if an (m-1)-simplex that is not the face of an $m$-simplex,
% shares a face with an m-simplex, then the bdry cannot be manifold. 
% indeed there must be a shared face of dim m-2, otherwise the manifold
% criterion is immediately violated. But an m-2 splx that is a face of
% an m-simplex must be the face of at least two distinct m-1 faces of
% m-simplices, so this m-2 splx would have to have at least 3 m-1
% cofaces, contradicting the manifold criterion.
  (This is a nontrivial exercise, and requires the demand that
  $\bdry{\carrier{\ccplx}}$ be connected, which was absent in the
  original statement of the lemma.)

  By the same argument as in the proof of \Lemref{lem:splx.pos.embed},
  $F(\intr{\carrier{\ccplx}})$ is open.  By the Jordan-Brouwer
  separation theorem~\cite[\S IV.7]{outerelo2009degree},
  $\R^n \setminus F(\bdry{\carrier{\ccplx}})$ consists of two open
  components, one of which is bounded. Since $F(\carrier{\ccplx})$ is
  compact, \Lemref{lem:splxwise.pos.loc.cnst} implies that
  $F(\intr{\carrier{\ccplx}})$ must coincide with the bounded
  component, and in particular
  $F(\intr{\carrier{\ccplx}}) \cap F(\bdry{\carrier{\ccplx}}) =
  \emptyset$, so $F(\intr{\carrier{\ccplx}})$ is a single connected
  component.

  We need to show that $F$ is injective. First we observe that the set
  of points in $F(\intr{\carrier{\ccplx}})$ that have exactly one
  point in the preimage is nonempty. It suffices to look in a
  neigbhourhood of a point $y \in F(\bdry{\carrier{\ccplx}})$. Choose
  $y = F(x)$, where $x$ is in the relative interior of
  $\gsplxs^{m-1} \subset \bdry{\carrier{\ccplx}}$. Then there is a
  neighbourhood $V$ of $y$ such that $V$ does not intersect the image
  of any other simplex of dimension less than or equal to $n-1$. Let
  $\gsplxs^m$ be the unique $m$-simplex that has $\splxs^{m-1}$ as a
  face. Then $F^{-1}(V \cap F(\carrier{\ccplx}) \subset \gsplxs^m$,
  and it follows that every point in $V \cap \intr{\carrier{\ccplx}}$
  has a unique point in its image.

  Now the injectivity of $F$ follows from
  \Lemref{lem:splx.pos.embed}. 
\end{proof}

We then obtain bounds on the differential of $F_p$ sufficient to
ensure that it embeds $\str{p}$:

\begin{lem}[{{\cite[Lemma~14]{dyer2015riemsplx}}}] %[Embedding a star]
  \label{lem:embed.star}
  Suppose $\ccplx = \str{\p}$ is a $\thickbnd$-thick, pure $m$-complex
  embedded in $\rem$ such that all of the $m$-simplices are incident
  to a single vertex, $\p$, and $\p \in \intr{\carrier{\ccplx}}$ {\rm(i.e.,
  $\str{\p}$ is a full star)}. If $F \colon \carrier{\ccplx} \to \rem$
  is smooth on $\ccplx$, and satisfies
  \begin{equation}
    \label{eq:embed.str.norm.bnd}
    \onorm{dF - \Id}  < m\thickbnd
  \end{equation}
  on each $m$-simplex of $\ccplx$, then $F$ is an embedding.
\end{lem}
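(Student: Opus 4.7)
The plan is to invoke the complex embedding lemma (\Lemref{lem:whitney.cplx.embed}). Since $\str{\p}$ is a full star, $\carrier{\str{\p}}$ is a topological $m$-ball: $\intr{\carrier{\str{\p}}}$ contains $\p$ and is connected, while $\bdry{\carrier{\str{\p}}}$ is a compact, connected $(m{-}1)$-manifold homeomorphic to $S^{m-1}$ via the radial projection $\phi(x)=(x-\p)/\norm{x-\p}$. So it suffices to verify that (a) $F$ is simplexwise positive, and (b) $F|_{\bdry{\carrier{\str{\p}}}}$ is a topological embedding. For (a), the bound $\thickbnd\leq 1/m$ (automatic for thickness) gives $\onorm{dF_u-\Id}<m\thickbnd\leq 1$ throughout each $m$-simplex $\gsplxs$. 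Integrating $dF$ along $[x,y]\subset\gsplxs$ yields $\norm{(F(y)-F(x))-(y-x)}<m\thickbnd\norm{y-x}$; if $F(x)=F(y)$ with $x\neq y$, this would force $1<m\thickbnd$, contradicting $m\thickbnd\leq 1$. Hence $F|_\gsplxs$ is injective; being smooth with continuous nonvanishing Jacobian determinant (positive at points where $dF$ is close to $\Id$), it is an orientation-preserving embedding.

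For (b), I exploit the star shape of $\carrier{\str{\p}}$. For every $x\in\bdry{\carrier{\str{\p}}}$ the segment $[\p,x]$ lies in a single $m$-simplex $\gsplxs_x\in\str{\p}$---the one whose face opposite $\p$ contains $x$---so $F$ is smooth along this segment. Integrating yields
\begin{equation*}
    \norm{(F(x)-F(\p))-(x-\p)} < m\thickbnd\,\norm{x-\p}.
\end{equation*}
In particular $F(x)\neq F(\p)$ for every $x\in\bdry{\carrier{\str{\p}}}$, so the perturbed radial map $\tilde\phi\colon\bdry{\carrier{\str{\p}}}\to S^{m-1}$ defined by $\tilde\phi(x)=(F(x)-F(\p))/\norm{F(x)-F(\p)}$ is continuous. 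The unnormalised linear interpolant $(x-\p)+s\bigl((F(x)-F(\p))-(x-\p)\bigr)$ has norm at least $(1-s\,m\thickbnd)\norm{x-\p}>0$, so normalising it provides a homotopy from $\phi$ to $\tilde\phi$; since $\phi$ is a homeomorphism, $\deg\tilde\phi=\pm 1$.

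The remaining---and main---step is to show $\tilde\phi$ is a local homeomorphism, so that it becomes a covering map of degree $\pm 1$, hence a homeomorphism, forcing injectivity of $F|_{\bdry{\carrier{\str{\p}}}}$. On the relative interior of each $(m{-}1)$-face $\tau$ opposite $\p$ in some $\gsplxs$, $\tilde\phi$ is smooth, and $d\tilde\phi_x(v)$ equals the projection of $dF_x(v)/\norm{F(x)-F(\p)}$ orthogonal to $\tilde\phi(x)$; non-degeneracy thus amounts to showing $dF_x(v)$ is not parallel to $F(x)-F(\p)$ for nonzero $v$ tangent to $\tau$. The altitude bound $a_\p(\gsplxs)\geq m\thickbnd\,L(\gsplxs)$ lower-bounds the angle between the radial direction $x-\p$ and $\aff(\tau)$, and $\onorm{dF-\Id}<m\thickbnd$ is precisely the threshold that prevents $dF_x(v)$ (close to $v\in\aff(\tau)-x$) from rotating into the radial direction (close to $F(x)-F(\p)$). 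Making this angular comparison quantitative is where the specific constant $m\thickbnd$ gets used in its sharpest form, and is the primary technical obstacle. Once this local non-degeneracy is in place, $\tilde\phi$ is a homeomorphism, $F|_{\bdry{\carrier{\str{\p}}}}$ is injective, and \Lemref{lem:whitney.cplx.embed} delivers the full embedding.
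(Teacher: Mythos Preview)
Your overall plan via \Lemref{lem:whitney.cplx.embed} is the right one, and the verification that $F$ is simplexwise positive is fine. The gap is in step~(b), which you yourself flag as the ``primary technical obstacle'' and leave unresolved; moreover, the angular comparison you sketch does not close with the constant $m\thickbnd$.

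Take $x$ to be a vertex of $\tau$ with $\norm{x-\p}=L(\gsplxs)$ and suppose $a_{\p}(\gsplxs)=m\thickbnd L(\gsplxs)$; both extremes are permitted by the hypotheses. Then the angle $\theta$ between any unit $v$ tangent to $\tau$ and the radial line through $x-\p$ satisfies $\sin\theta=a_{\p}/\norm{x-\p}=m\thickbnd$. On the other hand, from $\onorm{dF-\Id}<m\thickbnd$ one gets only that $dF_x(v)$ deviates from $v$ by an angle $\alpha$ with $\sin\alpha<m\thickbnd/(1-m\thickbnd)$, and, integrating along $[\p,x]$, that $F(x)-F(\p)$ deviates from $x-\p$ by an angle $\beta$ with the same bound. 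Since $\alpha+\beta$ can exceed $\theta$ (for small $m\thickbnd$ this is roughly $2m\thickbnd$ versus $m\thickbnd$), nothing in your estimate prevents $dF_x(v)$ from becoming parallel to $F(x)-F(\p)$. The phrase ``precisely the threshold'' is therefore not justified; this radial-projection route appears to require a bound about twice as strong as \eqref{eq:embed.str.norm.bnd}. There is also a second, structural gap: even if $d\tilde\phi$ were non-degenerate on the interior of each $(m{-}1)$-face of $\bdry\carrier{\str{\p}}$, the map $\tilde\phi$ is only piecewise smooth, and local injectivity across the $(m{-}2)$-skeleton still needs its own argument (an analogue of \Lemref{lem:splx.pos.embed} on the sphere, say).

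The present paper imports this lemma from \cite{dyer2015riemsplx} without reproving it, so there is no in-paper proof to compare against; but the argument there does not proceed via the radial map~$\tilde\phi$.
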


If the requirements of \Lemref{lem:embed.star} are met, then $H$ is a
local homemorphism, and we are left with ensuring that it is
injective, in order to guarantee that it is a homeomorphism. To this
end we employ the vertex sanity assumption
(\Defref{def:vertex.sanity}), and we arrive at the following
triangulation result, which can replace the flawed
\cite[Proposition~16]{dyer2015riemsplx}: 

\begin{prop}[triangulation with differential control]
  \label{prop:strong.bnd.triang}
  Suppose $\acplx$ is a compact $m$-manifold complex (without
  boundary), with vertex set $\pts$, and $\man$ is an $m$-manifold. A
  map $H \colon \carrier{\acplx} \to \man$ is a homeomorphism if the
  following criteria are satisfied:
  \begin{enumerate}
  \item \textbf{\textup{compatible atlases}}\, There are compatible
    atlases
\[
\{(\tccplx_p,\lmap_p)\}_{p\in\pts}, \quad \tccplx_p \subset \acplx, 
\quad and \quad
\{(U_p,\phi_p)\}_{p\in\pts}, \quad U_p \subset \man,
\]
    for $H$, with
    $\tccplx_p = \str{p}$ for each $p \in \pts$, the vertex set of~$\acplx$
    {\rm(\Defref{def:compatible.atlases})}.
  \item \textup{\bf simplex quality}\, For each $p \in \pts$, every
    simplex $\gsplxs \in \strp = \lmap_p(\str{p})$ satisfies
    $s_0 \leq L(\gsplxs) \leq L_0$ and $t(\gsplxs) \geq t_0$ {\rm
      (Notation~\ref{not:splx.qual})}.
  \item \textup{\bf distortion control}\, For each $p\in \pts$, the map
    \[
      F_p = \phi_p \circ H \circ \lmap_p^{-1}
      \colon
      \carrier{\strp} \to \rem,
    \]
    is smooth on $\strp$, and simplexwise positive, and for any
    $m$-simplex $\gsplxs \in \strp$ and any $u\in \gsplxs$, we have
    \[
      \onorm{d(F_p)_u - \id} < \xi = \frac{s_0 t_0}{2 L_0}
    \]
    {\rm(Definitions \ref{def:smooth.on.C} and
      \ref{def:simplexwise.positive})}.
  \item \textup{\bf vertex sanity}\, For all vertices $p,q \in \pts$,
    if $\phi_p \circ H(q) \in \carrier{\strp}$, then $q$ is a vertex
    of~$\str{p}$.
  \end{enumerate}
\end{prop}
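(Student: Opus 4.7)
The plan is to follow the two-step strategy used for \Thmref{thm:metric.triang}: first establish that $H$ is a covering map by embedding each local realization $F_p$ on $\carrier{\strp}$, and then use the vertex sanity hypothesis to force injectivity. The tools change: we invoke \Lemref{lem:embed.star} in place of the simplexwise positive embedding lemma, and \Lemref{lem:strong.almost.id} in place of \Lemref{lem:distort.trilateration}, both of which are available thanks to the stronger differential bound in hypothesis~3.

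For the covering map step, I would apply \Lemref{lem:embed.star} to each $F_p\colon \carrier{\strp} \to \rem$. The structural hypotheses are satisfied: since $\acplx$ is a compact $m$-manifold without boundary and $\lmap_p$ is a piecewise linear embedding, $\strp$ is a full $t_0$-thick star with $\p$ in its interior; smoothness on each simplex and simplexwise positivity are given by hypothesis~3. For the differential bound \eqref{eq:embed.str.norm.bnd}, note that $s_0 \leq L_0$ and $t_0 \leq 1$, so for all $m \geq 1$,
\[
\onorm{d(F_p)_u - \id} < \frac{s_0 t_0}{2 L_0} \leq \frac{t_0}{2} < m t_0.
\]
Hence $F_p$ embeds $\carrier{\strp}$, so $H|_{\carrier{\str{p}}} = \phi_p^{-1} \circ F_p \circ \lmap_p$ embeds the open neighbourhood $\carrier{\str{p}}$ of $p$ into $\man$. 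As the collection $\{\carrier{\str{p}}\}_{p \in \pts}$ covers $\carrier{\acplx}$, the map $H$ is a local homeomorphism; compactness of $\carrier{\acplx}$, \Remref{rem:touch.all.components}, and Brouwer's invariance of domain then give that $H$ is a surjective covering map.

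For injectivity I would adapt the argument of \Lemref{lem:injectivity}. Suppose for contradiction that $H(x) = H(q)$ with $q$ a vertex that is not a vertex of the $m$-simplex $\gsplxs \ni x$. Fix a vertex $p$ of $\gsplxs$ and write $\x = \lmap_p(x)$, $\hat{\gsplxs} = \lmap_p(\gsplxs)$. Vertex sanity forces $F_p(\x) = \phi_p \circ H(q)$ either to lie outside $\carrier{\strp}$ or, if $q$ is a non-$\p$ vertex of $\str{p}$, to coincide with the boundary vertex $\lmap_p(q)$ of $\strp$; in both cases $\norm{F_p(\x) - \x} \geq \distEm{\x}{\carrier{\bdry\strp}}$. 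On the other hand, $\p$ is a fixed point of $F_p$ on the convex set $\hat{\gsplxs}$, so \Lemref{lem:strong.almost.id} gives
\[
\norm{F_p(\x) - \x} \leq \xi \norm{\x - \p} \leq \xi L_0 < \frac{s_0 t_0}{2}.
\]
Combining this with \Lemref{lem:dist.to.star.bdry} yields $\lambda_{\hat{\gsplxs},\p}(\x) \cdot m t_0 s_0 < s_0 t_0/2$, i.e., $\lambda_{\hat{\gsplxs},\p}(\x) < 1/(2m) \leq 1/(m+1)$. Since $\lmap_p$ preserves barycentric coordinates and the argument applies to every vertex $p$ of $\gsplxs$, every barycentric coordinate of $x$ in $\gsplxs$ is strictly less than $1/(m+1)$, contradicting that they sum to $1$. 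Thus $H^{-1}(H(q))$ is a singleton for every vertex $q$, and together with the covering map property and \Remref{rem:touch.all.components} this shows $H$ is a homeomorphism.

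The main obstacle I anticipate is bookkeeping rather than analysis: checking that the topological requirements of \Lemref{lem:embed.star} (full-star structure, thickness) and the precise numerical form of its differential bound transfer cleanly from the compatible-atlas setup, and in particular verifying the two cases that arise in the vertex sanity step (whether or not $q$ happens to be a non-central vertex of $\str{p}$). The edge case $m=1$, where $1/(2m) = 1/(m+1)$, is handled because hypothesis~3 is a strict inequality, delivering the strict bound $\lambda_{\hat{\gsplxs},\p}(\x) < 1/(2m)$ in all dimensions.
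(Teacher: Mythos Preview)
Your proposal is correct and follows essentially the same route as the paper: invoke \Lemref{lem:embed.star} (after checking $\xi \leq t_0/2 < m t_0$) to get that each $F_p$ embeds $\carrier{\strp}$ and hence $H$ is a covering map, and then run the barycentric-coordinate contradiction of \Lemref{lem:injectivity} with \Lemref{lem:strong.almost.id} replacing trilateration to obtain $\lambda_{\hat{\gsplxs},\p}(\x) < 1/(2m) \leq 1/(m+1)$. The only cosmetic slip is calling $\carrier{\str{p}}$ an ``open neighbourhood''; it is closed, but its interior (the open star) is the open neighbourhood you need, and these cover $\carrier{\acplx}$.
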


\begin{proof}
  Since the requirements of \Lemref{lem:embed.star} are met, we only
  need to show that $H$ is injective. The argument the same as for
  \Lemref{lem:injectivity}, with minor modifications.

Towards a contradiction, suppose that $H(q) \in H(\gsplxs)$ and that
$q$ is not a vertex of the $m$-simplex $\gsplxs$. This means there is
some $x \in \gsplxs$ such that $H(x) = H(q)$. Let $p$ be a vertex
of~$\gsplxs$.  The vertex sanity hypothesis
(\Defref{def:vertex.sanity}) implies that $\phi_p \circ H(q)$ must be
either outside of $\carrier{\strp}$, or belong to its boundary.  Thus
Lemmas \ref{lem:dist.to.star.bdry} and \ref{lem:strong.almost.id}, and
the bound on $\xi$ imply that the barycentric coordinate of $x$ with
respect to $p$ must be smaller than $\frac{1}{m+1}$: Let
$\x = \lmap_p(x)$, and $\hat{\gsplxs}=\lmap_p(\gsplxs)$.
\Lemref{lem:strong.almost.id} says that
  \begin{equation*}
    \abs{F_p(\x)-\x} < \xi L_0 = \frac{s_0 t_0}{2} \leq \frac{a_0}{2m},
  \end{equation*}
  where $a_0$ is a lower bound on the altitudes of $\p$, as in
  \Lemref{lem:dist.to.star.bdry}. Since
  $F_p(\x)=\phi_p \circ H(x)=\phi_p\circ H(q)$ is at least as far away
  from $\x$ as $\bdry{\strp}$, \Lemref{lem:dist.to.star.bdry} implies
  that the barycentric coordinate of $\x \in \hat{\gsplxs}$ with
  respect to $\p$ must be strictly less than
  $\frac{1}{2m} \leq \frac{1}{m+1}$. Since $\lmap_p$ preserves
  barycentric coordinates, and the argument works for any
  vertex $p$ of $\gsplxs$, we conclude that all the barycentric
  coordinates of $x$ in $\gsplxs$ are strictly less than
  $\frac{1}{m+1}$. We have reached a contradiction with the fact that
  the barycentric coordinates of $x$ must sum to 1.
\end{proof}

\begin{remark}
  \label{rem:bnd.linear.in.t}
  Notice that the constraint on $\xi$ in
  \Propref{prop:strong.bnd.triang} is only linear in $t_0$, whereas it
  is quadratic in $t_0$ in \Thmref{thm:metric.triang}.
\end{remark}

%\section{The barycentric coordinate map}
\subsection{Erratum for \textit{Riemannian simplices and triangulations}}
\label{sec:bary.map.erratum}

% from macros_riem.tex
\newcommand{\ren}{\reel^n}
\newcommand{\seccurv}{K}
\newcommand{\curvabsbnd}{\Lambda}
\newcommand{\scale}{h} 
\newcommand{\tanspace}[2]{T_{#1}{#2}} % e.g. T_xS
\newcommand{\gthickness}{t}
\newcommand{\thickness}[1]{\gthickness(#1)} 
\let\oldpts\pts
\renewcommand{\pts}{S}
\newcommand{\gdistA}{\gdistG{\acplx}} % intrinsic metric on complex
\newcommand{\distA}[2]{\distG{\acplx}{#1}{#2}}

This subsection is an erratum for \cite{dyer2015riemsplx}, and we will
employ here the notation and conventions of that paper, which differ
slightly from those in the rest of the current document. In
particular, the dimension of the Riemannian manifold we are
triangulating is $n$, and $\splxs$ denotes an abstract simplex, i.e.,
a set of vertices, usually on the manifold, $\man$. When these
vertices are lifted via the inverse of the exponential map to
$T_p\man$, the resulting vertex set is denoted $\splxs(p)$.  A
``filled in'' Euclidean simplex is denoted $\splxsE$.  The vertex set
of $\acplx$ is denoted by $\pts$ instead of $\oldpts$.

The proof of the generic triangulation criteria, Proposition~16 in
\cite{dyer2015riemsplx}, is flawed; the criteria presented do not
guarantee that the map $H$ is injective. This problem infects the
results in \cite{dyer2015riemsplx} which rely on Proposition~16:
Theorem~2, Proposition~26, and Theorem~3. 

These results all hold true without any further modifications if the
following hypothesis is added to Proposition~16, and Theorem~2 (the
other results simply require that the hypotheses of Theorem~2 are
met): 

\begin{hyp}[simple injectivity assumption]
  \label{hyp:simple.inject.assumpt}
  If $q$ is a vertex of $\acplx$ and $q\in H(\splxsE)$, then $q$ is a
vertex of $\gsplxs$.
\end{hyp}

The injectivity of the map $H$ follows trivially from
\Hypref{hyp:simple.inject.assumpt}, since it has been established that
$H$ is a covering map. However, this assumption is not easy to verify,
at least not in some applications of interest, e.g.,
\cite{boissonnat2017manmesh}. We provide here corrected statements of
the affected results, obtained by replacing
\cite[Proposition~16]{dyer2015riemsplx} with
\Propref{prop:strong.bnd.triang}, which uses the vertex sanity
assumption, \Defref{def:vertex.sanity}.
Unfortunately, the bound required on the differential 
in {prop:strong.bnd.triang}(3)
is qualitatively
different from that imposed in
\cite[Proposition~16]{dyer2015riemsplx}. In particular, we need to
impose a lower bound $s_0$ on the diameters of the simplices. It is
convenient to define $\mu_0 = s_0/L_0$. 

In the proof of \cite[Theorem~2]{dyer2015riemsplx}, the Proposition~16
is employed at the bottom of page~23, just before the statement of the
theorem. Reworking that short calculation, and incorporating the
vertex sanity hypothesis, we obtain:

\begin{thm}[{{\cite[Theorem~2]{dyer2015riemsplx} corrected}}]
  \label{thm:riem.triangulation.corrected}
  Suppose $\man$ is a compact $n$-dimensional Riemannian manifold with
  sectional curvatures $\seccurv$ bounded by
  $\abs{\seccurv} \leq \curvabsbnd$, and $\acplx$ is an abstract
  simplicial complex with finite vertex set $\pts \subset \man$.
Define  quality parameters $\thickbnd > 0$, $0< \mu_0 \leq 1$, and let
    \begin{equation}
      \label{eq:new.bnd}
      \scale = \min \left\{ \frac{\injradM}{4},
      \frac{\sqrt{\mu_0}\thickbnd}{6 \sqrt{\curvabsbnd}} \right\}.
    \end{equation}
Suppose
  \begin{enumerate}
  \item For every $p \in \pts$, the vertices of $\str{p}$ are
    contained in $\ballM{p}{h}$, and the balls
    $\{\ballM{p}{h}\}_{p \in \pts}$ cover $\man$.
  \item For every $p \in \pts$, the restriction of the inverse of the
    exponential map $\exp_p^{-1}$ to the vertices of
    $\str{p} \subset \acplx$ defines a piecewise linear embedding of
    $\carrier{\str{p}}$ into $\tanspace{p}{\man}$, realising $\str{p}$
    as a full star, $\widehat{\str{p}}$, such that every simplex
    $\splxs(p)$ has thickness $\thickness{\splxs(p)} \geq \thickbnd$
    and diameter $\mu_0L_0 \leq L(\splxs(p)) \leq L_0$.
  \item For all vertices $p,q \in \pts$, if
    $(\exp_p|_{\ballM{p}{h}})^{-1}(q) \in
    \carrier{\widehat{\str{p}}}$, then $q$ is a vertex of~$\str{p}$.
  \end{enumerate}
  Then $\acplx$ triangulates $\man$, and the triangulation is given by
  the barycentric coordinate map on each simplex.
\end{thm}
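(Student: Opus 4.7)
The plan is to apply Proposition~\ref{prop:strong.bnd.triang} directly, setting things up via the exponential map. First I would fix, for each $p \in \pts$, the chart $\phi_p = \exp_p^{-1}|_{U_p}$ where $U_p = \ballM{p}{R}$ with $\scale < R < \injradM$; the constraint $\scale \leq \injradM/4$ leaves ample room for such an $R$, so $\phi_p$ is a diffeomorphism onto an open subset of $T_p\man \cong \R^n$. The map $H$ is, on each simplex, the Riemannian barycentric coordinate map, sending a point with barycentric coordinates $(\lambda_i)$ to the Karcher mean of the corresponding vertices with those weights; at the scale $\scale$ controlled by $1/\sqrt{\curvabsbnd}$ as in~\eqref{eq:new.bnd}, this mean is well-defined and smooth on each simplex. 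The secant map $\lmap_p$ of $\phi_p \circ H|_{\carrier{\str{p}}}$ agrees with $\exp_p^{-1}$ on vertices, so by hypothesis~(2) it is a piecewise linear embedding that realises $\str{p}$ as the full star $\widehat{\str{p}} \subset T_p\man$. This furnishes compatible atlases in the sense of Definition~\ref{def:compatible.atlases}, verifying condition~(1) of the proposition.

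Next I would read the remaining conditions off the hypotheses. Condition~(2), simplex quality, is exactly hypothesis~(2), with $t_0 = \thickbnd$, $s_0 = \mu_0 L_0$, and the prescribed $L_0$. Condition~(4), vertex sanity, is literally hypothesis~(3), since $\carrier{\widehat{\str{p}}} = \lmap_p(\carrier{\str{p}})$ by construction.

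The hard part will be condition~(3), the distortion control
\[
\onorm{d(F_p)_u - \id} < \xi = \frac{s_0 t_0}{2 L_0} = \frac{\mu_0 \thickbnd}{2}
\]
for every $u$ in an $m$-simplex of $\widehat{\str{p}}$, where $F_p = \phi_p \circ H \circ \lmap_p^{-1}$. To attack this I would decompose $dF_p - \id$ along the lines of Sections~4--6 of \cite{dyer2015riemsplx}: Rauch comparison bounds how $d\exp_p$ deviates from the canonical identification $T_u(T_p\man) \cong T_{\exp_p(u)}\man$, contributing a factor of order $\curvabsbnd \scale^2$; Karcher's Hessian estimates control the differential of the Riemannian barycentre map against its Euclidean counterpart with a similar factor; and the barycentric-coordinate machinery behind \Lemref{lem:distort.trilateration} picks up an inverse-thickness factor through the bound $\onorm{(\transp{P})^{-1}} \leq (\sqrt{m}\thickbnd L)^{-1}$ on the matrix whose columns are the vertex positions. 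Combining these contributions should yield an estimate of the shape $\onorm{dF_p - \id} \leq C \curvabsbnd \scale^2 / \thickbnd^2$ on each $m$-simplex.

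The main obstacle is tracking the constants so that the scale restriction~\eqref{eq:new.bnd}, $\scale \leq \sqrt{\mu_0}\thickbnd/(6\sqrt{\curvabsbnd})$, actually delivers $\xi \leq \mu_0\thickbnd/2$. This is essentially the computation at the bottom of page~23 of \cite{dyer2015riemsplx}, but with the target $\xi$ now shrunk by a factor of $\mu_0$; the $\sqrt{\mu_0}$ in~\eqref{eq:new.bnd} is present precisely to absorb this reduction in the scale-versus-quality trade-off. Once the differential bound is in hand, all four hypotheses of Proposition~\ref{prop:strong.bnd.triang} are satisfied, and that proposition concludes that $H$ is a homeomorphism, so $\acplx$ triangulates $\man$ via the Riemannian barycentric coordinate map.
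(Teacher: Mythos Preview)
Your approach is essentially the same as the paper's: the paper does not give a self-contained proof but simply says that the original argument of \cite[Theorem~2]{dyer2015riemsplx} goes through once the flawed Proposition~16 there is replaced by \Propref{prop:strong.bnd.triang}, and that the short calculation at the bottom of page~23 must be reworked with the new target $\xi = \mu_0\thickbnd/2$. You have correctly identified all four hypotheses of \Propref{prop:strong.bnd.triang} and how they correspond to the hypotheses of the theorem, and you have correctly located the real work in the differential bound and the reason for the factor $\sqrt{\mu_0}$ in~\eqref{eq:new.bnd}.

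One small slip: the differential estimate from \cite{dyer2015riemsplx} has the shape $\onorm{dF_p - \id} \leq C\curvabsbnd\scale^2/\thickbnd$, with a \emph{single} inverse-thickness factor (as your own description ``picks up an inverse-thickness factor'' suggests), not $\thickbnd^2$ in the denominator. With the quadratic dependence you wrote, the constraint $\scale \leq \sqrt{\mu_0}\thickbnd/(6\sqrt{\curvabsbnd})$ would give $C\mu_0/36 < \mu_0\thickbnd/2$, which fails for small $\thickbnd$; with the correct single factor you get $C\mu_0\thickbnd/36 < \mu_0\thickbnd/2$, which holds for $C<18$ and matches the constant $6$ in~\eqref{eq:new.bnd}.
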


For \cite[Proposition~26]{dyer2015riemsplx}, the affected argument is
in the paragraph preceding the statement of the proposition. We
replace that paragraph with (where it is understood that references to
Theorem~2 are to the corrected version,
\Thmref{thm:riem.triangulation.corrected}).
The result is that the new (stronger) bounds on $\scale$ 
 are always sufficient to ensure a piecewise flat metric on $\acplx$:

\begin{quote}
  Thus in order to guarantee that the $\ell_{ij}$ describe a
  non-degenerate Euclidean simplex, we require that
  $ \curvabsbnd \scale^2 = \eta\thickbnd^2/2$,
  for some non-negative $\eta < 1$.

  Under the conditions of Theorem~2 we may have
  $\scale^2 \curvabsbnd = \frac{\mu_0 \thickbnd^2}{36}$, which gives
  us $\eta = \frac{\mu_0}{18} \leq \frac{1}{18} < 1$. Thus the
  requirements of Theorem~2 are sufficient to ensure the existence of
  a piecewise flat metric on $\acplx$, and we obtain:
\end{quote}

\begin{prop}[{{\cite[Proposition~26]{dyer2015riemsplx}}} corrected]
  \label{prop:pwflat.metric.exists.corrected}
  If the requirements of Theorem~2, are satisfied then the geodesic
  distances between the endpoints of the edges in $\acplx$ define a
  piecewise flat metric on $\acplx$ such that each simplex
  $\splxs \in \acplx$ satisfies
  \begin{equation*}
    \thickness{\splxs} > \frac{3}{4\sqrt{n}}\thickbnd.
  \end{equation*}
\end{prop}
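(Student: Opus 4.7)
The plan is to transport the Euclidean structure of each star $\widehat{\str{p}}$ in the tangent space to the realized edge-length data on $\acplx$, and to show that the resulting simplices are not only non-degenerate but quantitatively thick. The preceding paragraph has already extracted the key quantitative input from the revised scale bound \eqref{eq:new.bnd}, namely $\curvabsbnd \scale^2 \leq \mu_0 \thickbnd^2/36$. What remains is an edge-length distortion estimate followed by a thickness stability argument.

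For the distortion estimate, I would fix any simplex $\splxs \in \acplx$, choose a vertex $p \in \pts$ with $\splxs \in \str{p}$, and compare the geodesic edge lengths $\ell_{ij} = \distM{q_i}{q_j}$ with the Euclidean lengths $\|v_i - v_j\|$, where $v_i = \exp_p^{-1}(q_i) \in \tanspace{p}{\man}$. Since all vertices of $\str{p}$ lie in $\ballM{p}{\scale}$ with $\scale \leq \injradM/4$, a Rauch-type Jacobi field comparison yields a universal constant $c_1$ such that the multiplicative distortion of every edge length is at most $c_1 \curvabsbnd \scale^2$. For the stability step, I would invoke the thickness stability lemma used in the original proof of \cite[Proposition~26]{dyer2015riemsplx}: if a non-degenerate $n$-simplex of thickness at least $\thickbnd$ has its edge lengths rescaled by factors in $[1-\epsilon, 1+\epsilon]$ with $\epsilon \leq \eta \thickbnd^2/2$ and $\eta < 1$, then the perturbed lengths realize a genuine Euclidean simplex, and an explicit calculation yields thickness at least $\frac{3}{4\sqrt{n}}\thickbnd$ provided $\eta \leq 1/18$. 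Feeding $\epsilon = c_1 \curvabsbnd \scale^2$ and $\eta = \mu_0/18$ into this lemma delivers both non-degeneracy of every $\splxs$ and the claimed thickness bound, and hence the existence of the piecewise flat metric on $\acplx$.

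The main obstacle will be the quantitative bookkeeping in the thickness stability step: the factor $\sqrt{n}$ in the denominator reflects how coordinated multiplicative edge perturbations can shrink the smallest altitude of an $n$-simplex, and calibrating $3/(4\sqrt{n})$ with the specific value $\eta \leq 1/18$ requires reproducing the stability calculation of \cite{dyer2015riemsplx} in detail rather than merely invoking a qualitative version. Note that the injectivity machinery repaired by \Hypref{hyp:simple.inject.assumpt} in \Propref{prop:strong.bnd.triang} plays no role here; the corrected Theorem~2 is invoked only to supply the stronger scale bound \eqref{eq:new.bnd} together with the lower edge-length bound $\mu_0 L_0$, and the rest of the argument carries over from the original paper verbatim.
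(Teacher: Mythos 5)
Your proposal matches the paper's own treatment: the paper's entire ``proof'' consists of the replaced paragraph checking that the new scale bound \eqref{eq:new.bnd} gives $\curvabsbnd\scale^2 \leq \mu_0\thickbnd^2/36$, i.e.\ $\eta = \mu_0/18 \leq 1/18 < 1$, and then deferring to the original argument of \cite[Proposition~26]{dyer2015riemsplx} for the edge-length distortion and thickness-stability steps, exactly as you do. Your additional sketch of that machinery (exponential-map distortion plus thickness stability), and your observation that the vertex-sanity/injectivity repair plays no role here, are consistent with the paper.
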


Likewise, for \cite[Theorem~3]{dyer2015riemsplx}, it is sufficient to
impose the new bounds introduced in
\Thmref{thm:riem.triangulation.corrected} to obtain the same metric
distortion bound:

\begin{thm}[{{\cite[Theorem~3]{dyer2015riemsplx}}} corrected]
  \label{thm:metric.distortion.corrected}
  If the requirements of Theorem~2 are satisfied,
  then $\acplx$ is naturally equipped with a piecewise flat metric
  $\gdistA$ defined by assigning to each edge the geodesic distance in
  $\man$ between its endpoints.

  With this metric on $\acplx$, if
  $H \colon \carrier{\acplx} \to \man$ is the triangulation defined by
  the barycentric coordinate map, then the metric distortion induced
  by $H$ is quantified as
  \begin{equation*}
    \abs{ \distM{H(x)}{H(y)} - \distA{x}{y} } \leq \frac{50
      \curvabsbnd \scale^2}{\thickbnd^2} \distA{x}{y}, 
  \end{equation*}
  for all $x,y \in \carrier{\acplx}$.
\end{thm}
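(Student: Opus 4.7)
The plan is to observe that existence of the piecewise flat metric $\gdistA$ is immediate from \Propref{prop:pwflat.metric.exists.corrected}, so $\acplx$ is unambiguously a length space with the asserted metric; it then remains to verify the distortion estimate. Since the corrected hypotheses are strictly stronger than those of the original Theorem~3 of~\cite{dyer2015riemsplx}, the argument of the original proof applies verbatim, and I will sketch its structure here.

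First I would establish the distortion locally inside one simplex $\splxs \in \acplx$. Fix a vertex $p$ of $\splxs$ and compare three objects: the Euclidean ``metric simplex'' $\splxsE$ whose edge lengths are $\ell_{ij}=\distM{v_i}{v_j}$ (this is how $\splxs$ sits in the piecewise flat metric); the Euclidean ``lifted simplex'' $\splxs(p) \subset \tanspace{p}{\man}$ with vertices $\exp_p^{-1}(v_i)$; and the Riemannian simplex $H(\splxsE) \subset \man$. A Rauch-type curvature estimate yields
\begin{equation*}
\abs{\ell_{ij} - \norm{\exp_p^{-1}(v_i)-\exp_p^{-1}(v_j)}} = O(\curvabsbnd \scale^3).
\end{equation*}
Using the operator-norm bound $\onorm{(\transp{P})^{-1}} \leq (\sqrt{n}\thickbnd\scale)^{-1}$ on the relevant vertex matrix (as in the trilateration proof of \Lemref{lem:distort.trilateration}), one concludes that the affine identification $\splxsE \to \splxs(p)$ is a $\xi$-distortion map with $\xi = O(\curvabsbnd\scale^2/\thickbnd^2)$. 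Composing with $\exp_p$, whose distortion on $\ballM{p}{\scale}$ is of the same order $O(\curvabsbnd\scale^2)$ by another standard curvature comparison, and combining via \Lemref{lem:inv.comp.distort}, yields a simplexwise metric distortion bound of the required form.

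Second I would globalise the bound by a path argument. For arbitrary $x,y \in \carrier{\acplx}$, connect them by a $\gdistA$-minimising piecewise geodesic $\gamma$ and bound $\len(H\circ\gamma)$ by applying the simplexwise distortion bound on each piece of $\gamma$; summing yields $\distM{H(x)}{H(y)} \leq (1 + C\curvabsbnd\scale^2/\thickbnd^2)\distA{x}{y}$. The reverse inequality comes from pulling back a minimising geodesic in $\man$ from $H(x)$ to $H(y)$ through $H^{-1}$, which is well-defined by \Thmref{thm:riem.triangulation.corrected}, and applying the simplexwise bound in the other direction.

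The main obstacle is not conceptual but rather bookkeeping: verifying that the specific numerical coefficient $50$ survives the corrected hypotheses. Every curvature-dependent estimate in the original proof of \cite[Theorem~3]{dyer2015riemsplx} is expressed as a multiple of $\curvabsbnd\scale^2/\thickbnd^2$, and the corrected bound~\eqref{eq:new.bnd} on $\scale$ only tightens the original constraint by a factor of $\sqrt{\mu_0}\leq 1$; the remaining corrected hypotheses are also at least as strong. Consequently, each numerical estimate in the original proof applies unchanged, and the constant $50$ is retained.
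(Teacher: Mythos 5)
Your proposal is correct and matches the paper's treatment: the paper likewise offers no new argument for this statement, but simply observes that the corrected hypotheses (which strengthen those of the original Theorem~2 of \cite{dyer2015riemsplx} by the factor $\sqrt{\mu_0}$ in the scale bound, the diameter lower bound, and vertex sanity) ensure that the original proof of \cite[Theorem~3]{dyer2015riemsplx} goes through unchanged, yielding the same constant $50$, with the existence of the piecewise flat metric supplied by the corrected Proposition~26. Your sketch of the underlying simplexwise-plus-path-globalisation argument is consistent with that original proof, so no gap remains.
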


Finally, we remark that \cite[Theorem~3]{boissonnat2017manmesh}
employed \cite[Theorem~3]{dyer2015riemsplx}, but although some of the
discussion leading up to the statement of the result should be
modified to account for the new bound imposed by
\Thmref{thm:metric.distortion.corrected}, the actual result
\cite[Theorem~3]{boissonnat2017manmesh} stands as stated, since the
bound on the sampling density required there is manifestly sufficient
to accommodate \eqref{eq:new.bnd}, and the construction via local
Delaunay triangulations in the coordinate domains automatically
ensures that the vertex sanity criterion is satisfied.

%%% Local Variables:
%%% mode: latex
%%% TeX-master: "homeo"
%%% End:

\endgroup

\bibliographystyle{alpha}
\bibliography{triang}

\end{document}